\newtheorem{thm}{Theorem}
\newtheorem{lemma}[thm]{Lemma}
\newtheorem{prop}[thm]{Proposition}
\newtheorem{cor}[thm]{Corollary}
\newtheorem{defn}[thm]{Definition}
\newtheorem{clm}[thm]{Claim}
\newcommand{\mcal}{\mathcal}
\newcommand{\mb}{\mathbb}
\newcommand{\mbf}{\mathbf}
\newcommand{\gm}{\gamma}
\newcommand{\ep}{\epsilon}
\newcommand{\beq}{\begin{equation}}
\newcommand{\eeq}{\end{equation}}
\newcommand{\beqn}{\begin{eqnarray}}
\newcommand{\eeqn}{\end{eqnarray}}
\newcommand{\benum}{\begin{enumerate}}
\newcommand{\eenum}{\end{enumerate}}
\newcommand{\etmz}{\end{itemize}}
\newcommand{\rar}{\rightarrow}
\newcommand{\bthm}{\begin{thm}}
\newcommand{\ethm}{\end{thm}}
\newcommand{\bdefn}{\begin{defn}}
\newcommand{\edefn}{\end{defn}}
\newcommand{\lt}{\left}
\newcommand{\rt}{\right}
\newcommand{\nnb}{\nonumber}
\newcommand{\lbl}{\label}
\newcommand{\ity}{\infty}
\newcommand{\bydef}{\stackrel{\triangle}{=}}
\newcommand{\ftomb}{}
\newcommand{\diamonds}{}
\newcommand{\buw}{\mathbf{W}^N}
\newcommand{\bua}{\mathbf{A}^N}
\newcommand{\bul}{\mathbf{L}^N}
\newcommand{\buc}{\mathbf{C}^N}
\newcommand{\bus}{\mathbf{S}^N}
\newcommand{\buv}{\mathbf{V}^N}
\newcommand{\bux}{\mathbf{X}^N}
\newcommand{\buf}{\mathbf{F}}
\newcommand{\buh}{\mathbf{H}}
\newcommand{\blw}{\mathbf{w}}
\newcommand{\bla}{\mathbf{a}}
\newcommand{\bll}{\mathbf{l}}
\newcommand{\blc}{\mathbf{c}}
\newcommand{\blm}{\mathbf{m}}
\newcommand{\bls}{\mathbf{s}}
\newcommand{\blv}{\mathbf{v}}
\newcommand{\blx}{\mathbf{x}}
\newcommand{\bly}{\mathbf{y}}
\newcommand{\pb}{\mathbb{P}}
\newcommand{\N}{\mathbb{N}}
\newcommand{\R}{\mathbb{R}}
\newcommand{\zp}{{\mathbb{Z}_+}}
\newcommand{\sps}{\mathcal{S}}
\newcommand{\spv}{\mathcal{V}}
\newcommand{\spq}{\mathcal{Q}}
\newcommand{\spc}{\mathcal{C}}
\newcommand{\vinf}{\overline{\mathcal{V}}^\infty}
\begin{document}

%
%
%
%
%
%
%
%

\title{On the Power of Centralization in Distributed Processing}

\author{Kuang Xu}
       \prevdegrees{B.S. Electrical Engineering (2009) \\\vspace{6pt} University of Illinois at Urbana-Champaign \vspace{10pt} }
\department{Department of Electrical Engineering and Computer Science}

\degree{Master of Science in Electrical Engineering and Computer Science}

\degreemonth{June}
\degreeyear{2011}
\thesisdate{May 2, 2011}


\supervisor{John N. Tsitsiklis}{Clarence J. Lebel Professor of Electrical Engineering}

\chairman{Leslie A. Kolodziejski}{Chairman, Department Committee on Graduate Theses}

\maketitle



\newpage
$\quad$
 \pagestyle{empty}
 \setcounter{savepage}{\thepage}

\newpage

\cleardoublepage
 \pagestyle{empty}
\setcounter{savepage}{\thepage}

\begin{abstractpage}
%
%
%
In this thesis, we propose and analyze a multi-server model that captures a performance trade-off between centralized and distributed processing. In our model, a fraction $p$ of an available resource is deployed in a centralized manner (e.g., to serve a most-loaded station) while the remaining fraction $1-p$ is allocated to local servers that can only serve requests addressed specifically to their respective stations.

Using a fluid model approach, we demonstrate a surprising \emph{phase transition} in the \emph{steady-state delay}, as $p$ changes: in the limit of a large number of stations, and when \emph{any amount} of centralization is available ($p>0$), the average queue length in steady state scales as $\log_{\frac{1}{1-p}}{\frac{1}{1-\lambda}}$ when the traffic intensity $\lambda$ goes to 1. This is \emph{exponentially smaller} than the usual $M/M/1$-queue delay scaling of $\frac{1}{1-\lambda}$, obtained when all resources are fully allocated to local stations ($p=0$). This indicates a strong qualitative impact of even a small degree of centralization.

We prove convergence to a fluid limit, and characterize both the transient and steady-state behavior of the finite system, in the limit as the number of stations $N$ goes to infinity. We show that the sequence of queue-length processes converges to a \emph{unique} fluid trajectory (over any finite time interval, as $N \rightarrow \infty$), and that this fluid trajectory converges to a unique invariant state $\mathbf{v}^I$, for which a simple closed-form expression is obtained. We also show that the steady-state distribution of the $N$-server system concentrates on $\mathbf{v}^I$ as $N$ goes to infinity. 
\end{abstractpage}

\cleardoublepage
%
%



\section*{Acknowledgments}

\begin{doublespace}
$\quad$ \\

I would like to express my deepest gratitude to my thesis supervisor, Professor John N. Tsitsiklis, for his invaluable guidance and support over the last two years. 

\vspace{10pt}

I would like to thank Yuan Zhong (MIT) for his careful reading of an early draft of this thesis.

\vspace{10pt}

I would like to thank Professor Devavrat Shah (MIT) and Professor Julien M. Hendrickx (Catholic University of Louvain) for helpful discussions on related subjects.

\vspace{10pt}

I would like to thank my family for their love and constant support over the years.

\vspace{10pt}

This research was supported in part by an MIT Jacobs Presidential Fellowship, a Xerox-MIT Fellowship, a Siebel Scholarship, and NSF grant CCF-0728554. 

\end{doublespace}

%


\pagestyle{plain}
\tableofcontents
\newpage
\listoffigures
\newpage
\listoftables

\chapter{Introduction}
\lbl{sec:intro}

\section{Distributed versus Centralized Processing}

The tension between \emph{distributed} and \emph{centralized} processing seems to have existed ever since the inception of computer networks. Distributed processing allows for simple implementation and robustness, while a centralized scheme guarantees optimal utilization of computing resources at the cost of implementation complexity and communication overhead. A natural question is how performance varies with the \emph{degree of centralization}. Such an understanding is of great interest in the context of, for example, infrastructure planning (static) or task scheduling (dynamic) in large server farms or cloud clusters, which involve a trade-off between performance (e.g., delay) and cost (e.g., communication infrastructure, energy consumption, etc.). In this thesis, we address this problem by formulating and analyzing a multi-server model with an \emph{adjustable} level of centralization. We begin by describing informally two motivating applications.

\subsection{Primary Motivation: Server Farm with Local and Central Servers}
\lbl{sec:mot1}

Consider a server farm consisting of $N$ stations, depicted in Figure \ref{fig:mot1}. Each station is fed with an independent stream of tasks, arriving at a rate of $\lambda$ tasks per second, with $0< \lambda < 1$.\footnote{Without loss of generality, we normalize so that the largest possible arrival rate is $1$.} Each station is equipped with a \emph{local server} with identical performance; the server is local in the sense that it only serves its own station. All stations are also connected to a single \emph{centralized server} which will serve a station with the longest queue whenever possible.

We consider an $N$-station system. The system designer is granted a total amount $N$ of divisible \emph{computing resources} (e.g., a collection of processors). In a loose sense (to be formally defined in Section \ref{sec:mod}), this means that the system is capable of processing $N$ tasks per second when fully loaded. The system designer is faced with the problem of allocating computing resources to local and central servers. Specifically, for some $p \in (0,1)$, each of the $N$ local servers is able to process tasks at a maximum rate of $1-p$ tasks per second, while the centralized server, equipped with the remaining computing power, is capable of processing tasks at a maximum rate of $pN$ tasks per second. The parameter $p$ captures the amount of centralization in the system. Note that since the total arrival rate is $\lambda N$, with $0<\lambda<1$, the system is underloaded for any value $p \in (0,1)$.

When the arrival processes and task processing times are random, there will be times when some stations are empty while others are loaded. Since a local server cannot help another station process tasks, the total computational resources will be better utilized if a larger fraction is allocated to the central server. However, a greater degree of centralization (corresponding to a larger value of $p$) entails more frequent communications and data transfers between the local stations and the central server, resulting in higher infrastructure and energy costs. 

How should the system designer choose the coefficient $p$? Alternatively, we can ask an even more fundamental question: is there any significant difference between having a small amount of centralization (a small but positive value of $p$), and complete decentralization (no central server and $p=0$)?

\subsection{Secondary Motivation: Partially Centralized Scheduling}
\lbl{sec:mot2} 
Consider the system depicted in Figure \ref{fig:mot2}. The arrival assumptions are the same as in Section~\ref{sec:mot1}. However, there is no local server associated with a station; all stations are served by a single central server. Whenever the central server becomes free, it chooses a task to serve as follows. With probability $p$, it processes a task from a most loaded station, with an arbitrary tie-breaking rule. Otherwise, it processes a task from a station selected uniformly at random; if the randomly chosen station has an empty queue, the current round is in some sense ``wasted'' (to be formalized in Section \ref{sec:mod}).

This second interpretation is intended to model a scenario where resource allocation decisions are made at a centralized location on a \emph{dynamic} basis, but \emph{communications} between the decision maker (central server) and local stations are costly or simply unavailable from time to time. While it is intuitively obvious that longest-queue-first (LQF) scheduling is more desirable, up-to-date system state information (i.e., queue lengths at all stations) may not always be available to the central server. Thus, the central server may be forced to allocate service blindly. In this setting, a system designer is interested in setting the optimal \emph{frequency} ($p$) at which global state information is collected, so as to balance performance and communication costs.

As we will see in the sequel, the system dynamics in the two applications are captured by the \emph{same} mathematical structure under appropriate stochastic assumptions on task arrivals and processing times, and hence will be addressed jointly in this thesis.

\begin{figure}
\centering
\includegraphics[scale=0.85]{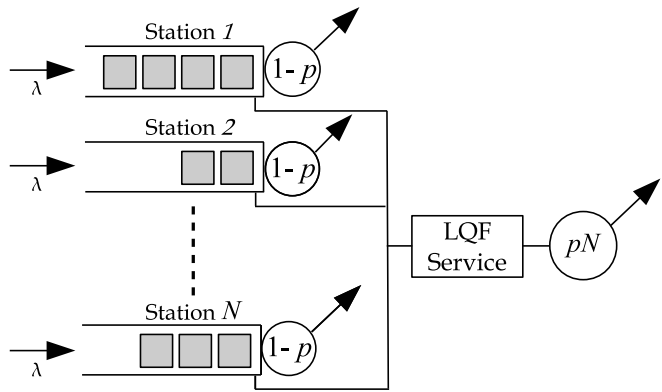}
\caption{Server farm with local and central servers.}
\label{fig:mot1}
\end{figure}

\begin{figure}
\centering
\includegraphics[scale=0.85]{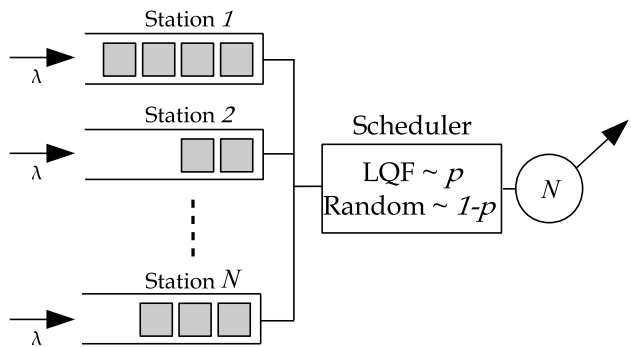}
\caption{Centralized scheduling with communication constraints.}
\label{fig:mot2}
\vspace{-0.3cm}
\end{figure}

\section{Overview of Main Contributions}
\lbl{sec:maincontr}
We provide here an overview of the main contributions. Exact statements of the results will be provided in Chapter \ref{sec:sumres} after the necessary terminology has been introduced.

Our goal is to study the performance implications of varying degrees of centralization, as expressed by the coefficient $p$. To accomplish this, we use a so-called \emph{fluid approximation}, whereby the queue length dynamics at the local stations are approximated, as $N \rar \ity$, by a deterministic \emph{fluid model}, governed by a system of ordinary differential equations (ODEs).

Fluid approximations typically involve results of two flavors: qualitative results derived from the fluid model that give insights into the performance of the original finite stochastic system, and technical convergence results (often mathematically involved) that justify the use of such approximations. We summarize our contributions along these two dimensions:

\benum
\item On the {\bf qualitative end}, we derive an exact expression for the invariant state of the fluid model, for any given traffic intensity $\lambda$ and centralization coefficient $p$, thus characterizing the steady-state distribution of the queue lengths in the system as $N \rar \ity$. This enables a system designer to use any performance metric and analyze its sensitivity with respect to $p$. In particular, we show a surprising \emph{exponential phase transition} in the scaling of average system delay as the load approaches capacity ($\lambda \rar 1$) (Corollary \ref{cor:logsc} in Section \ref{sec:qualresult}): when an \emph{arbitrarily small} amount of centralized computation is applied ($p>0$), the average queue length in the system scales as\footnote{The $\sim$ notation used in this thesis is to be understood as \emph{asymptotic closeness} in the following sense: $\lt[f\lt(x\rt) \sim g\lt(x\rt), \mbox{ as } x \rar 1 \rt] \Leftrightarrow \lim_{x \rar 1} \frac{f\lt(x\rt)}{g\lt(x\rt)} = 1$.}
\beq
\mb{E}(Q) \sim \log_{\frac{1}{1-p}}{\frac{1}{1-\lambda}},
\eeq
as the traffic intensity $\lambda$ approaches $1$. This is \emph{drastically smaller} than the $\frac{1}{1-\lambda}$ scaling obtained if there is no centralization ($p=0$).\footnote{When $p=0$, the system degenerates into $N$ independent queues. The $\frac{1}{1-\lambda}$ scaling comes from the mean queue length expression for $M/M/1$ queues.} This suggests that for large systems, even a small degree of centralization provides significant improvements in the system's delay performance, in the heavy traffic regime. 

\item On the {\bf technical end}, we show that:
\benum
\item Given any finite initial queue sizes, and with high probability, the evolution of the queue length process can be approximated by the unique solution to a fluid model, over any finite time interval, as $N\rar\ity$.
\item All solutions to the fluid model converge to a unique invariant state, as $t \rar \ity$, for any finite initial condition (global stability).
\item The steady-state distribution of the finite system converges to the invariant state of the fluid model as $N\rar\ity$.
\eenum
The most notable technical challenge comes from the fact that the longest-queue-first policy used by the centralized server causes discontinuities in the drift in the fluid model (see Section \ref{sec:fmodel} for details). In particular, the classical approximation results for Markov processes (see, e.g., \cite{KZ05}), which rely on a Lipschitz-continuous drift in the fluid model, are hard to apply. Thus, in order to establish the finite-horizon approximation result ($a$), we employ a sample-path based approach: we prove tightness of sample paths of the queue length process and characterize their limit points. Establishing the convergence of steady-state distributions in ($c$) also becomes non-trivial due to the presence of discontinuous drifts. To derive this result, we will first establish the uniqueness of solutions to the fluid model and a uniform speed of convergence of stochastic sample paths to the solution of the fluid model over a compact set of initial conditions.

\eenum

\section{Related Work} 

To the best of our knowledge, the proposed model for the splitting of processing resources between distributed and central servers has not been studied before. However, the fluid model approach used in this thesis is closely related to, and partially motivated by, the so-called supermarket model of randomized load-balancing. In that literature, it is shown that by routing tasks to the shorter queue among a small number ($d\geq2$) of randomly chosen queues, the probability that a typical queue has at least $i$ tasks (denoted by $\bls_i$) decays as $\lambda^{\frac{d^i-1}{d-1}}$ (super-geometrically) as $i \rar \ity$ (\cite{DOB96},\cite{MM96}); see also the survey paper \cite{MM01} and references therein. However, the approach used in load-balancing seems to offer little improvement when adapted to scheduling. In \cite{AD08}, a variant of the randomized load-balancing policy was applied to a scheduling setting with channel uncertainties, where the server always schedules a task from a longest queue among a finite number of randomly selected queues. It was observed that $\bls_i$ no longer exhibits super-geometric decay and only moderate performance gain can be harnessed from sampling more than one queue. 

In our setting, the system dynamics causing the exponential phase transition in the average queue length scaling are significantly different from those for the randomized load-balancing scenario. In particular, for any $p>0$, the tail probabilities $\bls_i$ become zero for sufficiently large finite $i$, which is significantly faster than the super-geometric decay in the supermarket model. 


On the technical side, arrivals and processing times used in supermarket models are often memoryless (Poisson or Bernoulli) and the drifts in the fluid model are typically continuous with respect to the underlying system state. Hence convergence results can be established by invoking classical approximation results, based on the convergence of the generators of the associated Markov processes. An exception is \cite{BLB10}, where the authors generalized the supermarket model to arrival and processing times with general distributions. Since the queue length process is no longer Markov, the authors rely on an asymptotic independence property of the limiting system and use tools from statistical physics to establish convergence. 

Our system remains Markov with respect to the queue lengths, but a significant technical difference from the supermarket model lies in the fact that the longest-queue-first service policy introduces \emph{discontinuities} in the drifts. For this reason, we need to use a more elaborate set of techniques to establish the connection between stochastic sample paths and the fluid model. Moreover, the presence of discontinuities in the drifts creates challenges even for proving the uniqueness of solutions for the deterministic fluid model. (Such uniqueness is needed to establish convergence of steady-state distributions.) Our approach is based on a state representation that is different from the one used in the popular supermarket models, which turns out to be surprisingly more convenient to work with for establishing the uniqueness of solutions to the fluid model.

Besides the queueing-theoretic literature, similar fluid model approaches have been used in many other contexts to study systems with large populations. Recent results in \cite{NG10} establish convergence for finite-dimensional symmetric dynamical systems with drift discontinuities, using a more probabilistic (as opposed to sample path) analysis, carried out in terms of certain conditional expectations. We believe that it is possible to prove our results using the methods in \cite{NG10}, with additional work. However, the coupling approach used in this thesis provides strong physical intuition on the system dynamics, and avoids the need for additional technicalities from the theory of multi-valued differential inclusions.

Finally, there has been some work on the impact of service flexibilities in routing problems, motivated by applications such as multilingual call centers. These date back to the seminal work of \cite{FS78}, with a more recent numerical study in \cite{HE09}. These results show that the ability to route a portion of customers to a least-loaded station can lead to a constant-factor improvement in average delay under diffusion scaling. This line of work is very different from ours, but in a broader sense, both are trying to capture the notion that system performance in a random environment can benefit significantly from even a small amount of centralized coordination. 

\section{Organization of the Thesis}
Chapter \ref{sec:setup} introduces the precise model to be studied, our assumptions, and the notation to be used throughout. The main results are summarized in Chapter \ref{sec:sumres}, where we also discuss their implications along with some numerical results. The remainder of the thesis is devoted to establishing the technical results, and the reader is referred to Section \ref{sec:proofstrat} for an overview of the proofs. The steps of two of the more technical proofs are outlined in the main text, while the complete proofs are relegated to Appendix \ref{app:techproof}. The procedure and parameters used for numerical simulations are described in Appendix \ref{app:sim}.

\chapter{Model and Notation}
\lbl{sec:setup}

This chapter covers the modeling assumptions, system state representations, and mathematical notation, which will be used throughout the thesis. We will try to provide the intuition behind our modeling choices and assumptions if possible. In some cases, we will point the reader to explanations that will appear later in the thesis, if the ideas involved are not immediately obvious at this stage.

\section{Model}
\lbl{sec:mod}
We present our model using terminology that corresponds to the server farm application in Section \ref{sec:mot1}. Time is assumed to be continuous.

\benum
\item {\bf System.} The system consists of $N$ parallel stations. Each station is associated with a queue which stores the tasks to be processed. The queue length (i.e., number of tasks) at station $n$ at time $t$ is denoted by $Q_n(t)$,  $n \in \{1,2,\ldots,N\}, t \geq 0$. For now, we do not make any assumptions on the queue lengths at time $t=0$, other than that they are finite.

\item {\bf Arrivals.} Stations receive streams of incoming tasks according to independent Poisson processes with a common rate $\lambda \in [0,1)$.

\item {\bf Task Processing.} We fix a centralization coefficient $p \in [0,1]$.
\benum
\item {\bf Local Servers}. The local server at station $n$ is modeled by an independent Poisson clock with rate $1-p$ (i.e., the times between two clock ticks are independent and exponentially distributed with mean $\frac{1}{1-p}$). If the clock at station $n$ ticks at time $t$, we say that a {\bf local service token} is generated at station $n$. If $Q_n(t) \neq 0$, exactly one task from station $n$ ``consumes'' the service token and leaves the system immediately. Otherwise, the local service token is ``wasted'' and has no impact on the future evolution of the system.


\item {\bf Central Server}. The central server is modeled by an independent Poisson clock with rate $Np$. If the clock ticks at time $t$ at the central server, we say that a {\bf central service token} is generated. If the system is non-empty at $t$ (i.e., $\sum_{n=1}^NQ_n(t)>0$), exactly one task from some station $n$, chosen uniformly at random out of the stations with a \emph{longest queue} at time $t$, consumes the service token and leaves the system immediately. If the whole system is empty, the central service token is wasted. 
\eenum
\eenum

\vspace{4pt}

{\bf Physical interpretation of service tokens}. We interpret $Q_n(t)$ as the number of tasks whose service has not yet started. For example, if there are four tasks at station $n$, one being served and three that are waiting, then $Q_n(t)=3$. The use of \emph{local service tokens} can be thought of as an approximation to a \emph{work-conserving}\footnote{A server is work-conserving if it is never idle when the queue is non-empty.} server with exponential service time distribution in the following sense. Let $t_k$ be the $k$th tick of the Poisson clock at the server associated with station $n$. If $Q_n(t_k-)>0$,\footnote{ Throughout the thesis, we use the short-hand notation $f(t-)$ to denote the left limit $\lim_{s \uparrow t} f(s)$.} the ticking of the clock can be thought of as the completion of a previous task, so that the server ``fetches'' a new task from the queue to process, hence decreasing the queue length by $1$. Therefore, as long as the queue remains non-empty, the time between two adjacent clock ticks can be interpreted as the service time for a task. However, if the local queue is currently empty, i.e., $Q_n(t_k-)=0$, the our modeling assumption implies that the local server does nothing until the next clock tick at $t_{k+1}$, even if some task arrives during the period $(t_{k},t_{k+1})$. Alternatively, this can be thought of as the server creating a ``virtual task'' whenever it sees an empty queue, and pretending to be serving the virtual task until the next clock tick. In contrast, a work-conserving server would start serving the next task immediately upon its arrival. We have chosen to use the service token setup, mainly because it simplifies analysis, and it can also be justified in the following ways.
\benum
\item  Because of the use of virtual tasks, one would expect the resulting queue length process under our setup to provide an \emph{upper bound} on queue length process in the case of a work-conserving server. We do not formally prove such a dominance relation in this thesis, but note that a similar dominance result in $GI/GI/n$ queues was proved recently (Proposition 1 of \cite{GG11}). 

\item Since the discrepancy between the two setups only occurs when the server sees an empty queue, one would also expect that the queue length processes under the two cases become close as traffic intensity $\lambda \rar 1$, in which case the queue will be non-empty for most of the time. 
\eenum
The same physical interpretation applies to the central service tokens.

\vspace{18pt}

{\bf Mathematical equivalence between the two motivating applications}. We note here that the scheduling application in Section \ref{sec:mot2} corresponds to the same mathematical model. The arrival statistics to the stations are obviously identical in both models. For task processing, note that we can equally imagine all service tokens as being generated from a single Poisson clock with rate $N$. Upon the generation of a service token, a coin is flipped to decide whether the token will be directed to process a task at a random station (corresponding to a \emph{local service token}), or a station with a longest queue (corresponding to a \emph{central service token}). Due to the Poisson splitting property, this produces identical statistics for the generation of local and central service tokens as described above. 

\section {System State}
\lbl{sec:sysstate}
Let us fix $N$. Since all events (arrivals of tasks and service tokens) are generated according to independent Poisson processes, the queue length vector at time $t$, $\lt(Q_1(t),Q_2(t),\rt.$ $\lt.\ldots,Q_N(t)\rt)$, is Markov. Moreover, the system is fully symmetric, in the sense that all queues have identical and independent statistics for the arrivals and local service tokens, and the assignment of central service tokens does not depend on the specific identity of stations besides their queue lengths. Hence we can use a Markov process $\lt\{\bus_i(t)\rt\}_{i=0}^\ity$ to describe the evolution of a system with $N$ stations, where
\beq
\lbl{eq:normq}
\bus_i(t) \bydef \frac{1}{N}\sum_{n=1}^N \mb{I}_{\lt[i, \ity \rt)}\lt(Q_n(t)\rt), \quad i \geq 0.
\eeq
Each coordinate $\bus_i\lt(t\rt)$ represents the fraction of queues with at least $i$ tasks. Note that $\bus_0(t)=1$ for all $t$ and $N$ according to this definition. We call $\bus\lt(t\rt)$ the {\bf normalized queue length process}. We also define the {\bf aggregate queue length process} as
\beq
\lbl{eq:aggq}
\buv_i\lt(t\rt) \bydef \sum_{j=i}^\infty \bus_j\lt(t\rt), \quad i \geq 0.
\eeq
Note that 
\beq
\bus_i(t)=\buv_i(t)-\buv_{i+1}(t).
\eeq
In particular, this means that $\buv_0(t)-\buv_{1}(t)=\bus_0(t)=1$. Note also that 
\beq
\buv_1\lt(t\rt) = \sum_{j=1}^\infty \bus_j\lt(t\rt)
\eeq
is equal to the \emph{average queue length} in the system at time $t$. When the total number of tasks in the system is finite (hence all coordinates of $\buv$ are finite), there is a straightforward bijection between $\bus$ and $\buv$. Hence $\buv(t)$ is Markov and also serves as a valid representation of the system state. While the $\bus$ representation admits a more intuitive interpretation as the ``tail'' probability of a typical station having at least $i$ tasks, it turns out the $\buv$ representation is significantly more convenient to work with, especially in proving uniqueness of solutions to the associated fluid model, and the detailed reasons will become clear in the sequel (see Section \ref{subsec:vvss} for an extensive discussion on this topic). For this reason, we will be working mostly with the $\buv$ representation, but will in some places state results in terms of $\bus$, if doing so provides a better physical intuition.

\section{Notation}

Let $\zp$ be the set of non-negative integers. The following sets will be used throughout the thesis (where $M$ is a positive integer): 
\beq
\lbl{eq:sps}
\sps \bydef \lt\{\bls \in [0,1]^\zp : 1 = \bls_0 \geq \bls_1 \geq \cdots \geq 0 \rt\}, 
\eeq
\beq
\lbl{eq:spsity}
\overline{\sps}^M \bydef \lt\{\bls \in \sps: \sum_{i=1}^\infty \bls_i \leq M \rt\}, \,\,\,\, \overline{\sps}^\infty \bydef \lt\{\bls \in \sps: \sum_{i=1}^\infty \bls_i < \infty \rt\}, 
\eeq
\beq
\overline{\spv}^M \bydef \lt\{\blv: \blv_i = \sum_{j=i}^\infty \bls_j, \mbox{ for some } \bls \in \overline{\sps}^M\rt\},
\eeq
\beq
\overline{\spv}^\infty \bydef \lt\{\blv:\blv_i = \sum_{j=i}^\infty \bls_j, \mbox{ for some } \bls \in \overline{\sps}^\infty\rt\},
\eeq
\beq
\spq^N \bydef \lt\{\mathbf{x}\in \R^\zp: \mathbf{x}_i = \frac{K}{N},\mbox{ for some } K \in \zp, \forall i \rt\}.
\eeq \normalsize
We define the weighted $L_2$ norm $\lt\| \, \cdot \, \rt\|_w$ on $\R^\zp$ as
\beq
\lbl{eq:infnorm}
\lt\| \mathbf{x}-\mathbf{y} \rt\|_w^2 = \sum_{i=0}^\ity \frac{\lt|\mathbf{x}_i-\mathbf{y}_i\rt|^2}{2^i}, \quad \mathbf{x}, \mathbf{y} \in \R^\zp.
\eeq
%

In general, we will be using bold letters to denote vectors and ordinary letters for scalars, with the exception that a bold letter with a subscript (e.g., $\blv_i$) is understood as a (scalar-valued) component of a vector. Upper-case letters are generally reserved for random variables (e.g., $\mbf{V}^{(0,N)}$) or scholastic processes (e.g., $\buv(t)$), and lower-case letters are used for constants (e.g., $\blv^0$) and deterministic functions (e.g., $\blv(t)$). Finally, a function is in general denoted by $x(\cdot)$, but is sometimes written as $x(t)$ to emphasize the type of its argument. 
\chapter{Summary of Main Results}
\lbl{sec:sumres}

In this chapter, we provide the exact statements of our main results. The main approach of our work is to first derive key performance guarantees using a simpler fluid model, and then apply probabilistic arguments (e.g., Functional Laws of Large Numbers) to formally justify that such guarantees also carry over to sufficiently large finite stochastic systems. Section \ref{sec:fmodel} gives a formal definition of the core fluid model used in this thesis, along with its physical interpretations. Section \ref{sec:qualresult} contains results that are derived by analyzing the dynamics of the fluid model, and Section \ref{sec:summaryconvg} contains the more technical convergence theorems which justify the accuracy of approximating a finite system using the fluid model approach. The proofs for the theorems stated here will be developed in later chapters.

\section{Definition of Fluid Model}
\lbl{sec:fmodel}

\begin{defn}
\label{def:fl} {\bf (Fluid Model)}
Given an initial condition $\blv^0 \in \overline{\spv}^\infty$, a function $\blv(t): [0,\infty) \rar \overline{\spv}^\infty$ is said to be a {\bf solution to the fluid model} (or {\bf fluid solution} for short) if:
\benum [(1)]
\item $\blv(0) = \blv^0$;
\item for all $t\geq 0$, 
\beqn
& \blv_0(t)-\blv_1(t)= 1, \\
& \mbox{and } 1\geq\blv_i(t)-\blv_{i+1}(t) \geq \blv_{i+1}(t)-\blv_{i+2}(t) \geq 0,  \quad \forall i \geq 0;
\eeqn
\item for almost all $t \in [0,\ity)$, and for every $i\geq1$, $\blv_i(t)$ is differentiable and satisfies
\beq
\lbl{eq:dft}
\dot{\blv}_i\lt(t\rt)= \lambda \lt(\blv_{i-1}-\blv_{i}\rt) - \lt(1-p\rt)\lt(\blv_{i}-\blv_{i+1}\rt) - g_i\lt(\blv\rt),
\eeq
where 
\begin{equation}
\lbl{eq:gdef0}
g_i\lt(\blv\rt) = \lt\{ \begin{array}{ll}
p, & \blv_i>0, \\
\min\lt\{\lambda\blv_{i-1}, p\rt\}, & \blv_i=0, \blv_{i-1}>0, \\
0, & \blv_i=0, \blv_{i-1}=0.
\end{array} \right.
\end{equation}
\eenum
\end{defn}

We can write Eq.\ \eqref{eq:dft} more compactly as
\beq
\dot{\blv}\lt(t\rt) = \buf\lt(\blv\rt),
\eeq
where 
\beq
\lbl{eq:driftF}
\buf_i\lt(\blv\rt)\bydef\lambda \lt(\blv_{i-1}-\blv_{i}\rt) - \lt(1-p\rt)\lt(\blv_{i}-\blv_{i+1}\rt) - g_i\lt(\blv\rt).
\eeq
We call $\buf\lt(\blv\rt)$ the {\bf drift at point $\blv$}. 

\vspace{15pt}

{\bf Interpretation of the fluid model}. The solution to the fluid model, $\blv(t)$, can be thought of as a deterministic approximation to the sample paths of $\buv(t)$ for large values of $N$. Conditions (1) and (2) correspond to initial and boundary conditions, respectively. The boundary conditions reflect the \emph{physical constraints} of the finite system. In particular, the condition that $\blv_0(t)-\blv_1(t)= 1$ corresponds to the fact that 
\beq
\buv_0(t)-\buv_1(t)\bydef \bus_0(t)=1,
\eeq
where $\bus_0(t)$ is the fraction of queues with a non-negative queue length, which is by definition $1$. Similarly, the condition that 
\beq
1\geq\blv_i(t)-\blv_{i+1}(t) \geq \blv_{i+1}(t)-\blv_{i+2}(t) \geq 0,  \quad \forall i \geq 0,
\eeq
is a consequence of 
\beq
\lt(\buv_i(t)-\buv_{i+1}(t)\rt)-\lt(\buv_{i+1}(t)-\buv_{i+2}(t)\rt)\bydef \bus_i(t) - \bus_{i+1}(t) \in [0,1],
\eeq
where $\bus_i(t) - \bus_{i+1}(t)$ is the faction of queues at time $t$ with exactly $i$ tasks, which is by definition between $0$ and $1$. 

\vspace{5pt}
We now provide some intuition for each of the drift terms in Eq.\ \eqref{eq:dft}:

{\bf I.} $\lambda \lt(\blv_{i-1}-\blv_{i}\rt)$: This term corresponds to arrivals. When a task arrives at a station with $i-1$ tasks, the system has one more queue with $i$ tasks, and $\bus_i$ increases by $\frac{1}{N}$. However, the number of queues with at least $j$ tasks, for $j\neq i$, does not change. Thus, $\bus_{i}$ is the only one that is incremented. Since $\buv_i\bydef \sum_{k=i}^\ity \bus_k$, this implies that $\buv_i$ is increased by $\frac{1}{N}$ \emph{if and only if} a task arrives at a queue with at least $i-1$ tasks. Since all stations have an identical arrival rate $\lambda$, the probability of $\buv_i$ being incremented upon an arrival to the system is equal to the fraction of queues with at least $i-1$ tasks, which is $\buv_{i-1}(t)-\buv_i(t)$. We take the limit as $N\rar\ity$, multiply by the total arrival rate, $N\lambda$, and then multiply by the increment due to each arrival, $\frac{1}{N}$, to obtain the term $\lambda \lt(\blv_{i-1}-\blv_{i}\rt)$.

{\bf II.} $\lt(1-p\rt)\lt(\blv_{i}-\blv_{i+1}\rt)$: This term corresponds to the completion of tasks due to \emph{local} service tokens. The argument is similar to that for the first term.

{\bf III.} $g_i\lt(\blv\rt)$: This term corresponds to the completion of tasks due to \emph{central} service tokens.
	\benum
	\item $g_i\lt(\blv\rt)=p, \mbox{ if } \blv_i>0$. If $i>0$ and $\blv_i>0$, then there is a positive fraction of queues with at least $i$ tasks. Hence the central server is working at full capacity, and the rate of decrease in $\blv_i$ due to central service tokens is equal to the (normalized) maximum rate of the central server, namely $p$.
	\item $g_i\lt(\blv\rt)=\min\lt\{\lambda\blv_{i-1}, p\rt\}, \mbox{ if } \blv_i=0, \blv_{i-1}>0$. This case is more subtle. Note that since $\blv_i=0$, the term $\lambda\blv_{i-1}$ is equal to $\lambda(\blv_{i-1}-\blv_i)$, which is the rate at which $\blv_i$ increases due to arrivals. Here the central server serves queues with at least $i$ tasks whenever such queues arise, to keep $\blv_i$ at zero. Thus, the total rate of central service tokens dedicated to $\blv_i$ matches exactly the rate of increase of $\blv_i$ due to arrivals.\footnote{\small Technically, the minimization involving $p$ is not necessary: if $\lambda\blv_{i-1}(t)>p$, then $\blv_i(t)$ cannot stay at zero and will immediately increase after $t$. We keep the minimization just to emphasize that the maximum rate of increase in $\blv_i$ due to central service tokens cannot exceed the central service capacity $p$.}
\item $g_i\lt(\blv\rt)=0, \mbox{ if } \blv_i=\blv_{i-1}=0$. Here, both $\blv_i$ and $\blv_{i-1}$ are zero and there are no queues with $i-1$ or more tasks. Hence there is no positive rate of increase in $\blv_i$ due to arrivals. Accordingly, the rate at which central service tokens are used to serve stations with at least $i$ tasks is zero.
\eenum
Note that, as mentioned in the introduction, the discontinuities in the fluid model come from the term $g(\blv)$, which reflects the presence of a central server. 
%
%
%

\section {Analysis of the Fluid Model and Exponential Phase Transition}
\lbl{sec:qualresult}
The following theorem characterizes the invariant state for the fluid model. It will be used to demonstrate an \emph{exponential improvement} in the rate of growth of the average queue length as $\lambda \rar 1$ (Corollary \ref{cor:logsc}).

\begin{thm}
\lbl{thm:ssprop}
The drift $\buf(\cdot)$ in the fluid model admits a unique invariant state $\blv^I$ (i.e., $\buf(\blv^I)=0$). Letting $\bls^I_i \bydef \blv^I_i-\blv^I_{i+1}$ for all $i \geq 0$, the exact expression for the invariant state as follows:
\benum [(1)]
\item If $p=0$, then $\bls_i^I=\lambda^i, \, \forall i \geq 1$.
\item If $p \geq \lambda$, then $\bls_i^I=0, \, \forall i \geq 1$.
\item If $0<p < \lambda$, and $\lambda = 1-p$, then\footnote{\small Here $\lfloor x \rfloor$ is defined as the largest integer that is less than or equal to $x$.}
\beq
\bls^I_i = \lt\{ \begin{array}{ll}
1-\lt(\frac{p}{1-p}\rt)i, & 1 \leq i \leq \tilde{i}^*\lt(p,\lambda\rt), \\
0, & i > \tilde{i}^*\lt(p,\lambda\rt),
\end{array} \right. \nnb
\eeq
where $\tilde{i}^*\lt(p,\lambda\rt) \bydef \lt\lfloor \frac{1-p}{p} \rt\rfloor$.
\item If $0<p < \lambda$, and $\lambda \neq 1-p$, then
\beq
\bls^I_i = \lt\{ \begin{array}{ll}
\frac{1-\lambda}{1-\lt(p+\lambda\rt)}\lt(\frac{\lambda}{1-p}\rt)^i - \frac{p}{1-\lt(p+\lambda\rt)}, & 1 \leq i \leq i^*\lt(p,\lambda\rt), \\
0, & i > i^*\lt(p,\lambda\rt),
\end{array} \right. \nnb
\eeq
where
\beq
\lbl{eq:sinv23}
i^*\lt(p,\lambda\rt) \bydef \lt\lfloor \log_{\frac{\lambda}{1-p}}{\frac{p}{1-\lambda}}\rt\rfloor,
\eeq
\eenum
\end{thm}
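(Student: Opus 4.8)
The plan is to find the invariant state by solving the fixed-point equation $\buf(\blv^I) = 0$ directly, exploiting the structure of the drift. First I would observe that at an invariant state, writing $\bls_i = \bls^I_i = \blv^I_i - \blv^I_{i+1}$, the equation $\buf_i(\blv^I) = 0$ becomes, for coordinates with $\blv^I_i > 0$,
\beq
\lambda \bls_{i-1} - (1-p)\bls_i - p = 0,
\eeq
a linear first-order recursion in $\bls_i$ (using $g_i = p$ on the positive part). This recursion, together with the boundary condition $\bls_0 = 1$ (from $\blv_0 - \blv_1 = 1$), pins down $\bls_i$ on the ``support'' region. I would then solve this recursion explicitly: the homogeneous solution is $\left(\frac{\lambda}{1-p}\right)^i$ and a particular (constant) solution is $-\frac{p}{1-(p+\lambda)}$ when $\lambda + p \neq 1$, giving the formula in case (4); when $\lambda + p = 1$ the particular solution degenerates and one gets the linear-in-$i$ profile $1 - \frac{p}{1-p}i$ of case (3), matched again by the $\bls_0 = 1$ boundary condition.

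Next I would determine where the support ends, i.e., the cutoff index $i^*$. The key point is that $\bls_i$ must be nonnegative and nonincreasing (from condition (2) of the fluid model, in the form $1 \geq \bls_i \geq \bls_{i+1} \geq 0$), and the solution of the recursion above eventually goes negative (when $\frac{\lambda}{1-p} < 1$, i.e.\ $p < 1-\lambda$, which holds in cases (3)–(4) since $p < \lambda$ forces... — here I'd check the sub-cases carefully). So the invariant profile must switch to $\blv^I_i = 0$ at the first index where staying on the recursion would violate nonnegativity; solving $\bls_{i} \geq 0 > \bls_{i+1}$ for the case-(4) formula gives exactly $i^* = \lfloor \log_{\frac{\lambda}{1-p}} \frac{p}{1-\lambda} \rfloor$, and similarly $\tilde i^* = \lfloor \frac{1-p}{p}\rfloor$ in case (3). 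I then need to verify consistency at the boundary: for $i$ just above $i^*$ we have $\blv^I_i = \blv^I_{i-1} = 0$ or $\blv^I_i = 0 < \blv^I_{i-1}$, so $g_i$ takes its second or third branch, and I must check that $\buf_i(\blv^I) = 0$ still holds there — concretely that $\lambda \bls^I_{i^*} = g_{i^*+1}(\blv^I) = \min\{\lambda \bls^I_{i^*}, p\}$, which requires $\lambda \bls^I_{i^*} \leq p$; this is where the exact value of the cutoff matters and must be checked against the formula. Cases (1) ($p = 0$, the recursion becomes $\bls_i = \lambda \bls_{i-1}$, geometric, never hitting zero) and (2) ($p \geq \lambda$, where already $\bls_1 = 0$ is forced since the arrival rate into level $1$ cannot be sustained) are easy special cases handled separately.

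Finally, uniqueness: I would argue that any invariant state must satisfy the recursion on its support with the given boundary value $\bls_0 = 1$, and must be a nonincreasing nonnegative sequence summing to something finite (element of $\overline{\spv}^\infty$); since the recursion solution is forced and the cutoff is forced by the nonnegativity/monotonicity constraints, there is no freedom. The main obstacle I anticipate is the boundary analysis at the cutoff index — verifying that the piecewise definition (recursion on the support, zero beyond) genuinely yields $\buf(\blv^I) = 0$ at \emph{every} coordinate, including the delicate coordinate $i^*+1$ where the $\min\{\lambda\blv_{i-1}, p\}$ branch of $g_i$ is active, and confirming that the floor functions in the definitions of $i^*$ and $\tilde i^*$ are exactly what make this matching work (neither off by one). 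A secondary subtlety is the degenerate case $\lambda + p = 1$, where one must take the appropriate limit of the case-(4) formula (an $L'H\hat{o}pital$-type computation) and check it reproduces case (3), and ensuring the edge behavior when $\frac{1-p}{p}$ or $\log_{\frac{\lambda}{1-p}}\frac{p}{1-\lambda}$ is exactly an integer is consistent.
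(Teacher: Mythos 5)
Your proposal follows essentially the same route as the paper: reduce $\buf(\blv^I)=0$ on the support to the linear recursion $\lambda\bls_{i-1}-(1-p)\bls_i-p=0$ with $\bls_0=1$, solve it (the paper states the solution; you additionally spell out the homogeneous/particular decomposition), locate the cutoff $i^*$ at the first index where the recursion would go negative, and then argue that the tail must vanish. The paper's own proof is terser --- it merely asserts "we can then use the same argument as in case (2)" for indices beyond $i^*$ --- whereas you correctly identify the one genuinely delicate step: verifying $\buf_{i^*+1}(\blv^I)=0$ requires $g_{i^*+1}(\blv^I)=\min\{\lambda\bls^I_{i^*},p\}=\lambda\bls^I_{i^*}$, i.e.\ $\lambda\bls^I_{i^*}\le p$, which follows precisely because the recursion value at $i^*+1$, namely $(\lambda\bls^I_{i^*}-p)/(1-p)$, is $\le 0$ by definition of the floor. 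Making this explicit is a genuine improvement over the paper's presentation.

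One parenthetical in your sketch is wrong as stated, though you flag the need to check it: you write that the recursion eventually goes negative "when $\frac{\lambda}{1-p}<1$, i.e.\ $p<1-\lambda$, which holds in cases (3)--(4) since $p<\lambda$ forces..." --- but $p<\lambda$ does \emph{not} force $p<1-\lambda$ (take $\lambda=0.9$, $p=0.5$), and in case (4) both sub-cases $\lambda\gtrless 1-p$ occur. The conclusion survives: when $\lambda>1-p$ the coefficient $\frac{1-\lambda}{1-(p+\lambda)}$ of the growing exponential $(\lambda/(1-p))^i$ is negative, so $\bls_i\to-\infty$; when $\lambda<1-p$ the exponential decays to the negative constant $-\frac{p}{1-(p+\lambda)}<0$. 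So the sequence crosses zero in both sub-cases, just for different reasons, and the single formula $i^*=\lfloor\log_{\lambda/(1-p)}\frac{p}{1-\lambda}\rfloor$ captures the crossing point in both (noting that the base and the argument of the logarithm are simultaneously $>1$ or simultaneously $<1$, so the logarithm is positive either way). If you fill in that case split, the argument is complete and matches the paper's.
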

\begin{proof} The proof consists of simple algebra to compute the solution to $\mbf{F}(\blv^I)=0$. The proof is given in Section \ref{sec:pfinvstate}.\end{proof}

Case ($4$) in the above theorem is particularly interesting, as it reflects the system's performance under heavy load ($\lambda$ close to $1$). Note that since $\bls^I_1$ represents the probability of a typical queue having at least $i$ tasks, the quantity 
\beq
\blv^I_1\bydef\sum_{i=1}^\infty \bls^I_i
\eeq
represents the \emph{average queue length}. The following corollary, which characterizes the average queue length in the invariant state for the fluid model, follows from Case ($4$) in Theorem \ref{thm:ssprop} by some straightforward algebra. 

\begin{cor} {\bf (Phase Transition in Average Queue Length Scaling)}
\lbl{cor:logsc}
If $0 < p < \lambda$ and $\lambda\neq 1-p$, then
\beqn
\blv_1^I \bydef \sum_{i=1}^\infty \bls^I_i &=& \frac{\lt(1-p\rt)\lt(1-\lambda\rt)}{\lt(1-p-\lambda\rt)^2}\lt[1-\lt(\frac{\lambda}{1-p}\rt)^{i^*\lt(p,\lambda\rt)}\rt] \nnb \\
&&-\frac{p}{1-p-\lambda}i^*\lt(p,\lambda\rt),
\eeqn
with $i^*\lt(p,\lambda\rt)=\lt\lfloor \log_{\frac{\lambda}{1-p}}{\frac{p}{1-\lambda}}\rt\rfloor$. In particular, this implies that for any fixed $p>0$, $\blv_1^I$ scales as
\beq
\lbl{eq:expoim}
\blv_1^I \sim i^*\lt(p,\lambda\rt) \sim \log_{\frac{1}{1-p}}\frac{1}{1-\lambda},  \quad \mbox{as } \lambda \rar 1.
\eeq
\end{cor}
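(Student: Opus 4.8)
The plan is to treat this as a direct two-step computation: first evaluate the series $\sum_{i=1}^{\ity}\bls^I_i$ in closed form using Case~(4) of Theorem~\ref{thm:ssprop}, and then read off the $\lambda\rar1$ asymptotics from the resulting expression.

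For the first step, note that by Theorem~\ref{thm:ssprop}(4) we have $\bls^I_i=0$ for $i>i^*(p,\lambda)$, so $\sum_{i=1}^{\ity}\bls^I_i=\sum_{i=1}^{i^*(p,\lambda)}\bls^I_i$ is a finite sum. I would split each term $\bls^I_i$ into its geometric part $\frac{1-\lambda}{1-(p+\lambda)}\lt(\frac{\lambda}{1-p}\rt)^i$ and its constant part $-\frac{p}{1-(p+\lambda)}$. The constant part contributes $-\frac{p}{1-p-\lambda}\,i^*(p,\lambda)$. For the geometric part I would sum the finite geometric series $\sum_{i=1}^{n}r^i$ with $r=\frac{\lambda}{1-p}$ and simplify the denominator using the identity $1-\frac{\lambda}{1-p}=\frac{1-p-\lambda}{1-p}$; after collecting the prefactors this produces the displayed closed form for $\blv_1^I$ in terms of $i^*(p,\lambda)$ and $\lt(\frac{\lambda}{1-p}\rt)^{i^*(p,\lambda)}$. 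This step is entirely algebraic, and one should just be careful to keep $1-p-\lambda$ (which is negative when $p+\lambda>1$) intact rather than dividing by it prematurely.

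For the second step I would isolate two facts. First, for $\lambda$ sufficiently close to $1$ (with $p>0$ fixed) we have $0<p<\lambda$, $\lambda\neq1-p$, and $\frac{\lambda}{1-p}>1$, so $i^*(p,\lambda)=\lt\lfloor\log_{\lambda/(1-p)}\frac{p}{1-\lambda}\rt\rfloor$ is well-defined and $\rar\ity$ as $\lambda\rar1$; writing $\log_{\lambda/(1-p)}\frac{p}{1-\lambda}=\frac{\ln p+\ln\frac{1}{1-\lambda}}{\ln\frac{1}{1-p}+\ln\frac{1-p}{\lambda}}$ and dividing by $\log_{\frac{1}{1-p}}\frac{1}{1-\lambda}=\frac{\ln\frac{1}{1-\lambda}}{\ln\frac{1}{1-p}}$, the ratio tends to $1$ because $\ln\frac{1}{1-\lambda}\rar\ity$ while $\ln p$, $\ln\frac{1-p}{\lambda}$, and the quantity lost to the floor all stay bounded; hence $i^*(p,\lambda)\sim\log_{\frac{1}{1-p}}\frac{1}{1-\lambda}$. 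Second, by the definition of the floor, $\lt(\frac{\lambda}{1-p}\rt)^{i^*(p,\lambda)}$ lies between $\frac{1-p}{\lambda}\cdot\frac{p}{1-\lambda}$ and $\frac{p}{1-\lambda}$, so $(1-\lambda)\lt(\frac{\lambda}{1-p}\rt)^{i^*(p,\lambda)}$ remains bounded as $\lambda\rar1$; combined with $\lambda\rar1$ and $(1-p-\lambda)^2\rar p^2>0$, the first (``bracketed'') term in the closed-form expression for $\blv_1^I$ is therefore $O(1)$. Putting the two facts together, $\blv_1^I=i^*(p,\lambda)+O(1)$ as $\lambda\rar1$, and since $i^*(p,\lambda)\rar\ity$ this yields $\blv_1^I\sim i^*(p,\lambda)\sim\log_{\frac{1}{1-p}}\frac{1}{1-\lambda}$.

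The computation has no real obstacle, but the one spot that needs attention is the bracketed term near $\lambda=1$: although $\lt(\frac{\lambda}{1-p}\rt)^{i^*(p,\lambda)}$ itself blows up at the rate $\frac{1}{1-\lambda}$, it is exactly the factor $1-\lambda$ in the numerator of that term that cancels the blow-up, leaving a bounded contribution that is then dominated by the diverging $i^*(p,\lambda)$ term. Handling the floor function consistently --- both in establishing $i^*(p,\lambda)\sim\log_{\frac{1}{1-p}}\frac{1}{1-\lambda}$ and in the two-sided bound on $\lt(\frac{\lambda}{1-p}\rt)^{i^*(p,\lambda)}$ --- is the only place where a little care is required.
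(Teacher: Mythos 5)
Your overall approach---sum the finite geometric series from Theorem~\ref{thm:ssprop}(4), then argue that the geometric term is $O(1)$ while the linear term dominates---is exactly what the paper means by ``straightforward algebra,'' and the two-sided floor bound you use to control $(\lambda/(1-p))^{i^*}$ is the right idea. Two algebraic slips, however, need fixing.

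First, your expansion of the denominator is off. You write
\beq
\log_{\frac{\lambda}{1-p}}\frac{p}{1-\lambda}=\frac{\ln p+\ln\frac{1}{1-\lambda}}{\ln\frac{1}{1-p}+\ln\frac{1-p}{\lambda}},\nnb
\eeq
but $\ln\tfrac{1}{1-p}+\ln\tfrac{1-p}{\lambda}=\ln\tfrac{1}{\lambda}=-\ln\lambda$, which tends to $0$ as $\lambda\rar1$; your ratio would then diverge rather than tend to $1$. The correct decomposition is $\ln\tfrac{\lambda}{1-p}=\ln\tfrac{1}{1-p}+\ln\lambda$, and since $\ln\lambda\rar0$ the denominator tends to $\ln\tfrac{1}{1-p}>0$, after which your boundedness argument goes through.

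Second, if you actually carry out the geometric-sum step, you should find the prefactor of the bracketed term to be $\tfrac{\lambda(1-\lambda)}{(1-p-\lambda)^2}$, not $\tfrac{(1-p)(1-\lambda)}{(1-p-\lambda)^2}$: with $r=\tfrac{\lambda}{1-p}$ one has $\sum_{i=1}^{n}r^i=\tfrac{r}{1-r}(1-r^n)=\tfrac{\lambda}{1-p-\lambda}(1-r^n)$, and multiplying by $\tfrac{1-\lambda}{1-p-\lambda}$ gives the $\lambda$ numerator. (A numerical spot-check with $p=0.2$, $\lambda=0.6$, $i^*=2$ gives $\blv_1^I=\bls^I_1+\bls^I_2=0.5+0.125=0.625$, matching the $\lambda$ version; the $1-p$ version gives $1.5$.) The paper's displayed closed form therefore appears to contain a typo, and your claim to ``produce the displayed closed form'' suggests you did not complete the computation. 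This does not affect the asymptotic conclusion---either prefactor keeps the geometric term $O(1)$---but you should be aware that a careful derivation corrects, rather than reproduces, the stated formula.
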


The scaling of the average queue length in Eq.\ \eqref{eq:expoim} with respect to arrival rate $\lambda$ is contrasted with (and is \emph{exponentially better} than) the familiar $\frac{1}{1-\lambda}$ scaling when no centralized resource is available ($p=0$).

\vspace{20pt}

{\bf Intuition for Exponential Phase Transition}. The exponential improvement in the scaling of $\blv^I_1$ is surprising, because the expressions for $\bls^I_i$ look ordinary and do not contain any super-geometric terms in $i$. However, a closer look reveals that for any $p>0$, the tail probabilities $\bls^I$ have {\bf finite support}: $\bls^I_i$ ``dips'' down to $0$ as $i$ increases to $i^*(p,\lambda)$, which is even faster than a super-geometric decay. Since $0\leq \bls^I_i \leq 1$ for all $i$, it is then intuitive that $\blv^I_1=\sum_{i=1}^{i^*(p,\lambda)} \bls^I_i$ is upper-bounded by $i^*(p,\lambda)$, which scales as $\log_{\frac{1}{1-p}}\frac{1}{1-\lambda}$ as $\lambda \rar 1$. Note that a tail probability with ``finite-support'' implies that the fraction of stations with more than $i^*(p,\lambda)$ tasks \emph{decreases to zero} as $N \rar \ity$. For example, we may have a strictly positive fraction of stations with, say, $10$ tasks, but stations with more than $10$ tasks hardly exist. While this may appear counterintuitive, it is a direct consequence of centralization in the resource allocation schemes. Since a fraction $p$ of the total resource is constantly going after the longest queues, it is able to prevent long queues (i.e., queues with more than $i^*(p,\lambda)$ tasks) from even appearing. The thresholds $i^*(p,\lambda)$ increasing to infinity as $\lambda \rar 1$ reflects the fact that the central server's ability to annihilate long queues is compromised by the heavier traffic loads; our result essentially shows that the increase in $i^*(\lambda,p)$ is surprisingly slow. $\diamonds$

\newpage

{\bf Numerical Results}: Figure \ref{fig:pandnop} compares the invariant state vectors for the case $p=0$ (stars) and $p=0.05$ (diamonds). When $p=0$, $\bls^I_i$ decays exponentially as $\lambda^i$, while when $p=0.05$, $\bls^I_i$ decays much faster, and reaches zero at around $i=40$. Figure \ref{fig:heavytraffic} demonstrates the exponential phase transition in the average queue length as the traffic intensity reaches $1$, where the solid curve, corresponding to a positive $p$, increases significantly slower than the usual $\frac{1}{1-\lambda}$ delay scaling (dotted curve). Simulations show that the theoretical model offers good predictions for even a moderate number of servers ($N=100$). The detailed simulation setup can be found in Appendix B. Table \ref{tb:spsi} gives examples of the values for $i^*(p,\lambda)$; note that these values in some sense correspond to the \emph{maximum delay} an average customer could experience in the system. $\diamonds$

\begin{figure}
\begin{center}
\includegraphics[scale=.36]{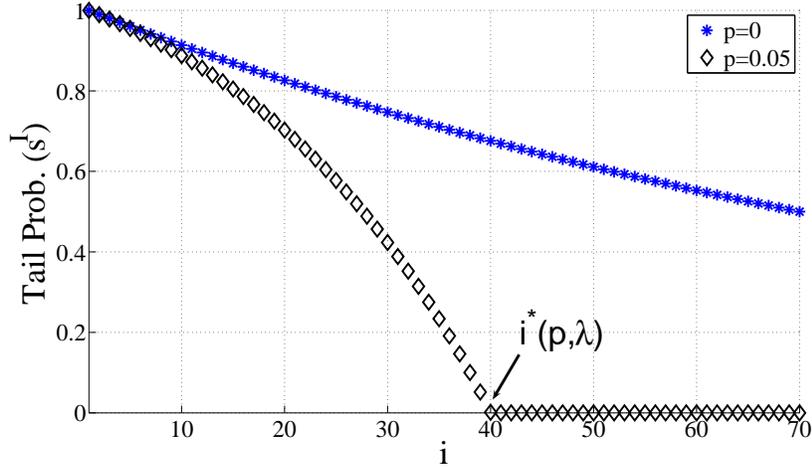}
\caption{Values of $\bls^I_i$, as a function of $i$, for $p=0$ and $p=0.05$, with traffic intensity $\lambda=0.99$.}
\label{fig:pandnop}
\end{center}
\vspace{-0.6cm}
\end{figure}

\begin{figure}[h]
\begin{center}
\includegraphics[scale=0.36]{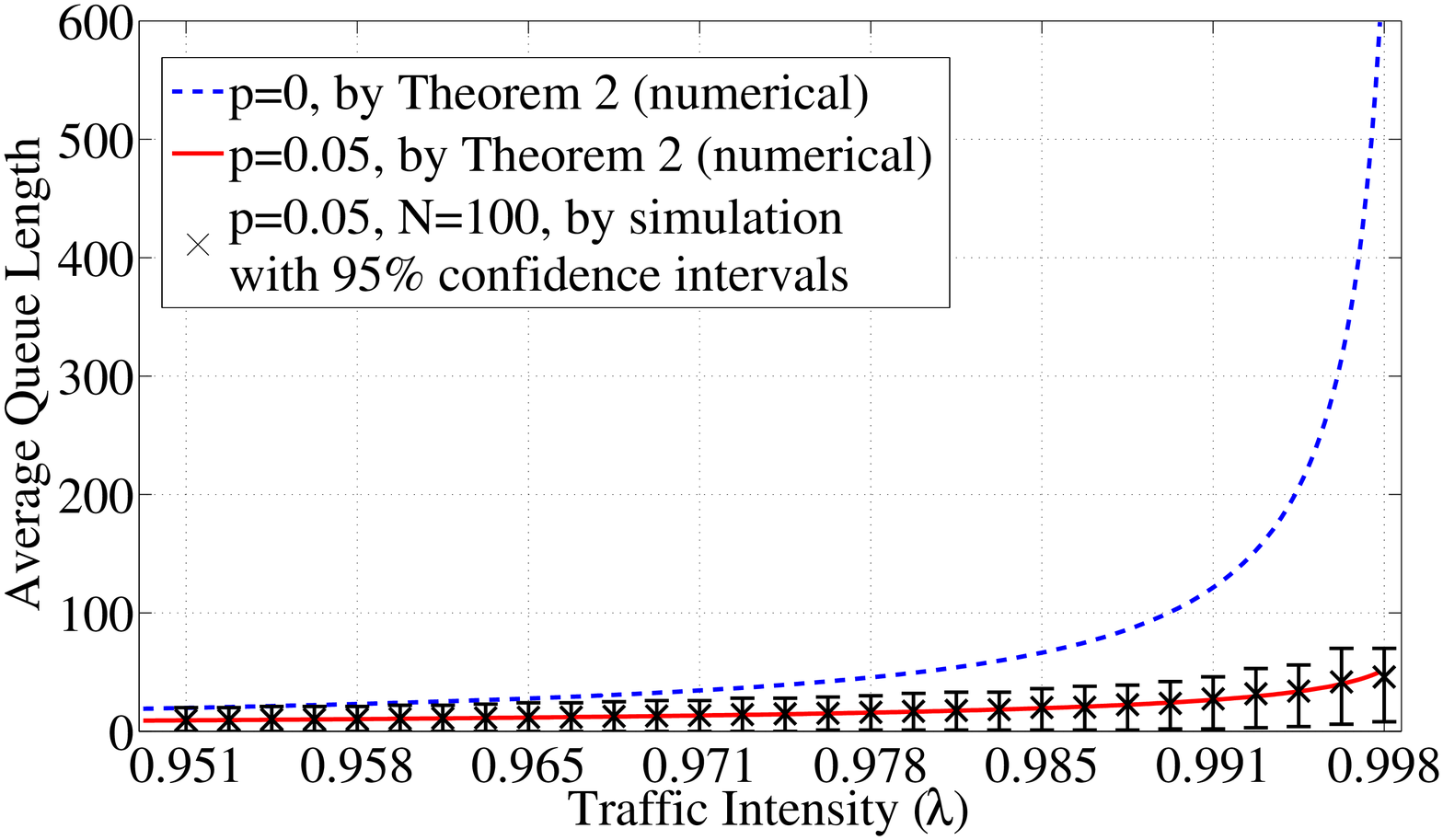}
\caption{Illustration of the exponential improvement in average queue length from $O(\frac{1}{1-\lambda})$ to $O(\log\frac{1}{1-\lambda})$ as $\lambda \rar 1$, when we compare $p=0$ to $p=0.05$.}
\label{fig:heavytraffic}
\end{center}
\end{figure}

\vspace{20pt}

Theorem \ref{thm:ssprop} characterizes the invariant state of the fluid model, without saying if and how a solution of the fluid model reaches it. The next two results state that given any finite initial condition, the solution to the fluid model is unique and converges to the unique invariant state as time goes to infinity.

\begin{thm}{\bf (Uniqueness of Solutions to Fluid Model)}
\lbl{thm:fluidunique}
Given any initial condition $\blv^0 \in \overline{\spv}^\infty$, the fluid model has a unique solution $\blv(\blv^0,t)$, $t \in [0,\ity)$.
\end{thm}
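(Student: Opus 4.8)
The plan is to prove \emph{uniqueness} via a Gronwall estimate in the weighted $L_2$ norm $\|\cdot\|_w$ of Eq.~\eqref{eq:infnorm}, exploiting a monotonicity property that the discontinuous term $g(\cdot)$ enjoys precisely in the $\blv$-representation. Existence of at least one fluid solution for every $\blv^0\in\overline{\spv}^\infty$ is not the real difficulty: it can be obtained by a direct Euler-type construction using the uniform drift bound below, or extracted as a byproduct of the convergence of the finite stochastic systems; so the focus is on uniqueness.

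Let $\blv(\cdot)$ and $\blv'(\cdot)$ be two fluid solutions with $\blv(0)=\blv'(0)=\blv^0$, and put $\phi(t):=\|\blv(t)-\blv'(t)\|_w^2$. Because $\blv(t),\blv'(t)\in\overline{\spv}^\infty$ we have $0\le\blv_i(t)\le\blv_1(t)$ for all $i$, and from $|\blv_{i-1}-\blv_i|\le1$, $|\blv_i-\blv_{i+1}|\le1$, $0\le g_i\le p$ we get $|\dot{\blv}_i|\le1+\lambda$ a.e., hence $\blv_1(t)\le\blv_1(0)+2t$ and $\sup_i|\blv_i(t)-\blv'_i(t)|$ is finite and locally bounded in $t$. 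It follows that $\phi$ is absolutely continuous and may be differentiated term by term (by dominated convergence, using $\blv_0-\blv'_0=\blv_1-\blv'_1$ for the $i=0$ term), so for a.e.\ $t$,
\[
\dot\phi(t)=\sum_{i\ge1}c_i\,(\blv_i(t)-\blv'_i(t))\bigl(\dot{\blv}_i(t)-\dot{\blv}'_i(t)\bigr),
\]
with $c_i>0$ and $\sum_i c_i<\infty$. Split $\dot{\blv}_i-\dot{\blv}'_i$ into its affine part $\lambda(\blv_{i-1}-\blv_i)-(1-p)(\blv_i-\blv_{i+1})$ and the $-g$ part. The affine part has bounded coefficients and, because the geometric weights $2^{-i}$ make the forward and backward shifts bounded on $\|\cdot\|_w$, defines a bounded linear map there; by Cauchy--Schwarz its contribution to $\dot\phi(t)$ is at most $\kappa\,\phi(t)$ for some constant $\kappa=\kappa(\lambda,p)$.

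The crux is that the discontinuous term contributes with a favorable sign: for every $i\ge1$ and every $t$,
\[
(\blv_i(t)-\blv'_i(t))\bigl(g_i(\blv'(t))-g_i(\blv(t))\bigr)\le0 .
\]
This is exactly where the $\blv$-coordinates matter. If $\blv_i=0$ then $\blv_j=0$ for all $j\ge i$ (since $\blv\in\overline{\spv}^\infty$), so all three branches of Eq.~\eqref{eq:gdef0} collapse into $g_i(\blv)=\min\{\lambda\blv_{i-1},p\}\in[0,p]$; and $g_i(\blv)=p$ whenever $\blv_i>0$. Hence the product vanishes when $\blv_i,\blv'_i>0$ and when $\blv_i=\blv'_i=0$; while if $\blv_i>0=\blv'_i$ then $\blv_i-\blv'_i>0$ and $g_i(\blv')-g_i(\blv)=\min\{\lambda\blv'_{i-1},p\}-p\le0$, the case $\blv'_i>0=\blv_i$ being symmetric. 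Thus the $-g$ terms contribute $\le0$ to $\dot\phi(t)$, and combining with the previous estimate, $\dot\phi(t)\le\kappa\,\phi(t)$ for a.e.\ $t$; since $\phi(0)=0$, Gronwall's inequality forces $\phi\equiv0$, i.e.\ $\blv=\blv'$.

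The one genuine obstacle is the discontinuity of $g$: near the boundary $\{\blv_i=0\}$ it jumps---from the limiting value $p$ down to $\min\{\lambda\blv_{i-1},p\}$---so it is not Lipschitz and a naive coordinatewise bound on $|g_i(\blv)-g_i(\blv')|$ is unavailable. The resolution, and the reason the $\blv$-representation is convenient here, is that $-g$ is monotone in the inner-product sense above, so the discontinuous term can only help the Gronwall estimate; the remaining ingredients---the uniform-in-$i$ bounds needed to differentiate $\phi$ and the boundedness of the shifts on the weighted space---are routine.
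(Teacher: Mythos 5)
Your argument is correct and follows essentially the same route as the paper: a Gronwall estimate in the weighted $L_2$ norm, after justifying term-by-term differentiation via dominated convergence, with the affine part of the drift bounded via one-sided Lipschitz continuity and the discontinuous $g$-part contributing a favorable sign. Your formulation of the favorable sign is slightly cleaner than the paper's---you establish the pointwise coordinatewise monotonicity $(\blv_i-\blv'_i)\bigl(g_i(\blv')-g_i(\blv)\bigr)\le0$ for every $i$, whereas the paper obtains the same conclusion by splitting the weighted sum into ranges determined by $i^p(\blv)=\sup\{i:\blv_i>0\}$ and checking signs range by range---but this is a presentational refinement of the same one-sided-Lipschitz idea, not a different method.
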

\begin{proof} See Section \ref{sec:unifl}. \end{proof}

\begin{thm}{\bf (Global Stability of Fluid Solutions)}
\lbl{thm:fluidconv}
Given any initial condition $\blv^0 \in \overline{\spv}^\infty$, and with $\blv(\blv^0,t)$ the unique solution to the fluid model, we have
\beq
\lim_{t \rar \infty} \lt\|\blv\lt(\blv^0,t\rt) - \blv^I \rt\|_w=0,
\eeq
where $\blv^I$ is the unique invariant state of the fluid model given in Theorem \ref{thm:ssprop}.
\end{thm}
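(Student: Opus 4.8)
The plan is to exhibit an explicit monotone sandwiching of an arbitrary fluid solution $\blv(\blv^0, \cdot)$ between two reference solutions that are known to converge to $\blv^I$, and then invoke a continuity/comparison argument to pass convergence to $\blv(\blv^0,\cdot)$ itself. The natural tool here is a \emph{monotonicity} (comparison) principle for the fluid dynamics in the $\buv$-coordinates: I expect that if $\blv(0) \leq \blw(0)$ componentwise (and both lie in $\vinf$), then $\blv(t) \leq \blw(t)$ for all $t\geq 0$. This should follow from the structure of the drift $\buf$ in \eqref{eq:driftF}: the arrival and local-service terms are the standard birth-death terms that preserve the partial order, and the only delicate piece is $g_i(\blv)$, which is "more negative" exactly when $\blv_i>0$; checking that $\buf$ is quasi-monotone (Kamke--Müller condition) componentwise on $\vinf$, taking care at the boundary faces where $g_i$ switches, is the technical heart of the comparison step. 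Granting this, the argument proceeds as follows.

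First I would reduce to two extreme initial conditions. Fix $\blv^0 \in \vinf$ and pick $M$ with $\blv^0 \in \overline{\spv}^M$. The \emph{lower} reference is the solution $\underline{\blv}(t)$ started from the empty state $\blv^{\mathrm{empty}}$ (all $\bls_i = 0$, $i\geq 1$); since the empty state is $\leq$ everything, $\underline{\blv}(t) \leq \blv(\blv^0,t)$ for all $t$. For the \emph{upper} reference I would start from a state $\overline{\blv}^0 \geq \blv^0$ that dominates the entire compact set $\overline{\spv}^M$ in the order — e.g. a "block" state with $\bls_i = 1$ for $i \le M'$ and $0$ afterwards, for $M'$ large enough — so that $\blv(\blv^0,t) \leq \overline{\blv}(t) := \blv(\overline{\blv}^0,t)$. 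By Theorem \ref{thm:fluidunique} each reference solution is unique. Next I would show directly that both $\underline{\blv}(t)$ and $\overline{\blv}(t)$ converge to $\blv^I$ in $\|\cdot\|_w$ as $t\to\infty$. For $\underline{\blv}$, starting from empty, the coordinates are small and one can track them level by level: $\blv_1$ satisfies a scalar ODE driven only by $\lambda$, $1-p$, and $g_1$, which pushes it toward its invariant value; then $\blv_2$ sees $\blv_1$ near equilibrium and converges, and so on — an induction on the level $i$, using that above $i^*(p,\lambda)$ the invariant value is $0$ and the empty solution never leaves a neighborhood of $0$ there. For $\overline{\blv}$ I would argue symmetrically from above: the surplus mass in high levels is drained by the central server at rate $p$ whenever those levels are nonempty, giving a uniform downward drift until the trajectory enters the region where the level-by-level analysis applies.

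Finally, monotonicity gives the sandwich $\underline{\blv}(t) \leq \blv(\blv^0,t) \leq \overline{\blv}(t)$ componentwise for all $t$, and since both bounding trajectories converge componentwise to the common limit $\blv^I$, so does $\blv(\blv^0,t)$; componentwise convergence on $\R^{\zp}$ together with the uniform domination $\blv(\blv^0,t) \le \overline{\blv}(0)$-type bound (so the tails are controlled) upgrades to convergence in the weighted norm $\|\cdot\|_w$ by dominated convergence over the index $i$ with weights $2^{-i}$. The main obstacle I anticipate is twofold: (i) establishing the comparison principle rigorously despite the discontinuities of $g_i$ on the faces $\{\blv_i = 0\}$ — one likely needs to work with the $\bls$-coordinates or with finite truncations and a limiting argument, or to verify that solutions cannot oscillate across these faces pathologically; and (ii) carrying out the level-by-level convergence for the reference solutions, which is essentially a cascade of scalar ODE stability arguments but requires some care to show the "error from lower levels not yet equilibrated" does not prevent convergence — a Lyapunov function of the form $\sum_i 2^{-i}|\blv_i(t) - \blv^I_i|$ or a direct $\limsup$/$\liminf$ bootstrap on each coordinate should close this.
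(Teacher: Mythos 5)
Your high-level plan — establish a comparison principle, sandwich the given trajectory between a lower and an upper reference solution, show both references converge level-by-level, and pass to the limit — is precisely the structure of the paper's proof, and you also correctly anticipate that the comparison should probably be done in the $\bls$-coordinates rather than $\blv$. However, two concrete gaps remain. First, the upper reference: a finite ``block'' state $\bls_i=1$ for $i\leq M'$, $\bls_i=0$ for $i>M'$ cannot dominate a general $\bls^0\in\overline{\sps}^\ity$, since membership in $\overline{\sps}^M$ only bounds $\sum_i\bls^0_i$ and permits infinite support; for any finite $M'$ one can have $\bls^0_i>0$ for $i>M'$, so the block dominates in neither the $\bls$- nor the $\blv$-order. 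The natural repair (what the paper implicitly uses) is $\bls^u_i(0)=\max\{\bls^0_i,\bls^I_i\}$, which lies in $\overline{\sps}^\ity$, is nonincreasing, and dominates both $\bls^0$ and $\bls^I$ in the $\succeq$-order.

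Second, and more structurally: the level-by-level convergence of the reference solutions genuinely requires the coordinate-wise $\bls$-dominance $\bls^u_i(t)\geq\bls^I_i$ for each $i$, not merely the aggregate $\blv$-dominance $\blv^u_i(t)\geq\blv^I_i$. The paper's key step is the one-sided bound
\[
\dot{\blv}^u_1(t)\leq (1-p)\bigl(\bls^I_1-\bls^u_1(t)\bigr)\leq 0,
\]
which only closes because $\bls^u_1(t)\geq\bls^I_1$ pointwise, giving a monotone, sign-definite drift that can be bootstrapped coordinate-by-coordinate. A $\blv$-only sandwich gives $\sum_{j\geq i}\bls^u_j(t)\geq\sum_{j\geq i}\bls^I_j$ but does not pin down the sign of $\bls^I_i-\bls^u_i(t)$, so the ``scalar ODE driven only by $\lambda,1-p,g_1$'' you describe is not actually self-contained — $\dot{\blv}_1$ depends on $\bls_1=\blv_1-\blv_2$ — and without the $\bls$-ordering you lose the one-sided inequality that makes the induction run. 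This is exactly why the paper states that ``dominance in $\bls$ is much stronger than a similarly defined relation for $\blv$'' and commits to $\bls$ from the outset (their Lemma~\ref{lm:dom}). Your fallback sentence is therefore not optional: the comparison must be carried out in $\bls$, and the reference initial states must be chosen in the $\succeq$-order accordingly.
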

\begin{proof} See Section \ref{sec:fldconv}. \end{proof}
\begin{table}
\begin{center}\small
\begin{tabular}{| c | c c c c c |}
\hline
$p= \backslash \; \lambda=$ & 0.1 & 0.6& 0.9 & 0.99 & 0.999 \\ \hline
0.002 &2	&10&	37&	199	&692\\
0.02 &1	&6&	18&	68&	156\\
0.2 &0	&2&	5	&14	&23\\
0.5 &0	&1&	2	&5&	8\\
0.8 &0	&0	&1&	2	&4\\ 
\hline
\end{tabular}\normalsize
\caption{Values of $i^*(p,\lambda)$ for various combinations of $(p,\lambda)$.}
\lbl{tb:spsi}
\end{center}
\vspace{-0.7cm}
\end{table}

\section {Convergence to a Fluid Solution - Finite Horizon and Steady State}
\lbl{sec:summaryconvg}
The two theorems in this section justify the use of the fluid model as an approximation for the finite stochastic system. The first theorem states that as $N \rar \ity$ and with high probability, the evolution of the aggregated queue length process $\buv(t)$ is uniformly close, over any finite time horizon $[0,T]$, to the unique solution of the fluid model.

\begin{thm}{\bf (Convergence to Fluid Solutions over a Finite Horizon)}
\lbl{thm:transconv} 
Consider a sequence of systems as the number of servers $N$ increases to infinity. Fix any $T >0$. If for some $\blv^0 \in \overline{\spv}^\infty$,
\beq
\lim_{N \rar \infty} \pb\lt(\|\buv \lt(0\rt) - \blv^0 \|_w > \gm \rt) = 0, \quad \forall \gm >0,
\eeq
then 
\beq
\lbl{eq:probconv}
\lim_{N \rar \infty} \pb\lt(\sup_{t \in [0,T]} \|\buv \lt(t\rt) - \blv\lt(\blv^0,t\rt)\|_w > \gm \rt) = 0, \quad \forall \gm >0.
\eeq
where $\blv\lt(\blv^0,t\rt)$ is the unique solution to the fluid model given initial condition $\blv^0$.
\end{thm}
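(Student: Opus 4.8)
The plan is to prove this as a functional law of large numbers by a tightness-and-characterization argument; the classical generator-convergence results for Markov processes do not apply directly because the discontinuity of $g_i$ (hence of $\buf$) along the hyperplanes $\{\blv_i=0\}$ violates the Lipschitz hypotheses those results require. Fix $T>0$. The argument has three stages: (i) a random-time-change/martingale decomposition of $\buv(\cdot)$ together with a priori bounds, giving tightness of $\{\buv(\cdot)\}_N$ with continuous limit points; (ii) identification of every subsequential limit as a fluid solution started at $\blv^0$; (iii) an appeal to the uniqueness in Theorem~\ref{thm:fluidunique} to upgrade this to convergence in probability of the full sequence. Since the hypothesis only gives $\buv(0)\to\blv^0$ in probability, I would throughout pass to a subsequence and invoke the Skorokhod representation theorem so as to work with almost sure convergence on a common probability space.

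\emph{Stage (i).} Counting the jumps of coordinate $i\geq 1$ --- a $+1/N$ jump at an arrival to a station with at least $i-1$ tasks, a $-1/N$ jump at a local token at a station with at least $i$ tasks, and a $-1/N$ jump at a central token fired while the longest queue has at least $i$ tasks, i.e.\ while $\buv_i>0$ --- yields
\beq
\buv_i(t) = \buv_i(0) + \int_0^t\big[\lambda(\buv_{i-1}(s)-\buv_i(s)) - (1-p)(\buv_i(s)-\buv_{i+1}(s))\big]\,ds - G^N_i(t) + M^N_i(t), \nnb
\eeq
where $G^N_i(t):=\int_0^t p\,\mb{I}_{\{\buv_i(s)>0\}}\,ds$ is nondecreasing and $p$-Lipschitz, and $M^N_i$ is a martingale with jumps of size $1/N$ at total rate $O(N)$, so $\mb{E}\big[\sup_{t\leq T}|M^N_i(t)|^2\big]=O(1/N)$ by Doob's inequality. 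All coordinates lie in $[0,1]$, so the integral term is $1$-Lipschitz in $t$ uniformly in $N$; together with the vanishing of $M^N_i$ this makes each $\{\buv_i(\cdot)\}_N$ tight in $D([0,T],\R)$ with continuous limit points, and likewise each $\{G^N_i(\cdot)\}_N$. A domination argument --- $N\buv_1(t)=\sum_n Q_n(t)$ is bounded by $\sum_n Q_n(0)$ plus a $\mathrm{Poisson}(\lambda N t)$ number of arrivals, so $\buv_1(\cdot)$ is bounded in probability uniformly in $N$ on $[0,T]$ --- shows that, with probability close to $1$ uniformly in $N$, all paths stay in a fixed set $\overline{\spv}^{M}$, which is compact under $\|\cdot\|_w$ and on which $\|\cdot\|_w$ induces the product topology; hence coordinatewise tightness lifts to tightness of $\{\buv(\cdot)\}_N$ in $D([0,T],(\vinf,\|\cdot\|_w))$, with every limit point continuous and $\vinf$-valued.

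\emph{Stage (ii).} Along a convergent subsequence, $\buv(\cdot)\to\blv(\cdot)$ and $G^N_i(\cdot)\to G_i(\cdot)$ uniformly on $[0,T]$ (the limits being continuous), with $G_i$ nondecreasing and $p$-Lipschitz. Passing to the limit in the displayed identity --- the linear integrand converges uniformly --- gives $\blv(0)=\blv^0$, the constraints in part~(2) of Definition~\ref{def:fl} (each is a closed relation holding for every $\buv$), and
\beq
\blv_i(t) = \blv_i(0) + \int_0^t\big[\lambda(\blv_{i-1}-\blv_i) - (1-p)(\blv_i-\blv_{i+1})\big]\,ds - G_i(t), \nnb
\eeq
so $\blv$ is Lipschitz, hence differentiable a.e. It then remains to show $\dot G_i(t)=g_i(\blv(t))$ for a.e.\ $t$. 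On the open set $\{t:\blv_i(t)>0\}$, uniform convergence forces $\buv_i(s)>0$ on a neighborhood for all large $N$, so $G^N_i$ increases there at rate exactly $p$, giving $\dot G_i=p=g_i(\blv)$. On $\{t:\blv_i(t)=0\}$ we have $\dot\blv_i=0$ a.e.\ (the derivative of a nonnegative Lipschitz function vanishes a.e.\ on its zero set) and $\blv_{i+1}=\blv_i=0$ by monotonicity of $\blv$; differentiating the displayed identity then gives $\dot G_i(t)=\lambda\blv_{i-1}(t)$ a.e.\ there, and combining with $\dot G_i\leq p$ (the $p$-Lipschitz bound) yields $\lambda\blv_{i-1}(t)\leq p$, hence $\dot G_i(t)=\min\{\lambda\blv_{i-1}(t),p\}$, which is exactly $g_i(\blv(t))$ in the two remaining cases of \eqref{eq:gdef0}. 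Thus $\blv(\cdot)$ is a solution of the fluid model with $\blv(0)=\blv^0$.

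\emph{Stage (iii).} By Theorem~\ref{thm:fluidunique} that solution is the unique $\blv(\blv^0,\cdot)$, so every subsequential limit equals it; therefore the whole sequence $\buv(\cdot)$ converges in distribution to $\blv(\blv^0,\cdot)$ in $D([0,T],(\vinf,\|\cdot\|_w))$, and since the limit is a deterministic continuous path this is equivalent to convergence in probability in the uniform metric, i.e.\ \eqref{eq:probconv}. The main obstacle is the limit identification in Stage~(ii) at the boundary $\{\blv_i=0\}$: one must show that the time-average of the \emph{discontinuous} indicator $\mb{I}_{\{\buv_i>0\}}$ converges to the correct value $g_i(\blv)$ rather than the naive $p\,\mb{I}_{\{\blv_i>0\}}$. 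The resolution above exploits precisely two soft facts --- that $G^N_i$ (hence any limit $G_i$) is $p$-Lipschitz and nondecreasing, and that a nonnegative Lipschitz limit has vanishing derivative on its zero set --- which together pin $\dot G_i$ down to exactly $g_i(\blv)$; carrying out the a priori domination bounds and the $\overline{\spv}^M$-compactness carefully is the remaining (routine but lengthy) technical work.
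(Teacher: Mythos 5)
Your proposal is correct and takes a genuinely different route from the paper. Where the paper (Propositions \ref{prop:smooth} and \ref{prop:drifts}, combined with the short argument in Section \ref{sec:thmtransconvpf}) builds a pathwise coupling in which $\buv$ is a deterministic function of two fundamental processes $W$ (Poisson event clock) and $U$ (i.i.d.\ uniforms), and proves tightness and limit identification $\omega$-by-$\omega$ on a full-measure set where the Functional LLN and Glivenko--Cantelli theorems hold (Lemma \ref{lm:nice1}), you instead take the Dynkin/semimartingale decomposition $\buv_i = \buv_i(0) + \int(\text{linear drift})\,ds - G^N_i + M^N_i$, kill the martingale with Doob's inequality, and handle the discontinuous central-service term through the monotone $p$-Lipschitz process $G^N_i(t)=\int_0^t p\,\mb{I}_{\{\buv_i(s)>0\}}\,ds$. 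Both arguments meet at exactly the same delicate boundary case and dispatch it identically: at a regular $t$ with $\blv_i(t)=0$, the derivative of a nonnegative Lipschitz limit vanishes, which forces $\dot G_i(t)=\lambda\blv_{i-1}(t)$ and then $\min\{\lambda\blv_{i-1},p\}$ once the $p$-Lipschitz bound on $G_i$ is invoked --- this is precisely the trick in Cases~1 and~2 of the paper's Proposition \ref{prop:drifts}. Your martingale route is somewhat more elementary and avoids the Arzel\`a--Ascoli and diagonal-subsequence bookkeeping of Proposition \ref{prop:smooth}; the paper's pathwise coupling, however, is chosen with foresight, since fixing the exogenous randomness $(W,U)$ independently of the initial condition yields nearly for free the \emph{uniform-in-initial-condition} rate of convergence over compact sets (Proposition \ref{prop:unicon}) needed downstream for the steady-state Theorem \ref{thm:convss}, which a martingale argument would need an extra uniformity estimate to reproduce. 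One small caveat: the lift in your Stage~(i) from coordinatewise tightness to tightness in $D([0,T],(\vinf,\|\cdot\|_w))$ genuinely requires the $\overline{\spv}^M$ confinement you sketch, since $\|\cdot\|_w$ induces the product topology only on $\|\cdot\|_w$-bounded sets with uniformly summable tails; your Poisson domination argument supplies exactly this and should be retained in a full write-up.
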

\begin{proof} See Section \ref{sec:thmtransconvpf}. \end{proof}

Note that if we combine Theorem \ref{thm:transconv} with the convergence of $\blv(t)$ to $\blv^I$ in Theorem \ref{thm:fluidconv}, we see that the finite system ($\buv$) is approximated by the invariant state of the fluid model $\blv^I$ after a fixed time period. In other words, we now have
\beq
\lbl{eq:limorder1}
\lim_{t \rar \ity} \lim_{N \rar \ity }\buv(t) = \blv^I, \mbox{ in distribution.}
\eeq
If we switch the order in which the limits over $t$ and $N$ are taken in Eq.\ \eqref{eq:limorder1}, we are then dealing with the limiting behavior of the \emph{sequence of steady-state distributions} (if they exist) as the system size grows large. Indeed, in practice it is often of great interest to obtain a performance guarantee for the steady state of the system, if it were to run for a long period of time. In light of Eq.\ \eqref{eq:limorder1}, we may expect that
\beq
\lim_{N \rar \ity} \lim_{t \rar \ity }\buv(t) {=} \blv^I, \mbox{ in distribution.}
\eeq
The following theorem shows that this is indeed the case, i.e., that a unique steady-state distribution of $\blv^N(t)$ (denoted by $\pi^N$) exists for all $N$, and that the sequence $\pi^N$ concentrates on the invariant state of the fluid model ($\blv^I$) as $N$ grows large. 
\begin{thm}{\bf(Convergence of Steady-state Distributions to $\blv^I$)}
\lbl{thm:convss}
Denote by $\mcal{F}_{\overline{\spv}^\ity}$ the $\sigma$-algebra generated by $\overline{\spv}^\ity$. For any $N$, the process $\buv(t)$ is positive recurrent, and it admits a unique steady-state distribution $\pi^N$. Moreover,
\beq
\lim_{N\rar\ity} \pi^N  = \delta_{\blv^I}, \, \mbox{ in distribution,}
\eeq
where $\delta_{\blv^I}$ is a probability measure on $\mcal{F}_{\overline{\spv}^\ity}$ that is concentrated on $\blv^I$, i.e., for all $X \in \mcal{F}_{\overline{\spv}^\ity}$,
\beq
\delta_{\blv^I}(X)  = \lt\{ \begin{array}{ll}
1, & \blv^I \in X , \\
0, & \mbox{otherwise}. 
\end{array} \right. \nnb
\eeq
\end{thm}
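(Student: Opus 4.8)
The plan is to prove the two assertions separately: first, positive recurrence (hence existence and uniqueness of $\pi^N$) for each fixed $N$; second, the weak convergence $\pi^N \rar \delta_{\blv^I}$.

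For positive recurrence I would run a Foster--Lyapunov argument on the Markov chain $\buv(t)$, which for fixed $N$ lives on a countable state space and has uniformly bounded transition rates (at most $(1+\lambda)N$). Take the Lyapunov function $V(\buv) = \frac1N\sum_{n=1}^N Q_n^2 = \sum_{i\geq1}(2i-1)\bus_i$, which is finite with finite sublevel sets on the state space. Applying the generator: an arrival (total rate $N\lambda$) to a queue of length $q$ changes $\sum_n Q_n^2$ by $2q+1$; a local token changes it by $-2q+1$; and a central token, which whenever the system is non-empty removes a task from a \emph{longest} queue, changes it by $-2\max_n Q_n + 1 \leq -2\frac1N\sum_n Q_n + 1$. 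Summing and using only $\max_n Q_n \geq \frac1N\sum_n Q_n$, the drift of $\sum_n Q_n^2$ is at most $2(\lambda-1+p)\sum_n Q_n - 2p\sum_n Q_n + (1+\lambda)N = -2(1-\lambda)\sum_n Q_n + (1+\lambda)N$, so the drift of $V$ is at most $-2(1-\lambda)\buv_1 + (1+\lambda)$, which is $\leq -1$ outside the finite set $\lt\{\buv_1 \leq (2+\lambda)/(2(1-\lambda))\rt\}$. Since $\buv(t)$ is clearly irreducible, Foster's criterion gives positive recurrence and a unique stationary distribution $\pi^N$; a standard truncation/bootstrap on the same inequality yields the \emph{uniform} moment bound $\mb{E}_{\pi^N}[\buv_1] \leq (2+\lambda)/(2(1-\lambda))$.

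For the convergence, I would first establish tightness of $\lt\{\pi^N\rt\}$ on $\overline{\spv}^\infty$. Since $0 \leq \blv_i \leq \blv_1$ for all $\blv \in \overline{\spv}^\infty$, the uniform bound on $\mb{E}_{\pi^N}[\buv_1]$ and Markov's inequality give, for every $\ep>0$, a $K$ with $\pi^N(\buv_1 > K) < \ep$ uniformly in $N$; and $\lt\{\blv\in\overline{\spv}^\infty : \blv_1 \leq K\rt\}$ is compact in the $\lt\|\cdot\rt\|_w$ topology (coordinates confined to $[0,K]$, tail sums $\sum_{i\geq I}\blv_i^2 2^{-i}$ vanishing uniformly as $I\rar\ity$, and the monotonicity/convexity constraints closed). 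So $\lt\{\pi^N\rt\}$ is tight, and it suffices to show every weak subsequential limit $\pi^\ity$ equals $\delta_{\blv^I}$. Fix such a limit along a subsequence. The crucial claim is that $\pi^\ity$ is \emph{invariant under the fluid flow}: if $\blv(0)\sim\pi^\ity$ and $\blv(\blv(0),\cdot)$ is the fluid solution (unique by Theorem \ref{thm:fluidunique}), then $\blv(\blv(0),t)\sim\pi^\ity$ for every $t\geq0$. To see this, start the $N$-th system (along the subsequence) in stationarity, $\buv^N(0)\sim\pi^N$; then $\buv^N(0)\Rightarrow\blv(0)$, and the finite-horizon convergence of Theorem \ref{thm:transconv} (in a form uniform over compact sets of initial conditions) gives $\buv^N(t)\Rightarrow\blv(\blv(0),t)$ for each fixed $t$; but $\buv^N(t)\sim\pi^N$ for all $t$ by stationarity, so $\buv^N(t)\Rightarrow\pi^\ity$, whence $\blv(\blv(0),t)\sim\pi^\ity$. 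Finally, by global stability (Theorem \ref{thm:fluidconv}) $\blv(\blv(0),t)\rar\blv^I$ in $\lt\|\cdot\rt\|_w$ as $t\rar\ity$, hence in distribution; since $\blv(\blv(0),t)$ has the \emph{same} law $\pi^\ity$ for all $t$ and converges in distribution to $\delta_{\blv^I}$, we conclude $\pi^\ity = \delta_{\blv^I}$. As all subsequential limits agree, $\pi^N \Rightarrow \delta_{\blv^I}$.

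The main obstacle is the invariance step. Theorem \ref{thm:transconv} as stated concerns a \emph{deterministic} initial condition, whereas here $\buv^N(0)\sim\pi^N$ is random and $N$-dependent; transferring the convergence through the fluid limit requires a version of the finite-horizon result that is uniform over a compact set of initial conditions, together with continuity of the map $\blv^0\mapsto\blv(\blv^0,t)$ (obtainable from uniqueness plus that uniformity). This is precisely the extra technical input the argument needs, and it is the reason a uniform convergence statement must be proved on the way to Theorem \ref{thm:convss}. Given that, the remaining pieces --- justifying the interchange of the $N\rar\ity$ and $t\rar\ity$ limits via stationarity, and the measure-theoretic bookkeeping for tightness in the weighted $L_2$ topology --- are routine.
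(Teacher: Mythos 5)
Your overall architecture matches the paper's: establish positive recurrence and tightness of $\lt\{\pi^N\rt\}$, extract a weak subsequential limit $\pi^\infty$, show $\pi^\infty$ is invariant under the fluid flow by passing the stationarity of $\pi^N$ through a uniform-over-compacts finite-horizon convergence statement, and then conclude $\pi^\infty = \delta_{\blv^I}$ via global stability of the fluid model (Theorem~\ref{thm:fluidconv}). The one piece you flag as ``the extra technical input'' --- a version of Theorem~\ref{thm:transconv} uniform over a compact set of initial conditions, together with continuity of $\blv^0 \mapsto \blv(\blv^0,t)$ --- is exactly what the paper supplies (Proposition~\ref{prop:unicon}, Corollary~\ref{cor:unicon}, and Corollary~\ref{cor:contdep}), and your identification of it is correct. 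The paper phrases the invariance step in the language of test-function operators $T(t)$ and $T^N(t)$ acting on $\overline{C}$, whereas you argue directly with the laws of $\buv(t)$; these are equivalent framings. Your final step --- if $\blv(0)\sim\pi^\infty$ then $\blv(\blv(0),t)\sim\pi^\infty$ for all $t$, yet $\blv(\blv(0),t)\rar\blv^I$ pointwise, so $\pi^\infty=\delta_{\blv^I}$ --- is actually a cleaner statement of what the paper invokes somewhat tersely by citing ``uniqueness of the invariant state''; it makes explicit that global stability, not merely uniqueness of the fixed point, is what closes the argument.

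Where you genuinely depart from the paper is in the positive-recurrence/tightness step. You run a Foster--Lyapunov argument with the quadratic Lyapunov function $\frac1N\sum_n Q_n^2$, use $\max_n Q_n \geq \frac1N\sum_n Q_n$ to absorb the central-server term, and obtain a drift bound $\leq -2(1-\lambda)\buv_1 + (1+\lambda)$ that is independent of $p$; this yields positive recurrence and the uniform moment bound $\mb{E}_{\pi^N}[\buv_1]\leq (2+\lambda)/(2(1-\lambda))$, from which tightness follows by Markov's inequality. Your drift computation is correct. The paper instead proves Proposition~\ref{prop:vtinght} by a stochastic-dominance coupling with the $p=0$ system: it invokes Theorem 3 of \cite{TASEPH93} to show the longest-queue-first central policy stochastically dominates random scheduling, so the total queue length is dominated by that of $N$ independent $M/M/1$ queues, whose steady state is explicit. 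The paper explicitly acknowledges your Foster--Lyapunov alternative in the remark following Proposition~\ref{prop:vtinght} and notes it would work; it prefers the dominance argument because it exhibits monotonicity of the steady state in $p$. Your route buys an explicit, $p$-uniform moment bound with a shorter self-contained computation; the paper's route buys a structural monotone-comparison insight at the cost of importing a dominance theorem from the scheduling literature.
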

\begin{proof} The proof is based on the tightness of the sequence of steady-state distributions $\pi^N$, and a uniform rate of convergence of $\buv(t)$ to $\blv(t)$ over any compact set of initial conditions. The proof is given in Chapter \ref{sec:conss}. \end{proof}

\begin{figure}[h]
\centering
\includegraphics[scale=1.1]{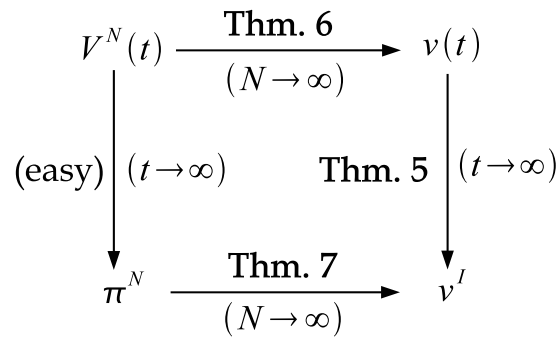}
\caption{Relationships between convergence results.}
\label{fig:techres}
\vspace{-0.1cm}
\end{figure}

Figure \ref{fig:techres} summarizes the relationships between the convergence to the solution of the fluid model over a finite time horizon (Theorem \ref{thm:fluidconv} and Theorem \ref{thm:transconv}) and the convergence of the sequence of steady-state distributions (Theorem \ref{thm:convss}).

\chapter{Probability Space and Coupling}

Starting from this chapter, the remainder of the thesis will be devoted to proving the results summarized in Chapter \ref{sec:sumres}. We begin by giving an outline of the main proof techniques, as well as the relationships among them, in Section \ref{sec:proofstrat}. The remainder of the current chapter focuses on constructing probability spaces and appropriate couplings of stochastic sample paths, which will serve as the foundation for later analysis.

\section{Overview of Technical Approach}
\lbl{sec:proofstrat}
We begin by coupling the sample paths of processes of interest (e.g., $\buv(\cdot)$) with those of two fundamental processes that drive the system dynamics (Section \ref{sec:probsp}). This approach allows us to link deterministically the convergence properties of the sample paths of interest to those of the fundamental processes, on which probabilistic arguments are easier to apply (such as the Functional Law of Large Numbers). Using this coupling framework, we show in Chapter \ref{sec:flim} that almost all sample paths of $\buv(\cdot)$ are ``tight'' in the sense that, as $N\rar\ity$, they are uniformly approximated by a set of Lipschitz-continuous trajectories, which we refer to as the fluid limits, and that all such fluid limits are valid solutions to the fluid model. This result connects the finite stochastic system with the deterministic fluid solutions. Chapter \ref{sec:fluidmodel} studies the properties of the fluid model, and provides proofs for Theorem \ref{thm:fluidunique} and \ref{thm:fluidconv}. Note that Theorem \ref{thm:transconv} (convergence of $\buv(\cdot)$ to the unique fluid solution, over a finite time horizon) now follows from the tightness results in Chapter \ref{sec:flim} and the uniqueness of fluid solutions (Theorem \ref{thm:fluidunique}). The proof of Theorem \ref{thm:ssprop} stands alone, and will be given in Section \ref{sec:pfinvstate}. Finally, the proof of Theorem \ref{thm:convss} (convergence of steady state distributions to $\blv^I$) is given in Chapter \ref{sec:conss}.

The goal of the current chapter is to formally define the probability spaces and stochastic processes with which we will be working in the rest of the thesis. Specifically, we begin by introducing two \emph{fundamental processes}, from which all other processes of interest (e.g., $\buv(\cdot)$) can be derived on a per sample path basis. 

\section{Definition of Probability Space}
\lbl{sec:probsp}

\begin{defn}{\bf (Fundamental Processes and Initial Conditions)}
\benum [(1)]
\item {\bf The Total Event Process}, $\lt\{W(t)\rt\}_{t \geq 0}$, defined on a probability space $(\Omega_W,\mcal{F}_W,\pb_W)$, is a Poisson process with rate $\lambda + 1$, where each jump marks the \emph{time} when an ``event'' takes place in the system.

\item {\bf The Selection Process}, $\lt\{U(n)\rt\}_{n \in \zp}$, defined on a probability space $(\Omega_U,\mcal{F}_U,\pb_U)$, is a discrete-time process, where each $U(n)$ is independent and uniformly distributed in $[0,1]$. This process, along with the current system state, determines the \emph{type} of each event (i.e., whether it is an arrival, a local token generation, or a central token generation). 

\item {\bf The (Finite) Initial Conditions}, $\{\mbf{V}^{(0,N)}\}_{N \in \N}$, is a sequence of random variables defined on a common probability space  $(\Omega_0,\mcal{F}_0,\pb_0)$, with $\mbf{V}^{(0,N)}$ taking values\footnote{For a finite system of $N$ stations, the measure induced by $\buv_i(t)$ is discrete and takes positive values only in the set of rational numbers with denominator $N$.} in $\overline{\spv}^\ity\cap\spq^N$. Here, $\mbf{V}^{(0,N)}$ represents the initial queue length distribution. 
\eenum
\end{defn}

For the rest of the thesis, we will be working with the product space
\beq
(\Omega,\mcal{F},\pb) \bydef (\Omega_W \times \Omega_U \times \Omega_0,\mcal{F}_W \times \mcal{F}_U \times \mcal{F}_0,\pb_W \times \pb_U \times \pb_W).
\eeq
With a slight abuse of notation, we use the same symbols $W(t)$, $U(n)$ and $\mbf{V}^{(0,N)}$ for their corresponding \emph{extensions} on $\Omega$, i.e. $W(\omega,t)\bydef W(\omega_W,t)$,
where $\omega\in\Omega$ and $\omega = (\omega_W,\omega_U,\omega_0)$. The same holds for $U$ and $\mbf{V}^{(0,N)}$.

\section{A Coupled Construction of Sample Paths}
\lbl{sec:spathcons}
Recall the interpretation of the fluid model drift terms in Section \ref{sec:fmodel}. Mimicking the expression of $\dot{\blv}_i(t)$ in Eq.\ \eqref{eq:dft}, we would like to decompose $\buv_i(t)$ into three non-decreasing right-continuous processes,
\beq
\lbl{eq:decomp}
\buv_i(t)=\buv_i(0)+\bua_i(t)-\bul_i(t)-\buc_i(t), \quad i \geq 1,
\eeq
so that $\bua_i(t)$, $\bul_i(t)$, and $\buc_i(t)$ correspond to the \emph{cumulative changes} in $\buv_i$ due to arrivals, local service tokens, and central service tokens, respectively. We will define processes $\bua(t), \bul(t)$, $\buc(t)$, and $\buv(t)$ on the common probability space $(\Omega,\mcal{F},\pb)$, and \emph{couple} them with the sample paths of the fundamental processes $W(t)$ and $U(n)$, and the value of $\mbf{V}^{(0,N)}$, for each sample $\omega \in \Omega$. First, note that since the $N$-station system has $N$ independent Poisson arrival streams, each with rate $\lambda$, and an exponential server with rate $N$, the total event process for this system is a Poisson process with rate $N(1+\lambda)$. Hence, we define $W^N(\omega, t)$, the $N$th \emph{normalized event process}, as 
\beq
W^N(\omega, t) \bydef \frac{1}{N}W(\omega, Nt), \quad \forall t \geq 0, \omega \in \Omega.
\eeq
Note that $W^N(\omega,t)$ is normalized so that all of its jumps have a magnitude of $\frac{1}{N}$.

\vspace{6pt}

\begin{figure}[h]
\lbl{fig:unitpart}
\centering
\includegraphics[scale=1.32]{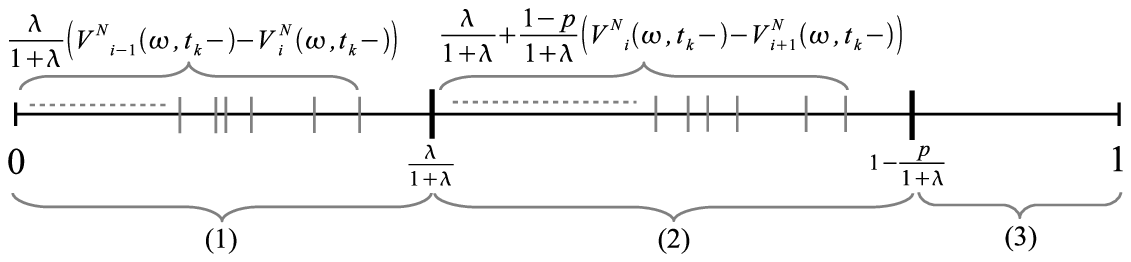}

\vspace{-20pt}

\caption{Illustration of the partition of $\lt[0,1\rt]$ for constructing $\buv(\omega,\cdot)$.}
\end{figure}

\vspace{4pt}

The coupled construction is intuitive: whenever there is a jump in $W^N(\omega,\cdot)$, we decide the type of event by looking at the value of the corresponding selection variable $U(\omega,n)$ and the current state of the system $\buv(\omega,t)$. Fix $\omega$ in $\Omega$, and let $t_k, k\geq1$, denote the time of the $k$th jump in $W^N(\omega,\cdot)$. 

We first set all of $\bua$, $\bul$, and $\buc$ to zero for $t \in [0,t_1)$. Starting from $k=1$, repeat the following steps to for increasing values of $k$. 
The partition of the interval $[0,1]$ used in the procedure is illustrated in Figure \ref{fig:unitpart}.

\benum [(1)]
\item If $U(\omega,k) \in \frac{\lambda}{1+\lambda} \lt[0, \buv_{i-1}(\omega,t_k-)-\buv_i(\omega,t_k-)\rt)$ for some $i \geq 1$, the event corresponds to an {\bf arrival} to a station with at least $i-1$ tasks. Hence we increase $\bua_i(\omega,t)$ by $\frac{1}{N}$ at all such $i$. 
\item If $U(\omega,k) \in \frac{\lambda}{1+\lambda}+\frac{1-p}{1+\lambda} \lt[0, \buv_{i}(\omega,t_k-)-\buv_{i+1}(\omega,t_k-)\rt)$ for some $i \geq 1$, the event corresponds to the {\bf completion} of a task at a station with at least $i$ tasks due to a {\bf local service token}. We increase $\bul_i(\omega,t)$ by $\frac{1}{N}$ at all such $i$. Note that $i=0$ is \emph{not} included here, reflecting the fact that if a local service token is generated at an empty station, it is immediately wasted and has no impact on the system.

\item Finally, if $U(\omega,k) \in \frac{\lambda}{1+\lambda}+\frac{1-p}{1+\lambda}+\lt[0 , \frac{p}{1+\lambda} \rt)= \lt[1-\frac{p}{1+\lambda}, 1\rt)$, the event corresponds to the generation of a {\bf central service token}. Since the central service token is alway sent to a station with the longest queue length, we will have a task completion in a most-loaded station, unless the system is empty. Let $i^*(t)$  be the last positive coordinate of $\buv(\omega,t-)$, i.e., $i^*(t) =\sup\{i:\buv_i(\omega,t-)>0\}$. We increase $\buc_j(\omega,t)$ by $\frac{1}{N}$ for all $j$ such that $1 \leq j \leq i^*(t_k)$.
\eenum

To finish, we set $\buv(\omega,t)$ according to Eq.~\eqref{eq:decomp}, and keep the values of all processes unchanged between $t_{k}$ and $t_{k+1}$. We set $\buv_0 \bydef \buv_1 + 1$, so as to stay consistent with the definition of $\buv_0$.

\chapter{Fluid Limits of Stochastic Sample Paths}
\lbl{sec:flim}

In this chapter, we formally establish the connections between the stochastic sample paths ($\buv(\cdot)$) and the solutions to the fluid model ($\blv(\cdot)$). Through two important technical results (Propositions \ref{prop:smooth} and \ref{prop:drifts}), it is shown that, as $N \rar \ity$ and almost surely, any subsequence of $\lt\{\buv(\cdot)\rt\}_{N \geq 1}$ contains a subsequence that convergences uniformly to \emph{a solution} of the fluid model, over any finite horizon $[0,T]$. This provides strong justification for using the fluid model as an approximation for the stochastic system over a finite time period. However, we note that results presented in this chapter do not imply the converse, that \emph{any} solution to the fluid model corresponds to a limit point of some sequence of stochastic sample paths. This issue will be resolved in the next chapter where we show the important property of the uniqueness of fluid solutions. The results presented in chapter, together with the uniqueness of fluid solutions, will then have established that the fluid model \emph{fully characterizes} the \emph{transient behavior} of sufficiently large finite stochastic systems over a finite time horizon $[0,T]$.

\vspace{15pt}
In the sample-path wise construction in Section \ref{sec:spathcons}, all randomness is attributed to the initial condition $\mbf{V}^{(0,N)}$ and the two fundamental processes $W\lt(\cdot\rt)$ and $U\lt(\cdot\rt)$. Everything else, including the system state $\buv(\cdot)$ that we are interested in, can be derived from a deterministic mapping, given a particular realization of $\mbf{V}^{(0,N)}$, $W(\cdot)$, and $U(\cdot)$. With this in mind, the approach we will take to prove convergence to a fluid limit (i.e., a limit point of $\lt\{\buv(\cdot)\rt\}_{N \geq 1}$), over a finite time interval $[0,T]$, can be summarized as follows.

\benum [(1)]
\item (Lemma \ref{lm:nice1}) Define a subset $\spc$ of the sample space $\Omega$, such that $\pb\lt(\spc\rt)=1$ and the sample paths of $W$ and $U$ are sufficiently ``nice'' for every $\omega \in \spc$. 
\item (Proposition \ref{prop:smooth}) Show that for all $\omega$ in this nice set, the derived sample paths $\buv(\cdot)$ are also ``nice'', and contain a subsequence converging to a Lipschitz-continuous trajectory $\blv(\cdot)$, as $N \rar \infty$. 
\item (Proposition \ref{prop:drifts}) Characterize the derivative at any regular point\footnote{Regular points are points where derivative exists along \emph{all coordinates} of the trajectory. Since the trajectory is Lipschitz-continuous along every coordinate, almost all points are regular.} of $\blv(\cdot)$ and show that it is identical to the drift in the fluid model. Hence $\blv(\cdot)$ is a solution to the fluid model.
\eenum

The proofs will be presented according to the above order.

\section{Tightness of Sample Paths over a Nice Set}

We begin by proving the following lemma which characterizes a ``nice'' set $\spc \subset \Omega$ whose elements have desirable convergence properties.
\begin{lemma}
\label{lm:nice1}
Fix $T>0$. There exists a measurable set $\spc \subset \Omega$ such that $\pb\lt(\spc\rt)=1$ and for all $\omega \in \spc$, 
\beqn
\lbl{eq:nice1}
\hspace{-26pt} &&\lim_{N\rar\infty}\sup_{t \in [0,T]}\lt|W^N\lt(\omega,t\rt) - \lt(1+\lambda\rt)t\rt|=0, \\
\lbl{eq:nice2}
\hspace{-26pt} &&\lim_{N\rar\infty}\frac{1}{N} \sum_{i=1}^N \mb{I}_{[a,b)}\lt(U\lt(\omega,i\rt)\rt)= b-a, \, \, \mbox{ if } a<b \mbox{ and } [a,b) \subset [0,1].
\eeqn
\end{lemma}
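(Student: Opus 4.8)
The statement is a pair of almost-sure functional laws of large numbers — one for the Poisson clock $W$, one for the i.i.d.\ selection sequence $U(\cdot)$ — and since \eqref{eq:nice1} depends on $\omega$ only through its $\Omega_W$-coordinate and \eqref{eq:nice2} only through its $\Omega_U$-coordinate, the plan is to produce a set $\Omega_1\subseteq\Omega_W$ of full $\pb_W$-measure on which \eqref{eq:nice1} holds, a set $\Omega_2\subseteq\Omega_U$ of full $\pb_U$-measure on which \eqref{eq:nice2} holds, and then take $\spc$ to be the canonical extension to $\Omega$ of $\Omega_1\times\Omega_2\times\Omega_0$, so that $\pb(\spc)=\pb_W(\Omega_1)\,\pb_U(\Omega_2)\,\pb_0(\Omega_0)=1$ and both displayed limits hold for every $\omega\in\spc$.

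For \eqref{eq:nice1}, first fix a countable dense set $D\subseteq[0,T]$ with $0,T\in D$. Since $W$ is a rate-$(1+\lambda)$ Poisson process, the ordinary strong law (applied at integer times, then extended to continuous time via monotonicity of $W$) gives $W(\omega,s)/s\to 1+\lambda$ as $s\to\infty$ off a $\pb_W$-null set; hence for each fixed $t\in D$ we have $W^N(\omega,t)=t\cdot W(\omega,Nt)/(Nt)\to(1+\lambda)t$ almost surely. Intersecting over the countably many $t\in D$ yields a full-measure $\Omega_1$ on which this convergence is simultaneous over $D$. To upgrade pointwise convergence on $D$ to uniform convergence on $[0,T]$, I would exploit that $t\mapsto W^N(\omega,t)$ is nondecreasing while $t\mapsto(1+\lambda)t$ is continuous: given $\ep>0$, choose a finite grid $0=s_0<s_1<\cdots<s_m=T$ in $D$ with $(1+\lambda)(s_{j+1}-s_j)<\ep$, and for $t\in[s_j,s_{j+1}]$ sandwich $W^N(\omega,s_j)\le W^N(\omega,t)\le W^N(\omega,s_{j+1})$ to obtain $\sup_{t\in[0,T]}|W^N(\omega,t)-(1+\lambda)t|\le\max_{0\le k\le m}|W^N(\omega,s_k)-(1+\lambda)s_k|+\ep$; letting $N\to\infty$ (the max is over a fixed finite set) and then $\ep\to 0$ establishes \eqref{eq:nice1} for every $\omega\in\Omega_1$.

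For \eqref{eq:nice2}, fix a pair of rationals $0\le a<b\le 1$; then $\mb{I}_{[a,b)}(U(\omega,i))$, $i\ge 1$, are i.i.d.\ Bernoulli$(b-a)$, so the strong law gives $\tfrac1N\sum_{i=1}^N\mb{I}_{[a,b)}(U(\omega,i))\to b-a$ off a $\pb_U$-null set; intersecting over the countably many rational pairs produces a full-measure $\Omega_2$ on which this holds for every rational $a<b$. For a general real $a<b$ with $[a,b)\subset[0,1]$ and any $\ep>0$, pick rationals with $p\le a\le p'$, $q'\le b\le q$, $[p',q')\subseteq[a,b)\subseteq[p,q)$ and all four gaps smaller than $\ep$; then $\mb{I}_{[p',q')}\le\mb{I}_{[a,b)}\le\mb{I}_{[p,q)}$ pointwise, so on $\Omega_2$ the empirical averages for $[a,b)$ are asymptotically trapped between $q'-p'\ge(b-a)-2\ep$ and $q-p\le(b-a)+2\ep$, and letting $\ep\downarrow 0$ gives \eqref{eq:nice2}. (If $b-a\le 2\ep$ one instead uses the upper bound together with the trivial lower bound $0$.)

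Combining the two parts as described in the first paragraph finishes the argument. The only step carrying genuine content is the passage from pointwise almost-sure convergence on a countable dense set to uniform almost-sure convergence in \eqref{eq:nice1} — a P\'olya/Dini-type argument hinging on monotonicity of the prelimit processes and continuity of the limit; the remainder is routine bookkeeping with countable intersections of strong-law events, where I anticipate no obstacle.
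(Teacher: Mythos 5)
Your proof is correct, and the decomposition into a full-measure set on each of the factor spaces $\Omega_W$, $\Omega_U$ followed by a product is exactly the paper's strategy. Where you diverge is in what you take as given: the paper invokes the Functional Law of Large Numbers for Poisson processes to get \eqref{eq:nice1} and the Glivenko--Cantelli theorem to get \eqref{eq:nice2} (the latter explicitly, as uniform convergence of the empirical distribution function of $U(1),\ldots,U(N)$, which immediately implies convergence for every interval $[a,b)$), whereas you rebuild both of these from the ordinary strong law of large numbers. Your P\'olya-type argument for \eqref{eq:nice1} --- countable dense grid, monotonicity of $W^N(\omega,\cdot)$, continuity of the linear limit, then a sandwich on $[s_j,s_{j+1}]$ --- is precisely the standard proof of the FLLN for renewal/Poisson counting processes. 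Likewise, your rational-endpoint $+$ indicator-sandwich argument for \eqref{eq:nice2} is the standard proof of the one-dimensional Glivenko--Cantelli theorem (you are proving interval convergence directly rather than CDF convergence, which is what the lemma actually needs, so you avoid the trivial step of subtracting $F_N(a)$ from $F_N(b)$). The trade-off is self-containedness versus brevity: the paper's one-line citations are sufficient for a reader who knows these classical results, while your version makes the lemma completely elementary and, in particular, makes explicit why almost-sure convergence over the uncountable family of intervals does not cause a measurability problem (countable intersections over a dense grid of endpoints, then a deterministic approximation). Both are valid; the paper simply chose to outsource the two classical ingredients.
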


\begin{proof} 
Based on the Functional Law of Large Numbers for Poisson processes, we can find $\mcal{C}_W \subset \Omega_W$, with $\pb_W\lt(\mathcal{C}_W\rt)=1$, over which Eq.~\eqref{eq:nice1} holds. For Eq.~\eqref{eq:nice2}, we invoke the Glivenko-Cantelli theorem, which states that the empirical measures of a sequence of i.i.d. random variables converge \emph{uniformly} almost surely, i.e.,
\beq
\lbl{eq:gcthm}
\lim_{N\rar\infty} \sup_{x \in [0,1]}\lt|\frac{1}{N}\sum_{i=1}^N \mb{I}_{[0,x)}\lt(U\lt(i\rt)\rt) - x\rt| = 0, \quad  \mbox{almost surely}.
\eeq
This implies the existence of some $\mathcal{C}_U\subset\Omega_U$, with $\pb_U\lt(\mathcal{C}_U\rt)=1$, over which Eq.~\eqref{eq:nice2} holds. (This is stronger than the ordinary Strong Law of Large Numbers for i.i.d. uniform random variables on $[0,1]$, which states convergence for a \emph{fixed} set $[0,x)$.) We finish the proof by taking $\spc = \mcal{C}_W\times\mcal{C}_U\times\Omega_0$.  $\ftomb$
\end{proof}

\begin{defn}
We call the 4-tuple, $\bux \bydef \lt( \buv, \bua, \bul, \buc \rt)$, the {\bf $N$th system}. Note that all four components are infinite-dimensional processes.\footnote{If necessary, $\bux$ can be enumerated by writing it explicitly as $\bux = \lt(\buv_0, \bua_0, \bul_0, \buc_0, \buv_1, \bua_1, \ldots \rt).$}
\end{defn}

Consider the space of functions from $[0,T]$ to $\R$ that are right-continuous-with-left-limits (RCLL), denoted by $D[0,T]$, and let it be equipped with the uniform metric, $d\lt(\cdot,\cdot\rt)$:
\beq
d\lt(x,y\rt) \bydef \sup_{t \in [0,T]}\lt|x\lt(t\rt)-y\lt(t\rt)\rt|,  \quad x,y \in D[0,T].
\eeq
Denote by $D^\ity[0,T]$ the set of functions from $[0,T]$ to $\R^\zp$ that are RCLL on every coordinate. Let $d^\zp(\cdot,\cdot)$ denote the uniform metric on $D^\ity[0,T]$:
\beq
\lbl{eq:dmetric}
d^\zp\lt(\blx,\bly\rt) \bydef \sup_{t \in [0,T]}\lt\|\blx\lt(t\rt)-\bly\lt(t\rt)\rt\|_w,  \quad \blx,\bly \in D^\zp[0,T],
\eeq
with $\|\cdot\|_w$ defined in Eq.~\eqref{eq:infnorm}.

The following proposition is the main result of this section. It shows that for sufficiently large $N$, the sample paths are sufficiently close to some absolutely continuous trajectory.

\begin{prop}
\lbl{prop:smooth}
Fix $T >0$. Assume that there exists some $\blv^0 \in \overline{\spv}^\infty$ such that  
\beq
\lim_{N\rar\infty} \|\buv\lt(\omega,0\rt) - \blv^0\|_w=0,
\eeq
for all $\omega \in \spc$. Then for all $\omega \in \spc$, any subsequence of $\lt\{\bux\lt(\omega,\cdot\rt)\rt\}$ contains a further subsequence, $\lt\{\mbf{X}^{N_i}\lt(\omega,\cdot\rt)\rt\}$, that converges to some coordinate-wise Lipschitz-continuous function $\blx\lt(t\rt)=\lt(\blv\lt(t\rt),\bla\lt(t\rt),\bll\lt(t\rt),\blc\lt(t\rt)\rt)$, with $\blv\lt(0\rt)=\blv^0$, $\bla(0)=\bll(0)=\blc(0)=0$ and 
\beq
\lt|\blx_i\lt(a\rt)-\blx_i\lt(b\rt)\rt| \leq L|a-b|, \quad \forall a,b \in [0,T], \quad i \in \zp,
\eeq
where $L>0$ is a universal constant, independent of the choice of $\omega$, $\blx$ and $T$. Here the convergence refers to $d^\zp(\mbf{V}^{N_i},\blv)$, $d^\zp(\mbf{A}^{N_i},\bla)$, $d^\zp(\mbf{L}^{N_i},\bll)$, and $d^\zp(\mbf{C}^{N_i},\blc)$ all converging to $0$, as $i \rar \ity$. 
\end{prop}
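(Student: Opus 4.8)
The plan is to run a pathwise Arzel\`a--Ascoli argument. Fix $\omega\in\spc$ once and for all, so that $W^N(\omega,\cdot)$ satisfies Eq.~\eqref{eq:nice1} and, by hypothesis, $\|\buv(\omega,0)-\blv^0\|_w\rar0$; everything below is then deterministic given $\omega$. The single structural fact driving the proof is that, in the coupled construction of Section~\ref{sec:spathcons}, each jump of $W^N(\omega,\cdot)$ marks one event that is of exactly one of the three types (arrival, local token, central token), and therefore increments the relevant coordinate of precisely one of $\bua,\bul,\buc$ by $\frac1N$.

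\emph{Step 1: a uniform modulus of continuity and uniform boundedness.} For $0\le a<b\le T$, counting events in $(a,b]$ gives $0\le\bua_i(\omega,b)-\bua_i(\omega,a)\le W^N(\omega,b)-W^N(\omega,a)$ for every $i$, and the same for $\bul_i$ and $\buc_i$; since the three event types are mutually exclusive, the decomposition $\buv_i=\buv_i(0)+\bua_i-\bul_i-\buc_i$ yields $|\buv_i(\omega,b)-\buv_i(\omega,a)|\le W^N(\omega,b)-W^N(\omega,a)$ as well (and $\buv_0=\buv_1+1$ inherits this). Setting $\delta_N(\omega)\bydef\sup_{t\in[0,T]}|W^N(\omega,t)-(1+\lambda)t|$, Eq.~\eqref{eq:nice1} gives $\delta_N(\omega)\rar0$ and $W^N(\omega,b)-W^N(\omega,a)\le(1+\lambda)(b-a)+2\delta_N(\omega)$, so every coordinate of $\bux(\omega,\cdot)$ obeys, for each $N$, $|\bux_i(\omega,b)-\bux_i(\omega,a)|\le(1+\lambda)|b-a|+2\delta_N(\omega)$, uniformly in the coordinate index; in particular, for all $N$ large these functions are equicontinuous with a common modulus, with Lipschitz constant $L\bydef1+\lambda<2$. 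Combining this with $\bua(\omega,0)=\bul(\omega,0)=\buc(\omega,0)=0$, the monotonicity $\buv_i\le\buv_1$ for $i\ge1$ together with $\buv_1(\omega,0)\rar\blv^0_1<\ity$, and the increment bound, I obtain a finite constant $M_0=M_0(\omega,T)$ bounding every coordinate of $\bux(\omega,\cdot)$ on $[0,T]$, for all $N$ large, independently of $i$, $N$, and $t$.

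\emph{Step 2: extraction and identification of initial values.} Enumerate the coordinates of $\bux$. For each coordinate, the tail of the sequence $\{\bux_i(\omega,\cdot)\}$ is uniformly bounded and equicontinuous on $[0,T]$, so Arzel\`a--Ascoli lets me extract from any subsequence a further subsequence converging uniformly to a function that is Lipschitz with constant $L$; a diagonal extraction over all coordinates produces one subsequence $\{N_i\}$ along which $\mbf{V}^{N_i},\mbf{A}^{N_i},\mbf{L}^{N_i},\mbf{C}^{N_i}$ converge coordinate-wise and uniformly on $[0,T]$ to $\blv,\bla,\bll,\blc$, each coordinate-wise Lipschitz with constant $L$. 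Passing to the limit in the initial values, $\|\buv(\omega,0)-\blv^0\|_w\rar0$ forces $\blv(0)=\blv^0$, and $\bua_i(\omega,0)=\bul_i(\omega,0)=\buc_i(\omega,0)=0$ gives $\bla(0)=\bll(0)=\blc(0)=0$.

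\emph{Step 3: upgrading to the $d^\zp$ metric, and the main obstacle.} Finally I would convert coordinate-wise uniform convergence into convergence in $d^\zp$: given $\epsilon>0$, choose $M$ with $4M_0^2\,2^{-M}<\epsilon$, so that for every $t\in[0,T]$
\[
\lt\|\mbf{V}^{N_i}(\omega,t)-\blv(t)\rt\|_w^2\le\sum_{j=0}^M d\lt(\mbf{V}^{N_i}_j(\omega,\cdot),\blv_j\rt)^2+\epsilon ,
\]
where the finite head tends to $0$ as $i\rar\ity$ by Step 2; the same estimate applies to $\mbf{A}^{N_i},\mbf{L}^{N_i},\mbf{C}^{N_i}$, and $L=1+\lambda$ depends only on $\lambda$, hence is the claimed universal constant. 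The main obstacle, such as it is, is bookkeeping rather than conceptual: (i) one must turn the increment bound into genuine equicontinuity by using Eq.~\eqref{eq:nice1} to absorb the finite-$N$ defect $2\delta_N(\omega)$, so that the Arzel\`a--Ascoli limit is honestly Lipschitz with a constant independent of $\omega$, $\blx$, and $T$; and (ii) the passage from coordinate-wise convergence to $d^\zp$-convergence relies crucially on the uniform-in-$(N,t,i)$ bound $M_0$ from Step 1 to make the geometric tail of the weighted norm negligible uniformly in $t$. (Only Eq.~\eqref{eq:nice1} and the initial-condition hypothesis are used here; the Glivenko--Cantelli property Eq.~\eqref{eq:nice2} enters later, when identifying the drift of the limit in Proposition~\ref{prop:drifts}.)
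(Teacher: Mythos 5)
Your plan follows the paper's proof in outline: exploit the coupling to bound jumps of $\bua,\bul,\buc$ by jumps of $W^N$, extract coordinate-wise limits, diagonalize over coordinates, and finish with the geometric tail of $\|\cdot\|_w$ together with a uniform-in-$(i,t,N)$ bound. The tail estimate in Step 3 and the bound $\sup_{t\in[0,T]}|\blv_i(t)|\leq |\blv^0_1|+LT$ match the paper's Eqs.~\eqref{eq:bdv}--\eqref{eq:defNk} essentially verbatim, and your observation that the three event types are mutually exclusive (so a single jump of $W^N$ can move at most one of $\bua_i,\bul_i,\buc_i$) sharpens the Lipschitz constant to $L=1+\lambda$, whereas the paper uses the looser triangle-inequality bound $L=3(1+\lambda)$; this is a real, if immaterial, improvement.

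The genuine gap is in Step 2, and you half-acknowledge it in your closing paragraph but do not fill it. You assert that ``the tail of the sequence $\{\bux_i(\omega,\cdot)\}$ is uniformly bounded and equicontinuous on $[0,T]$'' and then invoke Arzel\`a--Ascoli. This is not correct as stated: for each finite $N$, $\bux_i(\omega,\cdot)$ is a pure-jump step function in $D[0,T]$, so it is not continuous and the family is not equicontinuous. What you actually have from Step 1 is $|\bux_i(\omega,a)-\bux_i(\omega,b)|\leq L|a-b|+2\delta_N(\omega)$ with a defect $2\delta_N(\omega)\to0$, which is strictly weaker. Classical Arzel\`a--Ascoli does not apply directly; one needs either an ``asymptotic'' compactness argument, or the paper's route (Lemmas~\ref{lm:msp}--\ref{lm:tight1}), which explicitly builds a piecewise-linear $L$-Lipschitz function $y^N$ within distance $O(\gamma_N)$ of each approximate-Lipschitz step function, then uses compactness of the honest Lipschitz class $E_c$ and closedness to extract the limit. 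Without something like this interpolation/cluster-point lemma, the extraction step is unjustified; merely saying the step is ``bookkeeping rather than conceptual'' leaves the proof incomplete, since this is precisely the step that required a nontrivial lemma in the paper.
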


For the rest of the thesis, we will refer to such a limit point $\blx$, or any subset of its coordinates, as a {\bf fluid limit}.

{\bf Proof outline:} Here we lay out the main steps of the proof; interested readers are referred to Appendix \ref{app:smoothproof} for a complete proof. 

We first show that for all $\omega \in \spc$, and for every coordinate $i$, any subsequence of $\lt\{X^{N}_i\lt(\omega,\cdot\rt)\rt\}$ has a convergent subsequence with a Lipschitz-continuous limit. We then use the coordinate-wise limit to construct a limit point in the space $D^\zp$. To establish  coordinate-wise convergence, we use a tightness technique previously used in the literature of multiclass queueing networks (see, e.g., \cite{BRS98}). A key realization in this case, is that the total number of jumps in any derived process $\bua$, $\bul$, and $\buc$ \emph{cannot} exceed that of the event process $W^N(t)$ for any particular sample path. Since $\bua$, $\bul$, and $\buc$ are non-decreasing, we expect their sample paths to be ``smooth'' for large $N$, due to the fact that the sample path of $W^N(t)$ does become ``smooth'' for large $N$, for all $\omega\in\spc$ (Lemma \ref{lm:nice1}). More formally, it can be shown that for all $\omega \in \spc$ and $T>0$, there exist diminishing positive sequences $M_N\downarrow0$ and $\gamma_N\downarrow0$, such that the sample path along any coordinate of $\bux$ is $\gamma_N$-approximately-Lipschitz continuous with a uniformly bounded initial condition, i.e., for all $i$,
\beqn
&\lt|\bux_i(\omega,0)-x_i^0\rt|\leq M_N,  \,\nnb \\
& \mbox{and }  \lt|\bux_i(\omega,a)-\bux_i(\omega,b)\rt| \leq L|a-b| + \gamma_N, \quad \forall a,b \in [0,T], \nnb
\eeqn
where  $L$ is the Lipschitz constant, and $T < \ity$ is a fixed time horizon. Using a linear interpolation argument, we then show that sample paths of the above form can be uniformly approximated by a set of $L$-Lipschitz-continuous function on $[0,T]$. We finish by using the Arzela-Ascoli theorem (sequential compactness) along with closedness of this set, to establish the existence of a convergent further subsequence along any subsequence (compactness) and that any limit point must also $L$-Lipschitz-continuous (closedness). This completes the proof for coordinate-wise convergence. 

With the coordinate-wise limit points, we then use a diagonal argument to construct the limit points of $\bux$ in the space $D^\zp[0,T]$. Let $\blv_1$ be any $L$-Lipschitz-continuous limit point of $\buv_1$, so that a subsequence $\mbf{V}^{N^1_j}_1(\omega,\cdot) \rar \blv_1$, as $j \rar \ity$, with respect to $d(\cdot,\cdot)$. Then, we proceed recursively by letting $\blv_{i+1}$ be a limit point of a subsequence of $\lt\{\mbf{V}^{N^{i}_j}_{i+1}(\omega,\cdot)\rt\}_{j=1}^\infty$, where $\{N^i_j\}_{j=1}^\infty$ are the indices for the $i$th subsequence. We claim that $\blv$ is indeed a limit point of $\buv$ under the norm $d^\zp(\cdot,\cdot)$. Note that since $\blv_1(0) = \blv_1^0$, $0 \leq \buv_i(t) \leq \buv_1(t)$, and $\blv_1(\cdot)$ is $L$-Lipschitz-continuous, we have that
\beq
\lbl{eq:bdv}
\sup_{t \in [0,T]} \lt|\blv_i(t)\rt| \leq \sup_{t \in [0,T]} \lt|\blv_1(t)\rt| \leq \lt|\blv^0_1\rt| + LT, \quad  \forall i\in \zp.
\eeq
Set $N_1=1$, and let, for $k\geq2$,
\beq
\lbl{eq:defNk}
N_k = \min\lt\{N \geq N_{k-1}: \sup_{1\leq i \leq k} d(\mbf{V}^{N}_i(\omega,\cdot),\blv_i) \leq \frac{1}{k}\rt\}.
\eeq
Note that the construction of $\blv$ implies that $N_k$ is well defined and finite for all $k$. From Eqs.~\eqref{eq:bdv} and \eqref{eq:defNk}, we have, for all $k \geq 2$, 
\beqn
d^\zp\lt(\mbf{V}^{N_k}(\omega,\cdot), \blv\rt) &=& \sup_{t\in[0,T]}\sqrt{\sum_{i=0}^\infty\frac{\lt|\mbf{V}^{N_k}_i(\omega,t)-\blv_i(t)\rt|^2}{2^i}} \nnb \\
&\leq& \frac{1}{k}+\sqrt{\lt(\lt|\blv^0_1\rt| + LT\rt)^2\sum_{i=k+1}^\infty\frac{1}{2^i}} \nnb\\
&=& \frac{1}{k}+\frac{1}{2^{k/2}} \lt(\lt|\blv^0_1\rt| + LT\rt).
\eeqn
Hence $d^\zp\lt(\mbf{V}^{N_k}(\omega,\cdot), \blv\rt)\rar 0$, as $k\rar \infty$. The existence of the limit points $\bla(t)$, $\bll(t)$ and $\blc(t)$ can be established by an identical argument. This completes the proof. $\ftomb$

\section{Derivatives of the Fluid Limits}

The previous section established that any sequence of ``good'' sample paths ($\lt\{\bux(\omega,\cdot)\rt\}$ with $\omega \in \mcal{C}$) eventually stays close to some Lipschitz-continuous, and therefore absolutely continuous, trajectory. In this section, we will characterize the derivatives of $\blv(\cdot)$ at all regular (differentiable) points of such limiting trajectories. We will show, as we expect, that they are the same as the drift terms in the fluid model (Definition \ref{def:fl}). This means that all fluid limits of $\buv(\cdot)$ are in fact solutions to the fluid model.

\begin{prop}{\bf (Fluid Limits and Fluid Model)}
\lbl{prop:drifts}
Fix $\omega \in \mcal{C}$ and $T>0$. Let $\blx$ be a limit point of some subsequence of $\bux(\omega,\cdot)$, as in Proposition \ref{prop:smooth}. Let $t$ be a point of differentiability of all coordinates of $\blx$. Then, for all $i \in \N$,
\beqn
\lbl{eq:adr}
\dot{\bla}_i(t) &=& \lambda(\blv_{i-1}-\blv_{i}), \\
\lbl{eq:ldr}
\dot{\bll}_i(t) &=& (1-p)(\blv_{i}-\blv_{i+1}), \\
\lbl{eq:cdr}
\dot{\blc}_i(t) &=& g_i(\blv),
\eeqn
where $g$ was defined in Eq.\ \eqref{eq:gdef0}, with the initial condition $\blv(0)=\blv^0$ and boundary condition $
\blv_0(t)-\blv_1(t) = 1,  \forall t \in [0,T].$ In other words, all fluid limits of $\buv(\cdot)$ are solutions to the fluid model.
\end{prop}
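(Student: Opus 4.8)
The plan is to analyze the coupled construction of Section~\ref{sec:spathcons} and read off the derivatives of $\bla_i$, $\bll_i$, $\buc_i$ directly from how jumps are assigned, then pass to the limit. Fix $\omega \in \mcal{C}$ and a subsequence along which $\bux^{N_i}(\omega,\cdot) \to \blx$ uniformly on $[0,T]$ (as provided by Proposition~\ref{prop:smooth}), and fix a regular point $t$. For the arrival term: by construction, $\bua^N_i$ increases by $\frac1N$ exactly when a jump of $W^N$ is assigned to the subinterval $\frac{\lambda}{1+\lambda}[0,\buv^N_{i-1}(\cdot-)-\buv^N_i(\cdot-))$. So over a small window $[t,t+\delta]$, the increment $\bua^N_i(t+\delta)-\bua^N_i(t)$ equals $\frac1N$ times the number of such jumps. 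First I would use Lemma~\ref{lm:nice1}: the number of jumps of $W^N$ in $[t,t+\delta]$ is $\approx (1+\lambda)N\delta$, and the empirical distribution of the corresponding selection variables $U(\cdot)$ is uniform on $[0,1]$; combined with the uniform convergence $\buv^N \to \blv$ and the (Lipschitz) continuity of $\blv_{i-1}-\blv_i$, the fraction of those jumps landing in the arrival-subinterval for coordinate $i$ converges to $\frac{\lambda}{1+\lambda}(\blv_{i-1}(t)-\blv_i(t))$, up to an error that is $o(1)$ in $N$ plus $o(\delta)$. Dividing by $\delta$ and sending $N\to\infty$ then $\delta\to0$ gives $\dot{\bla}_i(t) = \lambda(\blv_{i-1}(t)-\blv_i(t))$. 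The local-token term $\dot{\bll}_i(t) = (1-p)(\blv_i(t)-\blv_{i+1}(t))$ follows by the identical argument applied to the second subinterval.

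The central-token term requires more care because of the discontinuity in $g$. A central token is generated whenever a $W^N$-jump lands in $[1-\frac{p}{1+\lambda},1)$, which happens at rate $\approx pN$; it then increments $\buc^N_j$ by $\frac1N$ for every $j$ with $1\le j\le i^*(t_k)$, where $i^*(t_k)$ is the index of the last positive coordinate of $\buv^N(\cdot-)$. So the key question is: for how long, within $[t,t+\delta]$, does coordinate $i$ "see" central tokens, i.e.\ how often is $i \le i^*$? I would split into the three cases of \eqref{eq:gdef0}. \emph{Case $\blv_i(t)>0$:} by uniform convergence $\buv^N_i(s-)>0$ for all $s\in[t,t+\delta]$ and all large $N$ (shrinking $\delta$ if needed), hence $i^*\ge i$ throughout, so every one of the $\approx pN\delta$ central tokens increments $\buc^N_i$; dividing by $\delta$ gives $\dot{\buc}_i(t)=p$. \emph{Case $\blv_i(t)=\blv_{i-1}(t)=0$:} at a regular point with $\blv_{i-1}(t)=0$ we must have $\dot\blv_{i-1}(t)=0$ (since $\blv_{i-1}\ge0$); but the ODEs already proved for $\bla_{i-1},\bll_{i-1}$ give $\dot\blv_{i-1}=\lambda(\blv_{i-2}-\blv_{i-1})-(1-p)(\blv_{i-1}-\blv_i)-\dot\buc_{i-1}$; using $\blv_{i-1}=\blv_i=0$ and $0\le\dot\buc_{i-1}\le p$ and $\blv_{i-2}\ge\blv_{i-1}=0$, one forces $\blv_{i-2}=0$ as well unless... — more cleanly, I would argue directly that $i^* < i-1$ on a neighborhood: if $\blv_{i-1}(t)=0$ then, since $\blv_{i-1}(s)\ge 0$ and $\blv_{i-1}$ is Lipschitz with $\dot\blv_{i-1}(t)$ existing, actually I instead use the monotonicity structure $\buv^N_i \le \buv^N_{i-1}$ and a first-positive-coordinate bookkeeping argument to show $\buc^N_i$ receives no increments in a neighborhood, giving $\dot\buc_i(t)=0=g_i(\blv)$.

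\emph{Case $\blv_i(t)=0,\ \blv_{i-1}(t)>0$ (the hard one):} here $\buv^N_i$ hovers near $0$, so coordinate $i$ is "sometimes visible, sometimes not" to the central server, and we need the total central service rate delivered to level $i$ to equal exactly $\lambda\blv_{i-1}(t)$ (assuming this is $<p$; the minimization with $p$ is the boundary subcase). The main obstacle is precisely this balancing argument, and I expect it to be the crux of the whole proof. The approach: since $\blv_i$ is differentiable at $t$ with $\blv_i(t)=0$ and $\blv_i\ge0$, necessarily $\dot\blv_i(t)=0$. From the already-established formulas $\dot\bla_i(t)=\lambda(\blv_{i-1}(t)-\blv_i(t))=\lambda\blv_{i-1}(t)$ and $\dot\bll_i(t)=(1-p)(\blv_i(t)-\blv_{i+1}(t))$, and from the decomposition $\blv_i = \blv_i(0)+\bla_i-\bll_i-\buc_i$ (which passes to the limit since all four converge uniformly), we get $\dot\buc_i(t) = \dot\bla_i(t)-\dot\bll_i(t)-\dot\blv_i(t) = \lambda\blv_{i-1}(t) - (1-p)(\blv_i(t)-\blv_{i+1}(t))$. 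Since $\blv_i(t)=0$ and $0\le\blv_{i+1}\le\blv_i$ forces $\blv_{i+1}(t)=0$, the local term vanishes and $\dot\buc_i(t)=\lambda\blv_{i-1}(t)$. To get the $\min\{\lambda\blv_{i-1},p\}$ form I would separately observe that $\buc_i$ cannot grow faster than the central capacity $p$ (each central token contributes $\frac1N$ and they arrive at rate $pN$, so $\buc^N_i(b)-\buc^N_i(a)\le p(b-a)+o(1)$, hence $\dot\buc_i\le p$ everywhere), which shows $\lambda\blv_{i-1}(t)\le p$ is automatically forced in this case and the minimization is consistent. Finally, the boundary condition $\blv_0(t)-\blv_1(t)=1$ is inherited from $\buv^N_0-\buv^N_1 \equiv 1$ by uniform convergence, and the ordering/initial conditions from Proposition~\ref{prop:smooth}; assembling $\dot\blv_i = \dot\bla_i-\dot\bll_i-\dot\buc_i$ then yields exactly \eqref{eq:dft}, so $\blv$ is a fluid solution.
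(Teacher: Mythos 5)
Your treatment of the arrival and local-token terms, and of the cases $\blv_i(t)>0$ and $\blv_i(t)=0,\ \blv_{i-1}(t)>0$, tracks the paper's proof closely and correctly: the coupled construction plus Lemma~\ref{lm:nice1} gives Claims~1 and~2, the case $\blv_i>0$ is handled by noting every central token increments $\buc_i$, and the hard case is handled exactly as in the paper by writing $\dot{\blc}_i = \dot{\bla}_i - \dot{\bll}_i - \dot{\blv}_i$ and using that $\blv_i \geq 0$, $\blv_i(t)=0$, and differentiability force $\dot{\blv}_i(t)=0$. Your observation about the $\min\{\cdot,p\}$ being automatic because $\dot{\blc}_i\leq p$ also matches the paper's reasoning.

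The gap is in the case $\blv_i(t)=\blv_{i-1}(t)=0$. You wander between several attempted arguments and finally settle on the claim that ``$\buc^N_i$ receives no increments in a neighborhood.'' That claim is false in general: uniform convergence $\buv^{N_k}\to\blv$ with $\blv_{i-1}(t)=\blv_i(t)=0$ only tells you that $\buv^{N_k}_i(s)$ and $\buv^{N_k}_{i-1}(s)$ are \emph{small} for $s$ near $t$ and $k$ large, not that they are identically zero. Whenever $\buv^{N_k}_i(s-)>0$ and a central token arrives at time $s$, $\buc^{N_k}_i$ receives an increment, so the finite-system process can and generally will have jumps there; what is true is only that the jump rate vanishes. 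The clean way to get $\dot{\blc}_i(t)=0$ is the very same decomposition identity you already used in the other case: $\dot{\blc}_i=\dot{\bla}_i-\dot{\bll}_i-\dot{\blv}_i$, with $\dot{\bla}_i(t)=\lambda(\blv_{i-1}-\blv_i)=0$, $\dot{\bll}_i(t)=(1-p)(\blv_i-\blv_{i+1})=0$, and $\dot{\blv}_i(t)=0$ by non-negativity plus differentiability at a zero of $\blv_i$. This is exactly how the paper handles this case, and it avoids the (incorrect) no-increments assertion; you should replace your sketch there with it.
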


\begin{proof} We fix some $\omega \in \spc$ and for the rest of this proof we will suppress the dependence on $\omega$ in our notation. The existence of Lipschitz-continuous limit points for the given $\omega \in \mcal{C}$ is guaranteed by Proposition \ref{prop:smooth}. 
Let $\lt\{\mbf{X}^{N_k}(\cdot)\rt\}_{k=1}^\infty$ be a convergent subsequence such that $\lim_{k \rar \infty} d^\zp(\mbf{X}^{N_k}(\cdot),\blx) = 0$. 
We now prove each of the three claims (Eqs. \eqref{eq:adr}-\eqref{eq:cdr}) separately, and index $i$ is always fixed unless otherwise stated.

{\bf Claim 1}: $\dot{\bla}_i(t) = \lambda (\blv_{i-1}(t)-\blv_{i}(t))$. Consider the sequence of trajectories $\lt\{\mbf{A}^{N_k}(\cdot)\rt\}_{k=1}^\infty$. By construction, $\mbf{A}_i^{N}(t)$ receives a jump of magnitude $\frac{1}{N}$ at time $t$ if and only if an event happens at time $t$ and the corresponding selection random variable, $U(\cdot)$, falls in the interval $\frac{\lambda}{1+\lambda}\lt[0,\buv_{i-1}(t-)-\buv_{i}(t-)\rt)$. Therefore, we can write:
\beq
\lbl{eq:ANKdiff0}
\mbf{A}^{N_k}_{i}(t+\epsilon)- \mbf{A}^{N_k}_{i}(t) = \frac{1}{N_k}\sum_{j=N_kW^{N_k}(t)}^{N_kW^{N_k}(t+\epsilon)} \mb{I}_{I_j}\big(U(j)\big), 
\eeq
where $I_j \bydef \frac{\lambda}{1+\lambda}\lt[0, \mbf{V}^{N_k}_{i-1}(t^{N_k}_j-)-\mbf{V}^{N_k}_{i}(t^{N_k}_j-)\rt)$ and  $t^N_j$ is defined to be the time of the $j$th jump in $W^N(\cdot)$, i.e.,
\beq
t^N_j\bydef \inf\lt\{s\geq0:W^N(s)\geq \frac{j}{N}\rt\}.
\eeq
Note that by the definition of a fluid limit, we have that 
\beq
\lim_{k\rar\infty} \lt(\mbf{A}^{N_k}_{i}(t+\epsilon)-\mbf{A}^{N_k}_{i}(t)\rt) = \bla_i(t+\epsilon)-\bla_i(t).
\eeq
The following lemma bounds the change in $\bla_i(t)$ on a small time interval.
\begin{lemma} Fix $i$ and $t$. For all sufficiently small $\epsilon>0$
\lbl{lm:adiff}
\beq
\lbl{eq:adiff}
\lt| \bla_i(t+\epsilon)-\bla_i(t) - \epsilon\lambda(\blv_{i-1}(t)-\blv_{i}(t)) \rt| \leq 2\epsilon^2 L
\eeq
\end{lemma}

\begin{proof} 
While the proof involves heavy notation, it is based on the fact that $\omega \in \spc$: using Lemma \ref{lm:nice1}, Eq.\ \eqref{eq:adiff} follows from Eq.\ \eqref{eq:ANKdiff0} by applying the convergence properties of $W^N(t)$ (Eq.\ \eqref{eq:nice1}) and $U(n)$ (Eq.\ \eqref{eq:nice2}).

For the rest of the proof, fix some $\omega \in \mcal{C}$. Also, fix $i\geq 1$, $t>0$, and $\epsilon>0$. Since the limiting function $\blx$ is $L$-Lipschitz-continuous on all coordinates by Proposition \ref{prop:smooth}, there exists a non-increasing sequence $\gm_n\downarrow0$ such that for all $s \in [t,t+\epsilon]$ and all sufficiently large $k$,
\beq
\lbl{eq:inclu}
\mbf{V}^{N_k}_j( s) \in \big[\blv_j(t)-(\epsilon L + \gamma_{N_k}),\blv_j(t)+(\epsilon L + \gamma_{N_k})\big), \quad j \in \{i-1,i,i+1\},
\eeq
The above leads to:\footnote{Here $[x]^+\bydef\max\{0,x\}.$}
\beqn
&\lt[0,\mbf{V}^{N_k}_{i-1}(s)-\mbf{V}^{N_k}_{i}(s)\rt) \supset \big[0, \lt[\blv_{i-1}(t)-\blv_{i}(t)-2(\epsilon L + \gamma_{N_k})\rt]^+\big), \nnb \\
\lbl{eq:intvinclu}
&\mbox{ and } \lt[0,\mbf{V}^{N_k}_{i-1}(s)-\mbf{V}^{N_k}_{i}(s)\rt) \subset \big[0,\blv_{i-1}(t)-\blv_{i}(t)+2(\epsilon L + \gamma_{N_k})\big),
\eeqn
for all sufficiently large $k$.

Define the sequence of set-valued functions $\{\eta^n(t)\}$ as
\beq
\eta^n(t) \bydef \frac{\lambda}{1+\lambda} \lt[0, \blv_{i-1}(t)-\blv_i(t)+2(\epsilon L+\gamma_{n}) \rt).
\eeq
Note that since $\gamma_n \downarrow 0$,
\beq
\lbl{eq:eta1}
\eta^{n}(t) \supset \eta^{n+1}(t) \mbox{ and } \bigcap_{n=1}^{\infty}\eta^n(t)= \frac{\lambda}{1+\lambda}\lt[0, \blv_{i-1}(t)-\blv_{i}(t)+2\epsilon L \rt].
\eeq
We have for all sufficiently large $k$, and any $l$ such that $1 \leq l \leq N_k$,
\beqn
\lbl{eq:ANKdiff}
\mbf{A}^{N_k}_{i}(t+\epsilon)-\mbf{A}^{N_k}_{i}(t) &\leq& \frac{1}{N_k}\sum_{j=N_kW^{N_k}(t)+1}^{N_kW^{N_k}(t+\epsilon)} \mb{I}_{\eta^{N_k}(t)}\lt(U(j)  \rt)  \nnb \\
&\leq& \frac{1}{N_k}\sum_{j=N_kW^{N_k}(t)+1}^{N_kW^{N_k}(t+\epsilon)} \mb{I}_{\eta^{l}(t)}\lt(U(j)  \rt)  \nnb \\
&=& \frac{1}{N_k} \lt( \sum_{j=1}^{N_kW^{N_k}(t+\epsilon)} \mb{I}_{\eta^{l}(t)}\lt(U(j)\rt) - \sum_{j=1}^{N_kW^{N_k}(t)} \mb{I}_{\eta^{l}(t)}\lt(U(j)\rt) \rt) \nnb \\
\eeqn
where the first inequality follows from the second containment in Eq. \eqref{eq:intvinclu}, and the second inequality follows from the monotonicity of $\{\eta^n(t)\}$ in Eq.~\eqref{eq:eta1}.

We would like to show that for all sufficiently small $\epsilon>0$,
\beq
\lbl{eq:adiff01}
\bla_i(t+\epsilon)-\bla_i(t) - \epsilon\lambda(\blv_{i-1}(t)-\blv_{i}(t)) \leq 2\epsilon^2 L
\eeq
To prove the above inequality, we first claim that for any interval $[a,b) \subset [0,1]$, 
\beq
\lbl{eq:core1}
\lim_{N \rar \infty} \frac{1}{N}\sum_{i=1}^{NW^N(t)}\mb{I}_{[a,b)}\lt(U(i)\rt) = (\lambda+1)t(b-a),
\eeq
To see this, rewrite the left-hand side of the equation above as
\beqn
\lbl{eq:ranlln1}
& & \lim_{N \rar \infty} \frac{1}{N}\sum_{i=1}^{NW^N(t)}\mb{I}_{[a,b)}\lt(U(i)\rt) \nnb \\
&=& \lim_{N \rar \infty} (\lambda+1) t \frac{1}{(\lambda+1) Nt}\sum_{i=1}^{(\lambda+1) Nt}\mb{I}_{[a,b)}\lt(U(i)\rt) \nnb \\
& & + \lim_{N \rar \infty} (\lambda+1) t \frac{1}{(\lambda+1) Nt} \lt( \sum_{i=1}^{NW^N(t)}\mb{I}_{[a,b)}\lt(U(i)\rt) - \sum_{i=1}^{(\lambda+1) Nt}\mb{I}_{[a,b)}\lt(U(i)\rt)\rt).
\eeqn
Because the magnitude of the indicator function $\mb{I}\{\cdot\}$ is bounded by $1$, we have
\beqn
&& \lt|\sum_{i=1}^{NW^N(t)}\mb{I}_{[a,b)}\lt(U(i)\rt) - \sum_{i=1}^{(\lambda+1) Nt}\mb{I}_{[a,b)}\lt(U(i)\rt)\rt| \leq N\lt|(\lambda+1)t - W^N(t)\rt|.
\lbl{eq:ranlln2}
\eeqn
Since $\omega \in \mcal{C}$, by Lemma \ref{lm:nice1} we have that
\beqn
\lbl{eq:ranlln3}
&&\lim_{N \rar\infty }\lt|(\lambda+1)t - W^N( t)\rt| = 0, \\
\lbl{eq:ranlln4}
&&\lim_{N \rar \infty} \frac{1}{(\lambda+1) Nt}\sum_{i=1}^{(\lambda+1) Nt}\mb{I}_{[a,b)}\lt(U(i)\rt) = b-a,
\eeqn
for any $t <\ity$. Combining Eqs.~\eqref{eq:ranlln1}$-$\eqref{eq:ranlln4}, we have 
\beqn
\lbl{eq:ranlln5}
& & \lim_{N \rar \infty} \frac{1}{N}\sum_{i=1}^{W^N(t)}\mb{I}_{[a,b)}\lt(U(i)\rt) \nnb \\
&=& (\lambda+1) t \lim_{N \rar \infty} \frac{1}{(\lambda+1) Nt}\sum_{i=1}^{(\lambda+1) Nt}\mb{I}_{[a,b)}\lt(U(i)\rt) + \lim_{N \rar\infty } \frac{1}{(\lambda+1)t} \lt|(\lambda+1)t - W^N(t)\rt|  \nnb \\
&=& (\lambda+1) t (b-a),
\eeqn
which establishes Eq.~\eqref{eq:core1}. By the same argument, Eq.~\eqref{eq:ranlln5} also holds when $t$ is replaced by $t+\epsilon$. Applying this result to Eq.~\eqref{eq:ANKdiff}, we have
\beqn
& &\bla_i(t+\epsilon)-\bla_i(t) \nnb \\
&=& \lim_{k\rar\ity} \lt ( \mbf{A}^{N_k}_{i}(t+\epsilon)-\mbf{A}^{N_k}_{i}(t) \rt) \nnb \\
&\leq& (t+\epsilon-t)(\lambda+1)\frac{\lambda}{\lambda+1}\lt[\blv_i(t)-\blv_{i-1}(t) + 2(\epsilon L + \gamma_l)\rt] \nnb \\
&=& \epsilon\lambda(\blv_{i-1}(t)-\blv_{i}(t)) + \lambda(2\epsilon^2 L + 2\epsilon\gamma_l) \nnb \\
&<& \epsilon\lambda(\blv_{i-1}(t)-\blv_{i}(t)) + 2\epsilon^2 L + 2\epsilon\gamma_l,
\eeqn
for all $l\geq 1$, where the last inequality is due to the fact that $\lambda < 1$. Taking $l \rar \ity$ and using the fact that $\gamma_l \downarrow 0$, we have established Eq.~\eqref{eq:adiff01}.

Similarly, changing the definition of $\eta^n(t)$ to 
\beq
\eta^n(t) \bydef \frac{\lambda}{1+\lambda} \big[ 0, \lt[ \blv_{i-1}(t)-\blv_{i}(t)-2(\epsilon L +\gamma_{n})\rt]^+ \big),
\eeq
we can obtain a similar lower bound 
\beq
\lbl{eq:adiff02}
\bla_i(t+\epsilon)-\bla_i(t) - \epsilon\lambda(\blv_{i-1}(t)-\blv_{i}(t)) \geq -2\epsilon^2 L,
\eeq
which together with Eq.~\eqref{eq:adiff01} proves the claim. Note that if $\blv_i(t)=\blv_{i-1}(t)$, the lower bound trivially holds because $\mbf{A}^{N_k}_{i}(t)$ is a cumulative arrival process and is hence non-decreasing in $t$ by definition. $\ftomb$
\end{proof}

Since by assumption $\bla(\cdot)$ is differentiable at $t$, Claim 1 follows from Lemma \ref{lm:adiff} by noting $\dot{\bla}_i(t) \bydef \lim_{\epsilon \downarrow 0} \frac{\bla_i(t+\epsilon)-\bla_i(t)}{\epsilon}$.

{\bf Claim 2}: $\dot{\bll}_i(t) =  (1-p)(\blv_{i}(t)-\blv_{i+1}(t))$.
Claim 2 can be proved using an identical approach to the one used to prove Claim 1. The proof is hence omitted.

{\bf Claim 3}: $\dot{\blc}_i(t)=g_i\lt(\blv\rt)$. We prove Claim 3 by considering separately the three cases in the definition of $\blv$.

\benum [(1)]
\item {\bf Case 1}:  $\dot{\blc}_i(t) = 0$, if  $\blv_{i-1}=0, \blv_i = 0$. Write 
\beq
\dot{\blc}_i(t)= \dot{\bla}_i(t)-\dot{\bll}_i(t)-\dot{\blv}_i(t).
\eeq
We calculate each of the three terms on the right-hand side of the above equation. By Claim 1, $\dot{\bla}_i(t)=\lambda(\blv_{i-1}-\blv_i)=0$, and by Claim 2, $\dot{\bll}_i(t)=\lambda(\blv_i - \blv_{i+1})=0$. To obtain the value for $\dot{\blv}_i(t)$, we use the following trick: since $\blv_i(t)=0$ \emph{and} $\blv_i$ is \emph{non-negative}, the \emph{only} possibility for $\blv_i(t)$ to be differentiable at $t$ is that  $\dot{\blv}_i(t)=0$. Since $\dot{\bla}_i(t)$, $\dot{\bll}_i(t)$, and $\dot{\blv}_i(t)$ are all zero, we have that $\dot{\blc}_i(t)=0$.

\item  {\bf Case 2}: $\dot{\blc}_i(t) = \min\{\lambda \blv_{i-1},p\}$, if  $\blv_i=0, \blv_{i-1} > 0$.

In this case, the fraction of queues with at least $i$ tasks is zero, hence $\blv_i$ receives no drift from the local portion of the service capacity by Claim 2.  First consider the case $\blv_{i-1}(t)\leq\frac{p}{\lambda}$. Here the line of arguments is similar to the one in Case 1. By Claim 1, $\dot{\bla}_i(t)=\lambda (\blv_{i-1}-\blv_i)=\lambda \blv_{i-1}$, and by Claim 2, $\dot{\bll}_i(t)=\lambda(\blv_i - \blv_{i+1})=0$. Using again the same trick as in Case 1, the non-negativity of $\blv_i$ and the fact that $\blv_i(t)=0$ together imply that we must have $\dot{\blv}_i(t)=0$. Combining the expressions for $\dot{\bla}_i(t)$, $\dot{\bll}_i(t)$, and $\dot{\blv}_i(t)$, we have
\beq
\dot{\blc}_i(t)= -\dot{\blv}_i(t)+\dot{\bla}_i(t)-\dot{\bll}_i(t) = \lambda \blv_{i-1}.
\eeq
Intuitively, here the drift due to random arrivals to queues with $i-1$ tasks, $\lambda \blv_{i-1}$, is ``absorbed'' by the central portion of the service capacity.

If $\blv_{i-1}(t)>\frac{p}{\lambda}$, then the above equation would imply that $\dot{\blc}_i(t)=\lambda\blv_{i-1}(t)>p$, if $\dot{\blc}_i(t)$ exists. But clearly $\dot{\blc}_i(t) \leq p$. This simply means $\blv_i(t)$ cannot be differentiable at time $t$, if $\blv_i(t)=0, \blv_{i-1}(t)>\frac{p}{\lambda}$.  Hence we have the claimed expression. 

\item  {\bf Case 3}: $\dot{\blc}_i(t) = p$, if  $\blv_i>0, \blv_{i+1} > 0$.

Since there is a positive fraction of queues with more than $i$ tasks, it follows that $\buv_i$ is decreased by $\frac{1}{N}$ \emph{whenever} a central token becomes available. Formally, for some small enough $\epsilon$, there exists $K$ such that $
\mbf{V}^{N_k}_i(s)>0$ for all $k \geq K, \, s \in [t,t+\epsilon]$. Given the coupling construction, this implies for all $ k \geq K$, $s \in [t,t+\epsilon]$
\beq
\mbf{V}^{N_k}_i(s)-\mbf{V}^{N_k}_i(t) = \frac{1}{N_k}\sum_{j=N_kW^{N_k}(t)}^{N_kW^{N_k}(s)}\mb{I}_{[1-\frac{p}{1+\lambda},1)}\lt(U(j)\rt). \nnb
\eeq
Using the same arguments as in the proof of Lemma \ref{lm:adiff}, we see that the right-hand side of the above equation converges to $(s-t)p+o(\ep)$ as $k \rar \infty$. Hence,$\dot{\blv}_i(t)=$ $\lim_{\epsilon \downarrow 0}\lim_{k\rar\infty} $ $\frac{\mbf{V}^{N_k}_i(t+\epsilon)-\mbf{V}^{N_k}_i(t)}{\epsilon}$ $=p$.
\eenum
Finally, note that the boundary condition $\blv_0(t)-\blv_1(t)=1$ is a consequence of the fact that $\buv_0(t)-\buv_1(t)\bydef \bus_1(t) = 1$ for all $t$. This concludes the proof of Proposition \ref{prop:drifts}. $\ftomb$
\end{proof}

\chapter{Properties of the Fluid Model}
\lbl{sec:fluidmodel}

In this chapter, we establish several important properties of the fluid model. We begin by proving Theorem \ref{thm:ssprop} in Section \ref{sec:pfinvstate}, which states that the fluid model admits a unique invariant state for each pair of $p$ and $\lambda$. Section \ref{sec:unifl} is devoted to proving the important property that for any initial condition $\blv^0 \in \overline{\spv}^\ity$, the fluid model admits a \emph{unique} solution $\blv(\cdot)$. As a corollary, it is shown that the fluid solution $\blv(\cdot)$ depend continuously on the initial condition $\blv^0$, and this technical result will be important for proving the steady-state convergence theorem in the next chapter. Using the uniqueness result and the results from the last chapter, one of our main approximation theorems, Theorem \ref{thm:transconv}, is proved in Section \ref{sec:thmtransconvpf}, which establishes the convergence of stochastic sample paths to the unique solution of the fluid model over any finite time horizon, with high probability. Finally, in Section \ref{sec:fldconv} we prove that the solutions to the fluid model are \emph{globally stable} (Theorem \ref{thm:fluidconv}), so any solution $\blv(t)$ converges to the unique invariant state $\blv^I$ as $t \rar \ity$. This suggest that the qualitative properties derived from $\blv^I$ serves as a good approximation for the transient behavior of the system. We note that by the end of this chapter, we will have establish all \emph{transient approximation results}, which correspond to the path
\beq
\buv(t) \stackrel{N \rar \ity}{\longrightarrow} \blv(t) \stackrel{t \rar \ity}{\longrightarrow} \blv^I,
\eeq
as was illustrated in Figure \ref{fig:techres} of Chapter \ref{sec:sumres}. The other path in Figure \ref{fig:techres}, namely, the approximation of the steady-state distributions of $\buv(\cdot)$ by $\blv^I$, will be dealt with in the next chapter. 

\section{Invariant State of the Fluid Model}
\lbl{sec:pfinvstate}
In this section we prove Theorem \ref{thm:ssprop}, which gives explicit expressions for the (unique) invariant state of the fluid model.

\begin{proof} {\bf (Theorem \ref{thm:ssprop})} In this proof we will be working with both $\blv^I$ and $\bls^I$, with the understanding that $\bls^I_i\bydef \blv^I_i-\blv^I_{i+1}, \forall i \geq 0$. It can be verified that the expressions given in all three cases are valid invariant states, by checking that $\mbf{F}(\blv^I)=0$. We show they are indeed unique.

First, note that if $p \geq \lambda$, then $\mbf{F}_1(\blv) < 0$ whenever $\blv_1 >0$. Since $\blv_1^I\geq 0$, we must have $\blv_1^I=0$, which by the boundary conditions implies that all other $\blv_i^I$ must also be zero. This proves case ($2$) of the theorem. 

Now suppose that $0< p < \lambda$. We will prove case ($4$). We observe that $\mbf{F}_1(\blv)>0$ whenever $\blv_1=0$. Hence $\blv^I_1$ must be positive. By Eq.\ \eqref{eq:gdef0} this implies that $g_1(\blv^I)=p$. Substituting $g_1(\blv^I)$ in Eq.\ \eqref{eq:dft}, along with the boundary condition $\blv^I_0-\blv^I_1=\bls^I_0=1$, we have
\beq
0 = \lambda\cdot1 - (1-p)\bls^I_1-p,
\eeq
which yields
\beq
\bls^I_1 = \frac{\lambda-p}{1-p}.
\eeq
Repeating the same argument, we obtain the recursion
\beq
\bls^I_i = \frac{\lambda \bls^I_{i-1}-p}{1-p},
\eeq
for as long as $\bls^I_i$ (and therefore, $\blv_i^I$) remains positive. Combining this with the expression for $\bls^I_1$, we have
\beq
\bls^I_i=\frac{1-\lambda}{1-\lt(p+\lambda\rt)}\lt(\frac{\lambda}{1-p}\rt)^i - \frac{p}{1-\lt(p+\lambda\rt)},  \quad 1 \leq i \leq i^*\lt(p,\lambda\rt),
\eeq
where $i^*\lt(p,\lambda\rt) \bydef \lt\lfloor \log_{\frac{\lambda}{1-p}}{\frac{p}{1-\lambda}}\rt\rfloor$ marks the last coordinate where $\bls^I_i$ remains non-negative. This proves uniqueness of $\bls^I_i$ up to $i\leq i^*\lt(p,\lambda\rt)$. We can then use the same argument as in case ($2$), to show that $\bls_i^I$ must be equal to zero for all $i > i^*\lt(p,\lambda\rt)$. Cases $(1)$ and $(3)$ can be established using similar arguments as those used in proving case $(4)$. This completes the proof. $\ftomb$
\end{proof}

\section {Uniqueness of Fluid Limits \& Continuous Dependence on Initial Conditions}
\lbl{sec:unifl}

We now prove Theorem \ref{thm:fluidunique}, which states that given an initial condition $\blv^0 \in \overline{\spv}^\infty$, a solution to the fluid model exists and is unique. As a direct consequence of the proof, we obtain an important corollary, that the unique solution $\blv(\cdot)$ depends \emph{continuously} on the initial condition $\blv^0$. 

The uniqueness result justifies the use of the fluid approximation, in the sense that the evolution of the stochastic system is close to a \emph{single} trajectory. The uniqueness along with the continuous dependence on the initial condition will be used to prove convergence of steady-state distributions to $\blv^I$ (Theorem~\ref{thm:convss}). 

We note that, in general, the uniqueness of solutions is not guaranteed for a differential equation with a discontinuous drift (see, e.g., \cite{AB08}). In our case, $\buf(\cdot)$ is discontinuous on the domain $\overline{\spv}^\ity$ due to the drift associated with central service tokens (Eq.~\eqref{eq:driftF}).

\begin{proof} {\bf (Theorem \ref{thm:fluidunique})} The existence of a solution to the fluid model follows from the fact that $\buv$ has a limit point (Proposition~\ref{prop:smooth}) and that all limit points of $\buv$ are solutions to the fluid model (Proposition \ref{prop:drifts}). We now show uniqueness. Define $i^p(\blv)\bydef \sup\{i: \blv_i > 0\}$.\footnote{\small $i^p(\blv)$ can be infinite; this happens if all coordinates of $\blv$ are positive.} Let $\blv(t),\blw(t)$ be two solutions to the fluid model such that $\blv(0)=\blv^0$ and $\blw(0)=\blw^0$, with $\blv^0,\blw^0\in \overline{\spv}^\infty$. At any regular point $t \geq 0$, where all coordinates of $\blv(t),\blw(t)$ are differentiable, without loss of generality, assume $i^p(\blv(t))\leq i^p(\blw(t))$, with equality if both are infinite. Let $\bla^\blv(\cdot)$ and $\bla^\blw(\cdot)$ be the arrival trajectories corresponding to $\blv(\cdot)$ and $\blw(\cdot)$, respectively, and similarly for $\bll$ and $\blc$. Since $\blv_0(t) = \blv_1(t) + 1$ for all $t\geq0$ by the boundary condition, and $\dot{\blv}_1 = \dot{\bla}^{\blv}_1-\dot{\bll}^{\blv}_1-\dot{\blc}^{\blv}_1$, for notational convenience we will write
\beq
\dot{\blv}_0 = \dot{\bla}^{\blv}_0-\dot{\bll}^{\blv}_0-\dot{\blc}^{\blv}_0,
\eeq
where
\beq
\lbl{eq:v0drift}
\dot{\bla}^{\blv}_0\bydef\dot{\bla}^{\blv}_1, \,\,\, \dot{\bll}^{\blv}_0\bydef\dot{\bll}^{\blv}_1, \,\, \mbox{ and } \dot{\blc}^{\blv}_0\bydef\dot{\blc}^{\blv}_1.
\eeq 
The same notation will be used for $\dot{\blw}(t)$. 

We have, 
\beqn
&& \frac{d}{dt}\lt\|\blv-\blw\rt\|^2_w  \bydef \frac{d}{dt}\sum_{i=0}^{\ity}\frac{\lt|\blv_i-\blw_i\rt|^2}{2^i} \stackrel{(a)}{=}\sum_{i=0}^{\ity} \frac{\lt(\blv_i-\blw_i\rt)\lt(\dot{\blv}_i-\dot{\blw}_i\rt)}{2^{i-1}} \nnb \\
&=& \sum_{i=0}^{\ity} \frac{\lt(\blv_i-\blw_i\rt)\lt[\lt(\dot{\bla}^\blv_i-\dot{\bll}_i^\blv\rt)-\lt(\dot{\bla}_i^\blw-\dot{\bll}_i^\blw\rt)\rt]}{2^{i-1}} - \sum_{i=0}^{\ity} \frac{\lt(\blv_i-\blw_i\rt)\lt(\dot{\blc}_i^\blv-\dot{\blc}^\blw_i\rt)}{2^{i-1}} \nnb \\
&\stackrel{(b)}{\leq}& C\lt\|\blv-\blw\rt\|_w^2- \sum_{i=0}^{\ity} \frac{\lt(\blv_i-\blw_i\rt)\lt(\dot{\blc}_i^\blv-\dot{\blc}^\blw_i\rt)}{2^{i-1}} \nnb \\
&=& C\lt\|\blv-\blw\rt\|_w^2- \sum_{i=0}^{i^p(\blv)}\frac{1}{2^{i-1}}\lt(\blv_i-\blw_i\rt)(p-p) \nnb \\
& & -  \frac{1}{2^{i^p(\blv)}}(0-\blw_{i^p(\blv)+1})(\min\{\lambda \blv_{i^p(\blv)},p\}-p) \nnb \\
& & - \sum_{i=i^p(\blv)+2}^{i^p(\blw)}\frac{1}{2^{i-1}}(0-\blw_i)(0-p) \nnb \\
& & - \sum_{j=i^p(\blw)+1}^{\ity}\frac{1}{2^{i-1}}(0-0)\lt(\dot{\blc}_i^\blv-\dot{\blc}_i^\blw\rt) \nnb \\
&\leq& C\lt\|\blv-\blw\rt\|_w^2,
\lbl{eq:gron1}
\eeqn
where $C=6(\lambda+1-p)$. {\color{black} We first justify the existence of the derivative $\frac{d}{dt}\lt\|\blv-\blw\rt\|^2_w$ and the exchange of limits in ($a$). Because $\blv_i(t)$ and $\blw_i(t)$ are $L$-Lipschitz-continuous for all $i$, it follows that there exists $L'>0$ such that for all $i$, $h(i,s)\bydef \lt|\blv_i(s)-\blw_i(s)\rt|^2$ is $L'$-Lipschitz-continuous in the second argument, within a small neighborhood around $s=t$. In other words,
\beq
\lbl{eq:exdiff}
\lt|\frac{h(i,t+\epsilon)-h(i,t)}{\epsilon}\rt| \leq L'
\eeq
for all $i$ and all sufficiently small $\epsilon$. Then,
\beqn
\lbl{eq:exdiff2}
& &\frac{d}{dt}\lt\|\blv-\blw\rt\|^2_w = \lim_{\epsilon\downarrow0}\sum_{i=0}^\ity 2^{-i} \frac{h(i,t+\epsilon)-h(i,t)}{\epsilon} \nnb\\
&=& \lim_{\epsilon\downarrow0} \int_{i \in \zp} \frac{h(i,t+\epsilon)-h(i,t)}{\epsilon} d\mu_{\N},
\eeqn
where $\mu_{\N}$ is a measure on $\zp$ defined by $\mu_{\N}(i)=2^{-i}, i \in \zp$. By Eq.~\eqref{eq:exdiff} and the dominated convergence theorem, we can exchange the limit and integration in Eq.~\eqref{eq:exdiff2} and obtain
\beqn
& &\frac{d}{dt}\lt\|\blv-\blw\rt\|^2_w  = \lim_{\epsilon\downarrow0} \int_{i \in \zp} \frac{h(i,t+\epsilon)-h(i,t)}{\epsilon} d\mu_{\N} \nnb\\
&=& \int_{i \in \zp} \lim_{\epsilon\downarrow0}  \frac{h(i,t+\epsilon)-h(i,t)}{\epsilon} d\mu_{\N}  \nnb \\
&=& \sum_{i=0}^{\ity} \frac{\lt(\blv_i-\blw_i\rt)\lt(\dot{\blv}_i-\dot{\blw}_i\rt)}{2^{i-1}},
\eeqn
which justifies step ($a$) in  Eq.~\eqref{eq:gron1}.} Step ($b$) follows from the fact that $\dot{\bla}$ and $\dot{\bll}$ are both continuous and linear in $\blv$ (see Eqs.~\eqref{eq:adr}--\eqref{eq:cdr}). The specific value of $C$ can be derived after some straightforward algebra, which we isolate in the following claim.
\begin{clm}
\beq
\sum_{i=0}^{\ity}\frac{\lt(\blv_i-\blw_i\rt)\lt[\lt(\dot{\bla}_i^\blv-\dot{\bll}_i^\blv\rt)-\lt(\dot{\bla}_i^\blw-\dot{\bll}_i^\blw\rt)\rt]}{2^{i-1}} \leq 6(\lambda+1-p)\lt\|\blv-\blw\rt\|_w^2,
\eeq
\end{clm}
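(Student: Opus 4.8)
The plan is to substitute the explicit linear forms of the arrival and local-service drifts and then estimate the resulting weighted quadratic form, being careful to retain the negative diagonal terms. By Proposition~\ref{prop:drifts} (Eqs.~\eqref{eq:adr}--\eqref{eq:ldr}), for every $i\geq1$ we have $\dot{\bla}^{\blv}_i-\dot{\bll}^{\blv}_i=\lambda\blv_{i-1}-(\lambda+1-p)\blv_i+(1-p)\blv_{i+1}$, and likewise for $\blw$. Writing $d_i\bydef\blv_i-\blw_i$ and using the boundary condition (which gives $d_0=\blv_0-\blw_0=\blv_1-\blw_1=d_1$) together with the convention $\dot{\bla}^{\blv}_0\bydef\dot{\bla}^{\blv}_1$, $\dot{\bll}^{\blv}_0\bydef\dot{\bll}^{\blv}_1$ of Eq.~\eqref{eq:v0drift}, the left-hand side of the claim becomes $S\bydef\sum_{i=0}^\ity 2^{-(i-1)}d_iG_i$ with $G_i\bydef(\dot{\bla}^{\blv}_i-\dot{\bll}^{\blv}_i)-(\dot{\bla}^{\blw}_i-\dot{\bll}^{\blw}_i)=\lambda d_{i-1}-(\lambda+1-p)d_i+(1-p)d_{i+1}$ for $i\geq1$ and $G_0=G_1$. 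Since $d_0=d_1$, this forces $G_1=(1-p)(d_2-d_1)$, so the $i=0$ and $i=1$ terms of $S$ together equal $3d_1G_1=-3(1-p)d_1^2+3(1-p)d_1d_2$, while each term with $i\geq2$ equals $2^{-(i-1)}\bigl[\lambda d_id_{i-1}-(\lambda+1-p)d_i^2+(1-p)d_id_{i+1}\bigr]$.

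The key step is to use, not discard, the negative diagonal coefficient in each of these terms. For $i\geq2$, viewing the bracketed expression as a downward-opening quadratic in $d_i$ and completing the square gives the pointwise bound $\lambda d_id_{i-1}-(\lambda+1-p)d_i^2+(1-p)d_id_{i+1}\leq\frac{[\lambda d_{i-1}+(1-p)d_{i+1}]^2}{4(\lambda+1-p)}\leq\frac{\lambda d_{i-1}^2+(1-p)d_{i+1}^2}{4}$, the last inequality being the weighted Cauchy--Schwarz bound $(\lambda a+(1-p)b)^2\leq(\lambda+1-p)(\lambda a^2+(1-p)b^2)$. The same completion of the square on the boundary contribution gives $3d_1G_1\leq\tfrac34(1-p)d_2^2$. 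Summing these bounds against the weights $2^{-(i-1)}$ and re-indexing the shifted sums --- using $\sum_{i\geq2}2^{-(i-1)}d_{i-1}^2\leq\|\blv-\blw\|_w^2$, $\sum_{i\geq2}2^{-(i-1)}d_{i+1}^2\leq4\|\blv-\blw\|_w^2$, and $d_2^2\leq4\|\blv-\blw\|_w^2$ --- gives $S\leq\bigl(\tfrac{\lambda}{4}+4(1-p)\bigr)\|\blv-\blw\|_w^2$, which is at most $6(\lambda+1-p)\|\blv-\blw\|_w^2$ since $6(\lambda+1-p)-\tfrac{\lambda}{4}-4(1-p)=\tfrac{23}{4}\lambda+2(1-p)\geq0$.

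I expect the main obstacle to be precisely the recognition that the negative diagonal term must be kept: a naive estimate that bounds each cross term $|d_id_{i\pm1}|$ by $\tfrac12(d_i^2+d_{i\pm1}^2)$ discards it, and because the index shift $d_{i+1}$ against the weight $2^{-(i-1)}$ produces a factor of $4$ --- and the boundary coordinate contributes a further fixed amount --- the constant obtained that way exceeds $6(\lambda+1-p)$ when $\lambda$ is small. A secondary, purely clerical point is the bookkeeping of the geometric weights under index shifts and the separate handling of the $i=0$ coordinate via Eq.~\eqref{eq:v0drift}. Conceptually, though, nothing deep is at play: once $\dot{\bla}-\dot{\bll}$ is recognized as a bounded banded linear operator on $(\R^{\zp},\|\cdot\|_w)$, the claim is just that its symmetrization with respect to the weighted inner product is bounded above --- a one-sided Lipschitz estimate of exactly the type needed to close the Gr\"onwall argument in Eq.~\eqref{eq:gron1} --- and the explicit constant $6(\lambda+1-p)$ is then a matter of the routine algebra the authors allude to.
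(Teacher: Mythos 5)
Your proof is correct, and it shares the paper's overall structure: substitute the linear drift expressions from Eqs.~\eqref{eq:adr}--\eqref{eq:ldr}, bound the per-coordinate quadratic form, and sum against the geometric weights, handling $i=0$ via Eq.~\eqref{eq:v0drift} and the identity $\blm_0=\blm_1$. Where you differ is in the per-term estimate. The paper applies AM--GM to each cross term, $\blm_i\blm_{i\pm1}\leq\tfrac12(\blm_i^2+\blm_{i\pm1}^2)$, cancels the resulting positive diagonal contribution against $-(\lambda+1-p)\blm_i^2$, and then discards the remaining \emph{negative} diagonal, arriving at the per-term bound $\tfrac{\lambda}{2}\blm_{i-1}^2+\tfrac{1-p}{2}\blm_{i+1}^2$; after summation this yields a constant of $\tfrac{3\lambda}{2}+6(1-p)\leq 6(\lambda+1-p)$. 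You instead complete the square in $\blm_i$ (retaining the diagonal explicitly) and follow with a weighted Cauchy--Schwarz step, obtaining the sharper per-term bound $\tfrac14\bigl[\lambda\blm_{i-1}^2+(1-p)\blm_{i+1}^2\bigr]$ and final constant $\tfrac{\lambda}{4}+4(1-p)$. Both close the claim; yours is numerically tighter, but since any finite $C$ suffices to run Gr\"onwall in Eq.~\eqref{eq:gron1}, the extra precision is a refinement rather than a necessity.

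One correction to the heuristic in your closing paragraph. You assert that the naive AM--GM route, which discards the negative diagonal, necessarily overshoots $6(\lambda+1-p)$ for small $\lambda$. That is not so if the $\tfrac12$ factors are carried correctly: the AM--GM per-term bound is $\tfrac{\lambda}{2}\blm_{i-1}^2+\tfrac{1-p}{2}\blm_{i+1}^2$, and summing gives $\tfrac{3\lambda}{2}+6(1-p)$, which is within the target for all $\lambda\in[0,1)$ and $p\in[0,1]$. What likely created the impression of an overshoot is that the paper's own displayed line after the AM--GM step loses a factor of $2$ (it writes $\lambda\blm_{i-1}^2+(1-p)\blm_{i+1}^2$ where $\tfrac{\lambda}{2}\blm_{i-1}^2+\tfrac{1-p}{2}\blm_{i+1}^2$ should appear); that slipped per-term bound does lead to $3\lambda+12(1-p)$, which indeed exceeds $6(\lambda+1-p)$ when $2(1-p)>\lambda$. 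So the phenomenon you noticed is real in the paper's display, but it reflects an arithmetic slip rather than a genuine obstruction in the AM--GM method, and keeping the negative diagonal is not required to hit the stated constant. Your computation itself is clean and correct.
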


\begin{proof}
Let $\blm_i\bydef\blv_i-\blw_i$. Note that for all $i\geq1$
\beqn
& & \lt(\blv_i-\blw_i\rt)\lt[\lt(\dot{\bla}_i^\blv-\dot{\bll}_i^\blv\rt)-\lt(\dot{\bla}_i^\blw-\dot{\bll}_i^\blw\rt)\rt] \nnb \\
&=& \lt(\blv_i-\blw_i\rt)\lt[\lambda(\blv_{i-1}-\blw_{i-1})-\lambda\lt(\blv_{i}-\blw_{i}\rt)  -(1-p)\lt(\blv_{i}-\blw_{i}\rt)+(1-p)\lt(\blv_{i+1}-\blw_{i+1}\rt)\rt] \nnb \\
&=& \blm_i\lt(\lambda \blm_{i-1} - \lambda \blm_i - (1-p)\blm_i + (1-p)\blm_{i+1}\rt) \nnb \\
&\leq& \frac{\lambda}{2}\lt(\blm_{i-1}^2+\blm_i^2\rt)-\lt(\lambda+1-p\rt)\blm_i^2+\frac{1-p}{2}\lt(\blm_{i}^2+\blm_{i+1}^2\rt)\nnb \\
&=& \lambda\blm_{i-1}^2+\lt(1-p\rt)\blm_{i+1}^2 -\frac{\lambda+1-p}{2}\blm_i^2 \nnb \\
&\leq& \lambda\blm_{i-1}^2+\lt(1-p\rt)\blm_{i+1}^2
\lbl{eq:m1}
\eeqn
For $i=0$, by Eq.~\eqref{eq:v0drift}, we have
\beqn \lt(\blv_0-\blw_0\rt)\lt[\lt(\dot{\bla}^\blv_0-\dot{\bll}^\blv_0\rt)-\lt(\dot{\bla}^\blw_0-\dot{\bll}^\blw_0\rt)\rt] &=& \lt(\blv_1-\blw_1\rt)\lt[\lt(\dot{\bla}^\blv_1-\dot{\bll}^\blv_1\rt)-\lt(\dot{\bla}^\blw_1-\dot{\bll}^\blw_1\rt)\rt] \nnb\\
&\leq& \lambda\blm_{0}^2+(1-p)\blm_{2}^2
\lbl{eq:m2}
\eeqn
Combining Eqs.~\eqref{eq:m1} and \eqref{eq:m1}, we have
\beqn
& & \sum_{i=0}^{\ity}\frac{\lt(\blv_i-\blw_i\rt)\lt[\lt(\dot{\bla}_i^\blv-\dot{\bll}_i^\blv\rt)-\lt(\dot{\bla}_i^\blw-\dot{\bll}_i^\blw\rt)\rt]}{2^{i-1}} \nnb \\
&\leq& 2(\lambda\blm_{0}^2+(1-p)\blm_{2}^2) + \sum_{i=1}^\ity \frac{1}{2^{i-1}}\lt(\lambda\blm_{i-1}^2+(1-p)\blm_{i+1}^2 \rt) \nnb \\
&\leq& 6(\lambda+1-p)\|\blv-\blw\|_w^2.
\eeqn
This proves the claim.
\end{proof}

Now suppose that $\blv^0=\blw^0$. By Gronwall's inequality
and Eq.\ \eqref{eq:gron1}, we have
\beq
\lbl{eq:gronw}
\lt\|\blv(t)-\blw(t)\rt\|^2_w \leq \lt\|\blv(0)-\blw(0)\rt\|^2_w e^{Ct} = 0, \quad \forall t \in [0,\infty),
\eeq
which establishes the uniqueness of fluid limits on $[0,\ity)$. $\ftomb$
\end{proof}

The following corollary is an easy, but important, consequence of the uniqueness proof.

\begin{cor}{\bf (Continuous Dependence on Initial Conditions)}
\lbl{cor:contdep}
Denote by $\blv(\blv^0,\cdot)$ the unique solution to the fluid model given initial condition $\blv^0\in \overline{\spv}^\infty$. If $\blw^n\ \in \overline{\spv}^\infty$ for all $n$, and $\|\blw^n-\blv^0\|_w\rar0$ as $n\rar\ity$, then for all $t \geq 0$,
\beq
\lim_{n\rar\infty}\|\blv(\blw^n,t)-\blv(\blv^0,t)\|_w=0.
\eeq
\end{cor}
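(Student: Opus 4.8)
The plan is to extract the continuous-dependence statement directly from the Gronwall estimate established in the uniqueness proof, rather than re-deriving anything from scratch. The key observation is that inequality \eqref{eq:gron1} was proved for \emph{two arbitrary} fluid solutions $\blv(\cdot)$ and $\blw(\cdot)$ with possibly \emph{different} initial conditions $\blv^0$ and $\blw^0$; the restriction $\blv^0 = \blw^0$ was only imposed at the very last step to conclude uniqueness. So the genuine content already on the table is
\beq
\frac{d}{dt}\lt\|\blv(\blv^0,t) - \blv(\blw^n,t)\rt\|_w^2 \leq C \lt\|\blv(\blv^0,t) - \blv(\blw^n,t)\rt\|_w^2, \qquad C = 6(\lambda + 1 - p),
\eeq
valid at every regular $t$, where I take $\blv(\cdot) = \blv(\blv^0,\cdot)$ and $\blw(\cdot) = \blv(\blw^n,\cdot)$. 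Since both solutions are coordinate-wise $L$-Lipschitz (Proposition \ref{prop:smooth}), the map $t \mapsto \|\blv(\blv^0,t)-\blv(\blw^n,t)\|_w^2$ is absolutely continuous, so integrating the differential inequality and applying Gronwall's inequality yields
\beq
\lt\|\blv(\blv^0,t) - \blv(\blw^n,t)\rt\|_w^2 \;\leq\; \lt\|\blv^0 - \blw^n\rt\|_w^2 \, e^{Ct}, \qquad \forall t \geq 0.
\eeq

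With that bound in hand the corollary is immediate: fix $t \geq 0$; as $n \to \infty$ the hypothesis $\|\blw^n - \blv^0\|_w \to 0$ makes the right-hand side $\|\blv^0 - \blw^n\|_w^2 e^{Ct} \to 0$, hence $\|\blv(\blw^n,t) - \blv(\blv^0,t)\|_w \to 0$. One should note that the bound is in fact uniform over $t$ in any compact interval $[0,T]$, since $e^{Ct} \leq e^{CT}$ there; this stronger uniform-on-compacts statement is what actually gets used later (for the steady-state convergence theorem), so I would record it even though the corollary as stated only asks for pointwise-in-$t$ convergence.

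The only point that needs a word of care — and the place I'd expect a referee to look — is the justification that the differential inequality \eqref{eq:gron1} really does hold at almost every $t$ for solutions with distinct initial data, i.e. that nothing in the derivation of \eqref{eq:gron1} secretly used $\blv^0 = \blw^0$. Re-reading that computation, every step is local in $t$: step $(a)$ is the dominated-convergence argument for differentiating the weighted $\ell_2$ norm term-by-term, which only uses coordinate-wise Lipschitz continuity of each solution; step $(b)$ uses that $\dot{\bla}$ and $\dot{\bll}$ are continuous and linear in the state (the Claim); and the cancellation of the $\dot{\blc}$ terms uses only the drift formula \eqref{eq:gdef0} together with the ordering convention $i^p(\blv(t)) \leq i^p(\blw(t))$, which can be arranged pointwise without loss of generality. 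So no modification is needed and the corollary follows. The remaining ingredient, Gronwall's inequality itself, applies to the absolutely continuous nonnegative function $\phi(t) \bydef \|\blv(\blv^0,t)-\blv(\blw^n,t)\|_w^2$ satisfying $\dot{\phi}(t) \leq C\phi(t)$ a.e., giving $\phi(t) \leq \phi(0)e^{Ct}$, and $\phi(0) = \|\blv^0 - \blw^n\|_w^2$ by the initial conditions. This completes the argument; there is no real obstacle, the work having already been done inside the proof of Theorem \ref{thm:fluidunique}.
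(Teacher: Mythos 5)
Your proof is correct and follows exactly the route the paper takes: both observe that the Gronwall bound $\|\blv(\blv^0,t)-\blv(\blw^n,t)\|^2_w \leq \|\blv^0-\blw^n\|^2_w\,e^{Ct}$ derived in the proof of Theorem~\ref{thm:fluidunique} never used equality of initial conditions, and letting $n\to\infty$ then gives the corollary immediately. Your extra remarks (verifying nothing in the derivation of \eqref{eq:gron1} secretly assumed coincident initial data, and noting uniformity over compact time intervals) are sound and useful, but they do not constitute a different argument.
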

\begin{proof}
The continuity with respect to the initial condition is a direct consequence of the inequality in Eq.~\eqref{eq:gronw}: if $\blv(\blw^n,\cdot)$ is a sequence of fluid limits with initial conditions $\blw^n \in \overline{\spv}^\ity$ and if $\|\blw^n-\blv^0\|^2_w\rar0$ as $N \rar \ity$, then for all $t \in [0,\ity)$,
\beq
\lt\|\blv(\blv^0,t)-\blv(\blw^n,t)\rt\|^2_w \leq \lt\|\blv^0-\blw^n\rt\|^2_w e^{Ct} \rar 0, \quad \mbox{as } n \rar \ity. \nnb
\eeq
This completes the proof. $\ftomb$
\end{proof}

\subsection{$\blv(\cdot)$ versus $\bls(\cdot)$, and the Uniqueness of Fluid Limits} 
\lbl{subsec:vvss}
As was mentioned in Section \ref{sec:sysstate}, we have chosen to work primarily with the aggregate queue length process, $\buv(\cdot)$ (Eq.~\eqref{eq:normq}), instead of the \emph{normalized queue length process}, $\bus(\cdot)$ (Eq.~\eqref{eq:aggq}). We offer some justification for such a choice in this section.

Recall that for any finite $N$, the two processes are related by simple transformations, namely,
\beqn
& \buv_i\lt(t\rt) \bydef \sum_{j=i}^\infty \bus_j\lt(t\rt), \quad i \geq 0. \nnb \\
& \mbox{ and } \bus_i(t) \bydef \buv_i(t)-\buv_{i+1}(t), \quad i \geq 0. \nnb
\eeqn
Therefore, there seems to be no obvious reason to favor one representation over the other when $N$ is finite. However, in the limit of $N \rar \ity$, it turns out that the \emph{fluid model} associated with $\buv(\cdot)$ is much easier to work with in establishing the important property of the uniqueness of fluid solutions (Theorem \ref{thm:fluidunique}).

A key to the proof of Theorem \ref{thm:fluidunique} is a contraction of the drift associated with $\blv(\cdot)$ (Eq.~\eqref{eq:gron1}), also known as the one-sided Lipschitz continuity (OSL) condition in the dynamical systems literature (see, e.g., \cite{AB08}). We first give a definition of OSL that applies to our setting.

\begin{defn}
\lbl{def:onel}
Let the coordinates of $\R^\ity$ be indexed by $\zp$ so that $\blx = (x_0, x_1, x_2, \ldots)$ for all $\blx \in \R^\ity$. A function  $\buh: \R^\ity \rar \R^\ity$ is said to be one-sided Lipschitz-continuous (OSL) over a subset $D \subset \R^\ity$, if there exists a constant $C$, such that for every $\blx, \bly \in D$, 
\beq
\lbl{eq:onesideL}
\lt\langle \blx-\bly, \buh(\blx)-\buh(\bly) \rt\rangle_w \leq C \lt\|\blx-\bly\rt\|_w^2,
\eeq
where the inner product $\lt\langle \cdot, \cdot \rt\rangle_w$ on $\R^\ity$ is defined by
\beq
\lt\langle \blx, \bly \rt\rangle_w \bydef \sum_{i=0}^\ity \frac{x_i y_i}{2^i}.
\eeq
\end{defn}

What is the usefulness of the above definition in the context of proving uniqueness of solutions to a fluid model? Recall that $\buf(\cdot)$ is the drift of the fluid model, as in Eq.~\eqref{eq:driftF}, i.e.,
\beq
\dot{\blv}\lt(t\rt) = \buf\lt(\blv(t)\rt),
\eeq
whenever $\blv(\cdot)$ is differentiable at $t$. Let $\blv(\cdot)$ and $\blw(\cdot)$ be two solutions to the fluid model such that both are differentiable at $t$, as in the proof of Theorem \ref{thm:fluidunique}. We have
\beq
\frac{d}{dt}\lt\|\blv(t)-\blw(t)\rt\|^2_w = 2\lt\langle \blv(t)-\blw(t), \buf(\blv(t))-\buf(\blw(t)) \rt\rangle_w.
\eeq
Therefore, if $\buf(\cdot)$ is one-sided Lipschitz-continuous, as defined by Eq.~\eqref{eq:onesideL}, we immediately obtain the key inequality in Eq.~\eqref{eq:gron1}, from which the uniqueness of fluid solutions follows. The computation carried out in Eq.~\eqref{eq:gron1} was essentially verifying the OSL condition of $\buf(\cdot)$ on the domain $\overline{\spv}^\ity$.

For the state representation based on $\bls(\cdot)$, can one use the same proof technique to show the uniqueness of $\bls(\cdot)$ by working directly with the drift associated with $\bls(\cdot)$? Recall that
\beq
\bls_i(t) \bydef \blv_i(t) - \blv_{i+1}(t), \quad \forall i \geq 0,
\eeq
so that at all $t$ where $\blv(t)$ is differentiable, the drift $\dot{\bls}(t)$ is given by
\beq
\lbl{eq:drifth}
\buh_i(\bls(t)) = \dot{\bls}_i (t) = \dot{\blv}_i (t)-\dot{\blv}_{i+1}(t) = \lambda(\bls_{i-1}-\bls_i)-(1-p)(\bls_{i}-\bls_{i+1}) - g^s_i(\bls),
\eeq
for all $i \geq 1$, where $g^s_i(\bls) \bydef g_i(\blv)-g_{i+1}(\blv)$, i.e.,

\begin{equation}
\lbl{eq:driftgs}
g^s_i\lt(\bls\rt) = \lt\{ \begin{array}{ll}
0, & \bls_i>0, \bls_{i+1}>0, \\
p-\min\lt\{\lambda\bls_{i},p\rt\}, & \bls_i>0, \bls_{i+1}=0, \\
\min\lt\{\lambda\bls_{i-1}, p\rt\}, & \bls_i=0, \bls_{i-1}>0, \\
0, & \bls_i=0, \bls_{i-1}=0.
\end{array} \right.
\end{equation}

Interestingly, it turns out that the drift $\buh(\cdot)$, defined in Eq.~\eqref{eq:drifth}, does not satisfy the one-sided Lipschitz continuity condition in general. We show this fact by inspecting a specific example. To keep the example as simple as possible, we consider a degenerate case.

\begin{clm} 
\lbl{clm:discH}
If $\lambda=0$ and $p=1$, then $\buh(\cdot)$ is not one-sided Lipschitz-continuous on its domain $\overline{\sps}^\ity$, where $\overline{\sps}^\ity$ was defined in Eq.~\eqref{eq:spsity} as
\beq
\overline{\sps}^\ity \bydef \lt\{\bls \in \sps: \sum_{i=1}^\infty \bls_i < \infty \rt\}.  \nnb
\eeq
\end{clm}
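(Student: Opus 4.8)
The plan is to show that the one-sided Lipschitz inequality \eqref{eq:onesideL} must fail for \emph{every} constant $C$, by exhibiting a pair of states $\blx,\bly \in \overline{\sps}^\ity$ that are arbitrarily close in $\|\cdot\|_w$ but satisfy
\[
\lt\langle \blx-\bly,\ \buh(\blx)-\buh(\bly)\rt\rangle_w \ >\ C\,\lt\|\blx-\bly\rt\|_w^2 .
\]
The first step is to write $\buh$ out in the degenerate regime $\lambda=0$, $p=1$. In \eqref{eq:drifth} the linear terms disappear, so $\buh_i(\bls)=-g^s_i(\bls)$ for $i\geq1$; and inspecting \eqref{eq:driftgs} with $\lambda=0$, $p=1$, one finds $g^s_i(\bls)=1$ exactly when $\bls_i>0$ and $\bls_{i+1}=0$, and $g^s_i(\bls)=0$ in all other cases. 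Hence, for a finitely supported $\bls$, the vector $\buh(\bls)$ has a single nonzero entry, equal to $-1$, sitting at the last positive coordinate $m(\bls):=\sup\{i:\bls_i>0\}$. This already explains the mechanism of the failure: two states close in $\|\cdot\|_w$ but with different last-positive coordinates have drifts whose difference has norm bounded away from $0$, so the inner product on the left can be made large relative to the tiny squared distance on the right.

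Concretely, I would fix $C>0$, pick $\delta\in(0,\tfrac12)$ and a much smaller $\epsilon\in(0,\delta)$, and take
\[
\blx=(1,\ \tfrac12,\ 0,\ 0,\ \ldots), \qquad \bly=(1,\ \tfrac12+\delta,\ \epsilon,\ 0,\ 0,\ \ldots).
\]
Both lie in $\overline{\sps}^\ity$: each is non-increasing ($1\geq\tfrac12+\delta\geq\epsilon\geq0$ using $\delta<\tfrac12$), has first entry $1$, and has finite support. Their last positive coordinates are $1$ and $2$, so from the description of $\buh$ above, $\buh(\blx)=(0,-1,0,0,\ldots)$ and $\buh(\bly)=(0,0,-1,0,\ldots)$ --- note in particular that for $\bly$ the coordinate $i=1$ falls in the case $\bls_1>0,\bls_2>0$ of \eqref{eq:driftgs}, so $g^s_1(\bly)=0$, whereas for $\blx$ it falls in the case $\bls_1>0,\bls_2=0$, so $g^s_1(\blx)=1$. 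A direct computation using the definitions \eqref{eq:infnorm} of $\|\cdot\|_w$ and of $\lt\langle\cdot,\cdot\rt\rangle_w$ then gives
\[
\lt\langle \blx-\bly,\ \buh(\blx)-\buh(\bly)\rt\rangle_w=\frac{\delta}{2}-\frac{\epsilon}{4}, \qquad \lt\|\blx-\bly\rt\|_w^2=\frac{\delta^2}{2}+\frac{\epsilon^2}{4}.
\]
Letting $\epsilon\downarrow0$ with $\delta$ fixed, the ratio of these two quantities tends to $1/\delta$; choosing $\delta=\tfrac{1}{2(C+1)}$ (which is $<\tfrac12$) and then $\epsilon$ small enough makes the left-hand side exceed $C$ times the right-hand side. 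Since $C$ was arbitrary, no single $C$ can make \eqref{eq:onesideL} hold on $\overline{\sps}^\ity$, which is the claim.

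I do not expect a genuine obstacle here: once the reduced form of $\buh$ is written down the argument is a short explicit computation. The only points that need care are (i) checking that $\blx$ and $\bly$ really lie in the domain $\overline{\sps}^\ity$ (non-increasing, first entry $1$, summable), and (ii) correctly reading off which case of \eqref{eq:driftgs} is triggered at each coordinate of each state --- which is precisely where the drift discontinuity does its work. This should be contrasted with the $\blv$-representation, where the analogous estimate is established in Eq.~\eqref{eq:gron1} in the proof of Theorem~\ref{thm:fluidunique} with the explicit constant $C=6(\lambda+1-p)$; the present example pinpoints why that estimate has no counterpart for the $\bls$-drift $\buh$, and hence why the thesis works with $\blv(\cdot)$ rather than $\bls(\cdot)$.
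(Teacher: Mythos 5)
Your proof is correct and takes essentially the same route as the paper: the paper's proof uses $\bls^a=(1,\alpha,0,\ldots)$ and $\bls^b=(1,\alpha+\epsilon,\beta,0,\ldots)$, computes the drift difference $(0,1,-1,0,\ldots)$, and obtains the same inner product $\tfrac{1}{2}\epsilon-\tfrac{1}{4}\beta$ against the squared norm $\tfrac{1}{2}\epsilon^2+\tfrac{1}{4}\beta^2$, concluding by letting $\beta\downarrow 0$ with $\epsilon$ small. Your only changes are cosmetic (fixing $\alpha=\tfrac12$, renaming $(\epsilon,\beta)$ as $(\delta,\epsilon)$, and being slightly more explicit in the choice of constants), so the argument is the same one the paper gives.
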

\begin{proof}
We will look at a specific instance where the condition \eqref{eq:onesideL} cannot be satisfied for any $C$. For the rest of the proof, a vector $\bls \in \overline{\sps}^\ity$ will be written explicitly as $\bls=(\bls_0,\bls_1,\bls_2,\ldots)$.  Consider two vectors
\beq
\lbl{eq:sab}
\bls^a = (1, \alpha, 0, 0, \ldots) \mbox{ and } \bls^b = (1, \alpha+\epsilon, \beta, 0, 0, \ldots), 
\eeq
where $1 \geq \alpha + \epsilon \geq \beta > 0$, $1 \geq \alpha > 0$, and $\bls^a_i=\bls^b_i=0$ for all $i \geq 3$. Note that $\bls^a_i,\bls^b_i \in \overline{\sps}^\ity$.  

To prove the claim, it suffices to show that for any value of $C$, there exist some values of $\alpha$, $\beta$, and $\epsilon$ such that
\beq
\lbl{eq:invalidosl}
\lt\langle \bls^b-\bls^a, \buh(\bls^b)-\buh(\bls^a) \rt\rangle_w  >  C\lt\|\bls^b-\bls^a\rt\|^2_w.
\eeq
Since $\lambda=0$ and $p=1$, by the definition of $\buh(\cdot)$ (Eqs.~\eqref{eq:drifth} and \eqref{eq:driftgs}), we have
\beq
\lbl{eq:hsab}
\buh(\bls^a)=(0, -1, 0, 0, \ldots) \mbox{ and } \buh(\bls^b) = (0, 0, -1, 0, 0, \ldots). 
\eeq
Combining Eqs.~\eqref{eq:sab} and \eqref{eq:hsab}, we have
\beqn
\bls^b-\bls^a &=& (0, \epsilon, \beta, 0, 0, \ldots),  \nnb \\
\mbox{and } \buh(\bls^b)-\buh(\bls^a) &=& (0, 1, -1, 0, 0, \ldots), \nnb
\eeqn
which yields
\beq
\lbl{eq:onellhs}
 \lt\langle \bls^b-\bls^a, \buh(\bls^b)-\buh(\bls^a) \rt\rangle_w = \frac{1}{2}\epsilon-\frac{1}{4}\beta.
\eeq
Since 
\beq
\lbl{eq:onellhr}
C\lt\|\bls^b-\bls^s\rt\|^2_w \bydef C\sum_{i=0}^{\ity} \frac{1}{2^i}\lt(\bls^b_i-\bls^a_i\rt)^2 = C\lt(\frac{1}{2}\epsilon^2+ \frac{1}{4}\beta^2\rt),
\eeq
we have that for all $C$ and all $\epsilon<\frac{1}{C}$, 
\beq
\lbl{eq:hineq}
\lt\langle \bls^a-\bls^b, \buh(\bls^b)-\buh(\bls^a) \rt\rangle_w = \frac{1}{2}\epsilon-\frac{1}{4}\beta > C\lt(\frac{1}{2}\epsilon^2+ \frac{1}{4}\beta^2\rt) = C\lt\|\bls^a-\bls^b\rt\|^2_w,
\eeq
for all sufficiently small $\beta$, which proves Eq. \eqref{eq:invalidosl}. This completes the proof of the claim.
\end{proof}

Claim \ref{clm:discH} indicates that a direct proof of uniqueness of fluid solutions using the OSL property of the drift will not work for $\bls(\cdot)$. The uniqueness of $\bls(\cdot)$ should still hold, but the proof can potentially be much more difficult, requiring an examination of all points of discontinuity of $\buh(\cdot)$ on the domain $\overline{\sps}^\ity$.

We now give some intuition as for why the discontinuity in Claim \ref{clm:discH} occurs for $\buh(\cdot)$, but not for $\buf(\cdot)$. The key difference lies in the expressions of the \emph{drifts due to central service tokens} in two fluid models, namely, $g(\cdot)$ (Eq.~\eqref{eq:gdef0}) for $\blv(\cdot)$ and $g^s(\cdot)$ (Eq.~\eqref{eq:driftgs}) for $\bls(\cdot)$. For $g^s(\cdot)$, note that
\beqn
\lbl{eq:gs1}
&g_i^s(\bls) = 0, \mbox{ if } \bls_i>0 \mbox{ and } \bls_{i+1}>0, \\
\lbl{eq:gs2}
& \mbox{and } g_i^s(\bls) = p-\min\lt\{\lambda\bls_{i},p\rt\}, \mbox{ if } \bls_i>0 \mbox{ and } \bls_{i+1}=0.
\eeqn
In other words, the $i$th coordinate of $\bls(t)$, $\bls_i(t)$ receives \emph{no drift} due to the central service tokens if there is a strictly positive fraction of queues in the system with at least $i+1$ tasks, that is, if $\bls_{i+1}(t)>0$ (Eq.~\eqref{eq:gs1}). However, as soon as $\bls_{i+1}(t)$ becomes zero, $\bls_i(t)$ immediately receives a strictly positive amount of drift due to the central service tokens (Eq.~\eqref{eq:gs2}), as long as $\lambda\bls_{i}(t)<p$. Physically, since the central server always targets the longest queues, this means that when $\bls_{i+1}(t)$ becomes zero, the set of queues with exactly $i$ tasks becomes the longest in the system, and begins to receive a positive amount of attention from the central server. Such a sudden change in the drift of $\bls_i(t)$ as a result of $\bls_{i+1}(t)$ hitting zero is a main cause of the failure of the OSL condition, and this can be observed in Eq.~\eqref{eq:hineq} as $\beta \rar 0$. In general, the type of discontinuities that was exploited in the proof of Claim \ref{clm:discH} can happen at infinitely many points in $\overline{\sps}^\ity$. The particular choices of $\lambda=0$ and $p=1$ were non-essential, and were only chosen to simplify the calculations. 

We now turn to the expression for $g(\cdot)$, the drift of $\blv(\cdot)$ due to the central service tokens. We have that
\beq
g_i(\blv) = p, \mbox{ whenever } \blv_i>0.
\eeq
Note that the above-mentioned discontinuity in $g^s(\cdot)$ is not present in $g(\cdot)$. This is not surprising: since
$\blv_{i}(t)\bydef\sum_{j=i}^\ity s_j(t)$, $\blv_i(t)$ receives a \emph{constant amount} of drift from the central service token as long as $\blv_i(t)>0$, \emph{regardless} of the values of $\blv_{j}(t), j \geq i+1$. By adding up the coordinates $\bls_j(\cdot), j\geq i$, to obtain $\blv_i(\cdot)$, we have effectively eliminated many of the drift discontinuities in $\bls(\cdot)$. This is a key reason for the one-sided Lipschitz continuity condition to hold for $\buf(\cdot)$. 

To illustrate this ``smoothing'' effect, consider again the examples of $\bls^a$ and $\bls^b$ in Eq.~\eqref{eq:sab}. In terms of $\blv$, we have
\beq
\lbl{eq:vab}
\blv^a = (1+\alpha, \alpha, 0, 0, \ldots) \mbox{ and } \blv^b = (1+\alpha+\epsilon+\beta, \alpha+\epsilon+\beta, \beta, 0, 0, \ldots).
\eeq
We then have
\beq
\lbl{eq:fvab}
\buf(\blv^a)=(-1, -1, 0, 0, \ldots) \mbox{ and } \buf(\blv^b) = (-1, -1, -1, 0, 0, \ldots). 
\eeq
Combining Eqs.~\eqref{eq:vab} and \eqref{eq:fvab}, we have
\beqn
\blv^b-\blv^a &=& (\epsilon+\beta, \epsilon+\beta, \beta, 0, 0, \ldots),  \nnb \\
\mbox{and } \buf(\blv^b)-\buf(\blv^a) &=& (0, 0, -1, 0, 0, \ldots). \nnb
\eeqn
This implies that for all $C \geq 0$,
\beq
\lbl{eq:fineq}
\lt\langle \blv^a-\blv^b, \buf(\blv^b)-\buf(\blv^a) \rt\rangle_w = -\frac{1}{4}\beta \leq C\lt\|\blv^a-\blv^b\rt\|^2_w,
\eeq
for all $\beta \geq 0$. Contrasting Eq.~\eqref{eq:fineq} with Eq.~\eqref{eq:hineq}, notice that the $\frac{1}{2}\epsilon$ term is no longer present in the expression for the inner product, as a result of the additional ``smoothness'' of $\buf(\cdot)$. Therefore, unlike in the case of $\buh(\cdot)$, the OSL condition for $\buf$ does not break down at $\blv^a$ and $\blv^b$. 

\begin{figure*}[h]
\lbl{fig:trajcompar}
\centering
\subfigure
{
\label{fig:discompares}
\centering
\includegraphics[scale=0.33]{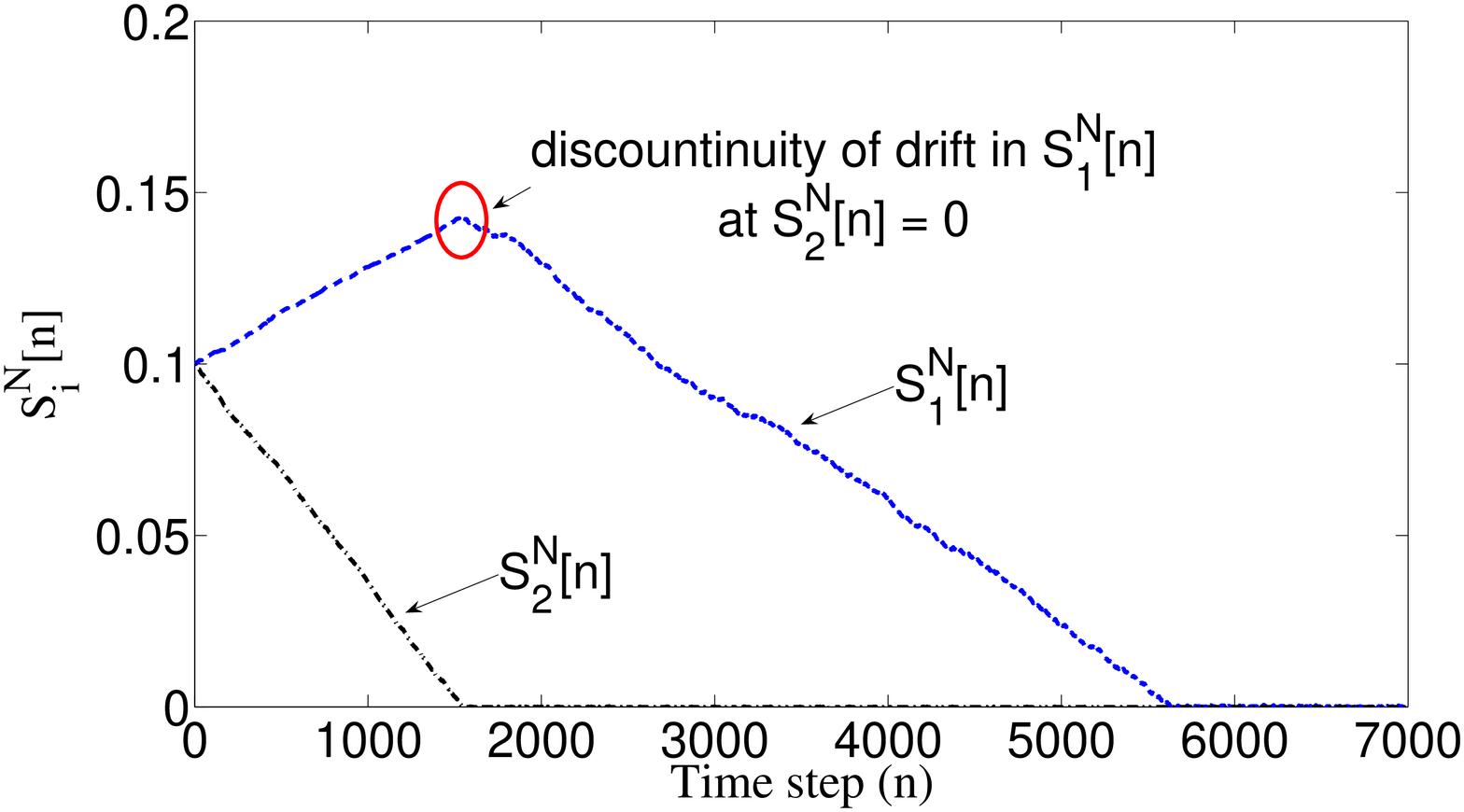}
}
\subfigure
{
\label{fig:discomparev}
\centering
\includegraphics[scale=0.33]{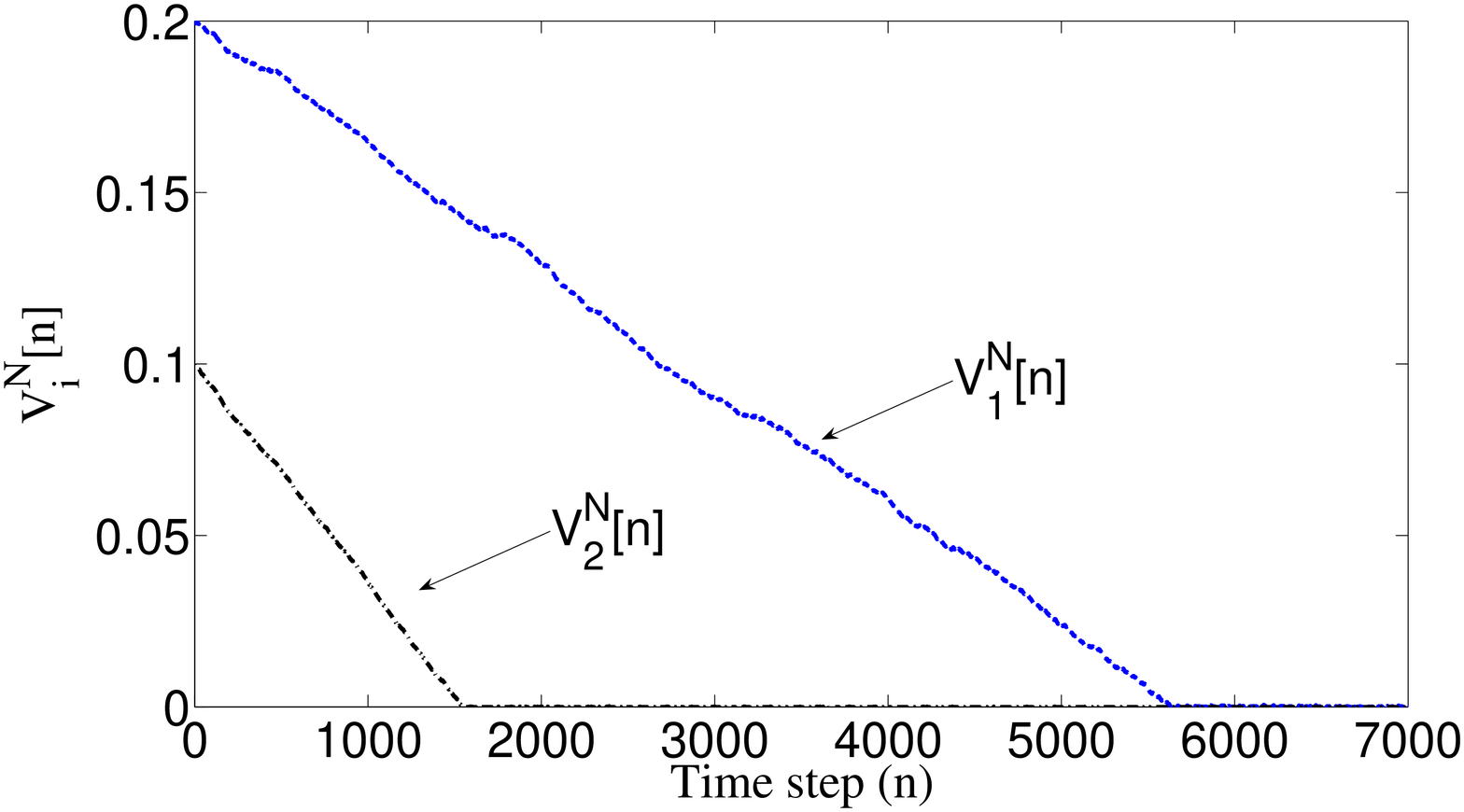}
}
\caption{Comparison between the $\buv[\cdot]$ and $\bus[\cdot]$ representations.}
\end{figure*}

The difference in drift patterns described above can also be observed in finite systems. The two graphs in Figure \ref{fig:trajcompar} display the \emph{same} sample path of the embedded discrete-time Markov chain, in the representations of $\bus$ and $\buv$, respectively. Here $N=10000$, $p=1$, and $\lambda=0.5$, with an initial condition $\bus[0] = (1, 0.1, 0.1, 0, 0, \ldots)$ (i.e., $100$ queues contain $2$ tasks and the rest of queues are empty). Notice that when $\bus_2[n]$ hits zero, $\bus_1[n]$ immediately receives an extra amount of downward drift. On the other hand, there is no change in drift for $\buv_1[n]$ when $\buv_2[n]$ hits zero. This is consistent with the previous analysis on the fluid models.

In summary, the difficulty of proving the uniqueness of fluid solutions is greatly reduced by choosing an appropriate state representation, $\blv(\cdot)$. The fact that such a simple (linear) transformation from $\bls(\cdot)$ to $\blv(\cdot)$ can create one-sided Lipschitz continuity and greatly simplify the analysis may be of independent interest.

\section{Convergence to Fluid Solution over a Finite Horizon}
\lbl{sec:thmtransconvpf}

We now prove Theorem \ref{thm:transconv}. 

\begin{proof} {\bf (Theorem \ref{thm:transconv})}
The proof follows from the sample-path tightness in Proposition \ref{prop:smooth} and the uniqueness of fluid limits from Theorem \ref{thm:fluidunique}. By assumption, the sequence of initial conditions $\mbf{V}^{(0,N)}$ converges to some $\blv^0\in\overline{\spv}^\ity$, in probability. Since the space $\overline{\spv}^\ity$ is separable and complete under the $\|\cdot\|_w$ metric, by Skorohod's representation theorem, we can find a probability space $(\Omega_0,\mcal{F}_0,\pb_0)$ on which $\mbf{V}^{(0,N)} \rar \blv^0$ almost surely. By Proposition \ref{prop:smooth} and Theorem \ref{thm:fluidunique}, for almost every $\omega \in \Omega$, any subsequence of $\buv(\omega,t)$ contains a further subsequence that converges to the unique fluid limit $\blv(\blv^0,t)$ uniformly on any compact interval $[0,T]$. Therefore for all $T<\ity$,
\beq
\hspace{-4.5pt}\lim_{N\rar\ity}\sup_{t \in [0,T]}\lt\|\buv(\omega,t)-\blv(\blv^0,t)\rt\|_w=0, \quad \pb\hspace{-3pt}-\hspace{-3pt}\mbox{almost surely,}
\eeq
which implies convergence in probability, and Eq.~\eqref{eq:probconv} holds.$\ftomb$
\end{proof}

\section {Convergence to the Invariant State $\blv^I$}
\lbl{sec:fldconv}
We will prove Theorem \ref{thm:fluidconv} in this section. We switch to the alternative state representation, $\bls(t)$, where 
\beq
\lbl{eq:sav}
\bls_i(t) \bydef \blv_{i+1}(t) -\blv_i(t), \quad \forall i \geq 0, 
\eeq
to study the evolution of a fluid solution as $t \rar \ity$. It turns out that a nice monotonicity property of the evolution of $\bls(t)$ induced by the drift structure will help establish the convergence to the invariant state. We recall that $\bls_0(t)=1$ for all $t$, and that for all points where $\blv$ is differentiable,
\beq
\dot{\bls}_i (t) = \dot{\blv}_i (t)-\dot{\blv}_{i+1}(t) = \lambda(\bls_{i-1}-\bls_i)-(1-p)(\bls_{i}-\bls_{i+1}) - g^s_i(\bls), \nnb
\eeq
for all $i \geq 1$, where $g^s_i(\bls) \bydef g_i(\blv)-g_{i+1}(\blv)$.
Throughout this section, we will use both representations $\blv(t)$ and $\bls(t)$ to refer to the \emph{same} fluid solution, with their relationship specified in Eq.\ \eqref{eq:sav}. 

The approach we will be using is essentially a variant of the convergence proof given in \cite{DOB96}. The idea is to partition the space $\overline{\sps}^\ity$ into dominating classes, and show that ($i$) dominance in initial conditions is preserved by the fluid model, and ($ii$) any solution $\bls(t)$ to the fluid model with an initial condition that dominates or is dominated by the invariant state $\bls^I$ converges to $\bls^I$ as $t\rar \ity$. Properties ($i$) and ($ii$) together imply the convergence of the fluid solution $\bls(t)$ to $\bls^I$, as $t \rar \ity$, for any finite initial condition. It turns out that such dominance in $\bls$ is much stronger than a similarly defined relation for $\blv$. For this reason we do not use $\blv$ but instead rely on $\bls$ to establish the result.

\begin{defn}{\bf (Coordinate-wise Dominance)} 
For any $\bls, \bls' \in \overline{\sps}^\ity$, we write 
\benum 
\item $\bls \succeq \bls'$ if $\bls_i \geq \bls'_i$, for all $i \geq 0$.
\item $\bls \succ \bls'$ if $\bls \neq \bls'$, $\bls \succeq \bls'$ and $\bls_i > \bls'_i$, for all $i \geq 1$ where $\bls'_i>0$. \footnote{We need the condition $\bls \neq \bls'$ in order to rule out the case where $\bls=\bls'=0$.}
\eenum
\end{defn}
The following lemma states that $\succeq$-dominance in initial conditions is preserved by the fluid model.
\begin{lemma}
\lbl{lm:dom}
Let $\bls^1(\cdot)$ and $\bls^2(\cdot)$ be two solutions to the fluid model such that $\bls^1(0) \succeq \bls^2(0)$. Then $\bls^1(t) \succeq \bls^2(t), \forall t \geq 0$.
\end{lemma}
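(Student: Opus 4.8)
\emph{Proof plan.} The statement is an order-preservation (comparison) property of the $\bls$-dynamics, so the natural route is to verify a \textbf{quasimonotonicity} (Kamke-type) condition for the drift $\buh(\cdot)$ from Eq.~\eqref{eq:drifth} and then run a first-violation-time argument. Two preliminary observations are used throughout. First, every coordinate of a fluid solution is Lipschitz continuous: the drift of $\blv$ is bounded by $\lambda+1$ in absolute value, hence each $\bls_i(\cdot)=\blv_i(\cdot)-\blv_{i+1}(\cdot)$ is Lipschitz, so each difference $\bls^1_i(\cdot)-\bls^2_i(\cdot)$ is absolutely continuous. Second, call $t$ a \emph{regular time} if every coordinate of $\bls^1(t)$ and of $\bls^2(t)$ is differentiable there; regular times have full Lebesgue measure, and at such a time $\dot{\bls}^k_i(t)=\buh_i(\bls^k(t))$.

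The heart of the argument is the following \emph{quasimonotonicity lemma}: if $\bls\succeq\bls'$ in $\overline{\sps}^\infty$ and $\bls_i=\bls'_i$ for some fixed $i\geq1$, then $\buh_i(\bls)\geq\buh_i(\bls')$. Since the terms in $\bls_i=\bls'_i$ cancel, one has $\buh_i(\bls)-\buh_i(\bls')=\lambda(\bls_{i-1}-\bls'_{i-1})+(1-p)(\bls_{i+1}-\bls'_{i+1})-\big(g^s_i(\bls)-g^s_i(\bls')\big)$, and the first two terms are nonnegative. The $g^s$-term is handled by a short case analysis on the common value $\bls_i=\bls'_i$, using Eq.~\eqref{eq:driftgs}: (i) if $\bls_i>0$ and $\bls_{i+1}>0$, then $g^s_i(\bls)=0\leq g^s_i(\bls')$; (ii) if $\bls_i>0$ and $\bls_{i+1}=0$, then also $\bls'_{i+1}=0$ (since $0\leq\bls'_{i+1}\leq\bls_{i+1}=0$), and in this regime $g^s_i$ depends only on the common value $\bls_i=\bls'_i$, so $g^s_i(\bls)=g^s_i(\bls')$; (iii) if $\bls_i=0$, then $\bls_{i+1}=\bls'_{i+1}=0$ (by monotonicity of $\bls$ and $\bls\succeq\bls'$), and $\lambda(\bls_{i-1}-\bls'_{i-1})-(g^s_i(\bls)-g^s_i(\bls'))=\max\{\lambda\bls_{i-1}-p,0\}-\max\{\lambda\bls'_{i-1}-p,0\}\geq0$ because $x\mapsto\max\{\lambda x-p,0\}$ is nondecreasing and $\bls_{i-1}\geq\bls'_{i-1}$. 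The coordinate $i=0$ is trivial since $\bls_0\equiv1$. The important point is that although $\buh$ is \emph{discontinuous} on the sets $\{\bls_i=0\}$ — which is exactly why $\buh$ fails one-sided Lipschitz continuity, as explained in Section~\ref{subsec:vvss} — these discontinuities are ``order-compatible,'' so the one-sided inequality above still holds; this is precisely what lets the comparison go through in the $\bls$-representation.

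Given the lemma, suppose for contradiction that $\tau:=\inf\{t\geq0:\bls^1(t)\not\succeq\bls^2(t)\}<\infty$. By coordinatewise continuity $\bls^1(\tau)\succeq\bls^2(\tau)$, while there exist $t_n\downarrow\tau$ with $\bls^1(t_n)\not\succeq\bls^2(t_n)$. To rule this out, one compares $\bls^2$ against a strictly-ordered proxy: for $\epsilon>0$ construct a strict supersolution $\blw^\epsilon$ of the $\bls$-fluid model with $\blw^\epsilon(0)$ lying strictly above $\bls^2(0)$ at every coordinate and with $\blw^\epsilon\to\bls^1$ as $\epsilon\downarrow0$; then at the first time $\sigma$ at which some coordinate $i_0$ of $\blw^\epsilon$ is caught by $\bls^2$ one has exact equality at $i_0$ and $\blw^\epsilon(\sigma)\succeq\bls^2(\sigma)$ coordinatewise, so the quasimonotonicity lemma (applied at a regular time $\geq\sigma$) yields $\dot{\bls}^2_{i_0}=\buh_{i_0}(\bls^2)\leq\buh_{i_0}(\blw^\epsilon)<\dot{\blw}^\epsilon_{i_0}$, contradicting that $\bls^2_{i_0}$ reaches $\blw^\epsilon_{i_0}$ from below; letting $\epsilon\downarrow0$ gives $\bls^1\succeq\bls^2$ for all $t$. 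I expect this final step to be the main obstacle: because $\buh$ is discontinuous and non-Lipschitz in the state, the usual Carath\'eodory comparison machinery does not apply off the shelf, and the delicate part is constructing the strict supersolution $\blw^\epsilon$ — particularly at coordinates sitting on the boundary $\bls_i=0$, where perturbing the state upward can \emph{increase} the drift by losing the $g^s$ damping term — together with the bookkeeping over the full-measure (but not all) set of regular times. An alternative finish is a direct first-violation argument that isolates a single crossing coordinate, using the finiteness of $\sum_i\bls^2_i(\tau)$ and the finite speed at which mass propagates through the coordinates of the fluid model to reduce from infinitely many coordinates to one.
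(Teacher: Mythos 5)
Your quasimonotonicity lemma is correct — all three cases check out, including the identity $\lambda\bls_{i-1}-g^s_i(\bls)=\max\{\lambda\bls_{i-1}-p,0\}$ when $\bls_i=0$ — and it distills cleanly the mechanism that makes comparison work in the $\bls$-representation despite the discontinuity of $\buh$. But the proposal is not a complete proof, and you correctly flag exactly where the gap is: you never actually construct the strict supersolution $\blw^\epsilon$, you observe yourself that perturbing the state upward across the boundary $\{\bls_i=0\}$ can \emph{raise} the drift by dropping the $g^s$ damping, and the alternative ``direct first-violation, one crossing coordinate'' route is only gestured at.

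The ingredient you are missing is Corollary~\ref{cor:contdep} (continuous dependence of fluid solutions on the initial condition), which the paper already established through the one-sided Lipschitz property of the $\blv$-drift. That result lets the paper reduce, by a simple limiting argument, to the case of strictly dominated initial conditions $\bls^1(0)\succ\bls^2(0)$, and then the comparison is run on \emph{two actual fluid solutions}, not on a constructed supersolution: take the first time $t_1$ at which $\bls^1$ and $\bls^2$ agree at some coordinate with common positive value while still differing as vectors, pick the smallest such coordinate $k$ where in addition a neighbor differs, and show that for regular $t<t_1$ sufficiently close to $t_1$ one has $\dot{\bls}^1_k-\dot{\bls}^2_k>0$ — the $g^s$ contribution is nonpositive by essentially the same case split you wrote, the $\lambda(\bls^1_{k-1}-\bls^2_{k-1})+(1-p)(\bls^1_{k+1}-\bls^2_{k+1})$ part is bounded away from zero by continuity, and integrating over an interval $(t_0,t_1)$ contradicts equality at $t_1$. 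Because the perturbation is placed on the \emph{initial datum} (and then one uses uniqueness plus $\blv$-continuity), the ``perturbing across $\{\bls_i=0\}$ increases the drift'' obstruction simply never arises. This is the whole point of the $\blv$-versus-$\bls$ architecture in Section~\ref{subsec:vvss}: uniqueness and continuity are proved in the $\blv$-coordinates where OSL holds, and are then imported into $\bls$-coordinates precisely to enable this reduction. Your quasimonotonicity lemma slots in naturally as the drift inequality at the first-violation coordinate, so with that one added reduction your argument would close.
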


\begin{proof}
By the continuous dependency of a fluid limit on its initial condition (Corollary \ref{cor:contdep}), it suffices to verify that $\bls^1(t) \succeq \bls^2(t), \forall t \geq 0$, whenever $\bls^1(0) \succ \bls^2(0)$ (strictly dominated initial conditions). 

Let $t_1$ be the first time when $\bls^1(t)$ and $\bls^2(t)$ become equal \emph{and} are both positive at least one coordinate:
\beq
t_1 \bydef \inf \lt\{t \geq 0: \bls^1(t_1) \neq \bls^2(t_1), \bls^1_i(t) = \bls_i^2(t) >0, \mbox{ for some } i \geq 1 \rt\},
\eeq
If $t_1 = \infty$, one of the following must be true:
\benum [(1)]
\item $\bls^1(t) \succ \bls^2(t)$, for all $t \geq 0$, in which case the claim holds. 
\item $\bls^1(t') = \bls^2(t')$ at some $t' <\ity$. By the uniqueness of solutions, $\bls^1(t) = \bls^2(t)$ for all $t \geq t'$, in which case the claim also holds.
\eenum

Hence, we assume $t_1<\infty$. Let $k$ be the smallest coordinate index such that $\bls^1(t_1)$ and $\bls^2(t_1)$ are equal at $k$, but differ on \emph{at least one} of the two adjacent coordinates, $k-1$ and $k+1$:
\beq
k \bydef \min \lt\{i\geq0: \bls^1_{i}(t_1)=\bls^2_{i}(t_1) >0, \max_{j\in \{1,-1\}}\{\bls^1_{i+j}(t_1) - \bls^2_{i+j}(t_1)\}>0 \rt\}
\eeq
Since $\bls^1(t_1)\succ\bls^2(t_1)$, at all regular points $t<t_1$ that are close enough to $t_1$,
\beq
\dot{\bls}_{k}^1(t)-\dot{\bls}_{k}^2(t) = \lambda(\bls_{k-1}^1-\bls_{k-1}^2)-(1-p)(\bls_{k+1}^2-\bls_{k+1}^1) - (g^s_k(\bls^1)-g^s_k(\bls^2)),
\eeq
where
\beqn
g^s_k(\bls^1)-g^s_k(\bls^2) &\leq& 0 \cdot \mb{I}\lt\{\bls^2_{k+1}>0\rt\} \nnb \\
& & +\lt[\lt(p-\min\{p,\lambda\bls^1_k\}\rt)-\lt(p-\min\{p,\lambda\bls^2_k\}\rt)\rt]\cdot\mb{I}\lt\{\bls^2_{k+1}=0\rt\} \nnb \\
&=& 0,
\eeqn
where the last equality comes from the fact that $\bls^1_k(t)=\bls^2_k(t)$ by the definition of $k$. Because $\bls^1(t)$ and $\bls^2(t)$ is a continuous function of $t$ in every coordinate, we can find a time $t_0 < t_1$ such that $\bls^1_{k}(t_0)>\bls^2_{k}(t_0)$ and
\beq
\dot{\bls}_{k}^1(t)-\dot{\bls}_{k}^2(t) > 0,
\eeq
for all regular $t \in (t_0,t_1)$. Since $\bls^1_{k}(t_1)-\bls^2_{k}(t_1)=\bls^1_{k}(t_0)-\bls^2_{k}(t_0)+\int_{s=t_0}^{t_1}(\dot{\bls}_{k}^1(t)-\dot{\bls}_{k}^2(t))ds$, this contradicts with the fact that $\bls^1_{k}(t_1)=\bls^2_{k}(t_1)$, and hence proves the claim.
\end{proof} 

We are now ready to prove Theorem \ref{thm:fluidconv}.
\begin{proof}{\bf (Theorem \ref{thm:fluidconv})}
Let $\bls(\cdot)$, $\bls^u(\cdot)$,  and $\bls^l(\cdot)$ be three fluid limits with initial conditions in $\overline{\sps}^\ity$ such that $\bls^u(0) \succeq \bls(0) \succeq \bls^l(0)$ and $\bls^u(0) \succeq \bls^I \succeq \bls^l(0)$. By Lemma \ref{lm:dom}, we must have $\bls^u(t) \succeq \bls^I \succeq \bls^l(t)$ for all $t \geq 0$. 
Hence it suffices to show that $\lim_{t \rar \infty} \lt\|\bls^u(t) - \bls^I \rt\|_w = \lim_{t \rar \infty} \lt\|\bls^l(t) - \bls^I \rt\|_w = 0.$ Recall, for any regular $t >0$,
\beqn
\dot{\blv}_i(t)
&=& \lambda (\blv_{i-1}(t)-\blv_{i}(t)) - (1-p)(\blv_i(t)-\blv_{i+1}(t)) - g_i(\blv(t)) \nnb \\ 
&=& \lambda \bls_{i-1}(t) - (1-p)\bls_{i}(t) - g_i(\blv(t))\nnb \\ 
&=& (1-p)\lt(\frac{\lambda \bls_{i-1}(t)-g_i(\blv(t))}{1-p} - \bls_i\rt).
\lbl{eq:sconvgen}
\eeqn
Recall, from the expressions for $\bls_i^I$ in Theorem \ref{thm:ssprop}, that $\bls^I_{i+1} \geq \frac{\lambda \bls^I_i-p}{1-p},  \, \forall i \geq 0$.
From Eq.\ \eqref{eq:sconvgen} and the fact that $\bls^u_0=\bls^I_0=1$, we have
\beq
\lbl{eq:s1der}
\dot{\blv}^u_1(t) = (1-p)\lt(\frac{\lambda-g_1(\blv^u (t))}{1-p} - \bls^u_1 (t)\rt)\leq(1-p)\lt( \bls^I_1- \bls^u_1(t)\rt),
\eeq
for all regular $t\geq0$. To see why the above inequality holds, note that 
\beq
\frac{\lambda-g_1(\blv^u (t))}{1-p}=\frac{\lambda-p}{1-p} \leq \bls^I_1,
\eeq 
whenever $\bls^u_1(t)>0$, and 
\beq
\frac{\lambda-g_1(\blv^u (t))}{1-p}=\bls^u_1(t)=0,
\eeq
whenever $\bls^u_1(t)=\bls^I_1=0$. We argue that Eq.~\eqref{eq:s1der} implies that 
\beq
\lim_{t \rar \infty} \lt| \bls^I_1 -\bls^u_1(t) \rt| = 0.
\eeq
To see why this is true, let $h_1(t)\bydef \bls^I_1-\bls^u_1(t)$, and suppose instead that 
\beq
\limsup_{t \rar \infty}\lt| \bls^I_1 -\bls^u_1(t) \rt| = \delta > 0.
\eeq
Because $\bls^u(t) \succeq \bls^I$ for all $t$, this is equivalent to having 
\beq
\liminf_{t \rar \infty} h_1(t) = -\delta.
\eeq
Since $\bls(t)$ is a fluid limit and is $L$-Lipschitz-continuous along all coordinates, $h_1(t)$ is also $L$-Lipschitz-continuous. Therefore, we can find an increasing sequence $\{t_k\}_{k\geq1} \subset \R^+$ with $\lim_{k \rar \ity} t_k =\ity$, such that for some $\gamma>0$ and all $k\geq 1$,
\beq
\lbl{eq:h1}
h_1(t) \leq -\frac{1}{2}\delta, \quad \forall t \in [t_k-\gamma,t_k+\gamma].
\eeq
Because $\blv_1(0)<\ity$ and $h_1(t)\leq 0$ for all $t$, it follows from Eqs.~\eqref{eq:s1der} and \eqref{eq:h1} that there exists some $T_0>0$ such that
\beq
\blv_1^u(t) =  \int_{s=0}^t \dot{\blv}_1^u(s) ds \leq \int_{s=0}^t (1-p)h_1(s) ds < 0,
\eeq
for all $t \geq T$, which clearly contradicts with the fact that $\blv_1(t)\geq 0$ for all $t$. This shows that we must have $\lim_{t \rar \infty} \lt|\bls^u_1(t) - \bls^I_1\rt| = 0$.

We then proceed by induction. Suppose $\lim_{t \rar \infty} \lt|\bls^u_i(t) - \bls^I_i\rt| = 0$ for some $i\geq1$. By Eq.\  \eqref{eq:sconvgen}, we have
\beqn
\dot{\blv}^u_{i+1}(t)&=&(1-p)\lt(\frac{\lambda \bls^u_{i}(t)-g_i(\blv^u (t))}{1-p} - \bls^u_{i+1}(t)\rt) \nnb\\
&=& (1-p)\lt(\frac{\lambda\bls_i^I-g_i(\blv^u (t))}{1-p} - \bls^u_{i+1}(t) + \ep_{i}^u\rt) \nnb\\
&\leq& (1-p)\lt( \bls_{i+1}^I- \bls^u_{i+1}(t)+\ep_{i}^u(t)\rt),
\eeqn \normalsize
where $\ep_{i}^u (t) \bydef \frac{\lambda}{1-p}\lt(\bls^u_i(t)-\bls^I_i\rt) \rar 0$ as $t\rar \ity$ by the induction hypothesis. With the same argument as the one for $\bls_1$, we obtain $\lim_{t \rar \infty} |\bls^u_{i+1}(t) - \bls^I_{i+1}| = 0$.
This establishes the convergence of $\bls^u(t)$ to $\bls^I$ along all coordinates, which implies 
\beq
\lim_{t \rar \infty} \lt\| \bls^u(t) - \bls^I \rt\|_w = 0.
\eeq
Using the same set of arguments we can show that $\lim_{t \rar \infty} \lt\|\bls^l(t) - \bls^I \rt\|_w = 0$. This completes the proof. $\ftomb$
\end{proof}

\subsection{A Finite-support Property of $\blv(\cdot)$ and Its Implications}

In this section, we discuss a \emph{finite-support} property of the fluid solution $\blv(\cdot)$. Although this property is not directly used in the proofs of other results in our work, we have decided to include it here because it provides important, and somewhat surprising, qualitative insights into the system dynamics.

\begin{prop}
\lbl{prop:fsptofv}
Let $\blv^0 \in \vinf$, and let $\blv(\blv^0,\cdot)$ be the unique solution to the fluid model with initial condition $\blv(\blv^0, 0)=\blv^0$. If $p>0$, then $\blv(\blv^0,t)$ has a finite support for all $t >0$, in the sense that 
\beq
\sup \lt\{i: \blv_i(\blv^0, t)>0\rt\} < \ity, \quad \forall t >0.
\eeq
\end{prop}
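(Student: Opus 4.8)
The plan is to exploit the centralized drift $g$: once a coordinate index is high enough that the mass sitting there can never again rise to the level $p/\lambda$, the term $-g_i(\blv)=-p$ forces that coordinate down to zero, and a finite–speed–of–propagation estimate shows that only vanishingly little mass ever reaches high coordinates within a fixed time window. Fix $\blv^0\in\vinf$ and $t>0$, and write $\blv(\cdot)=\blv(\blv^0,\cdot)$; we may assume $\lambda>0$, the case $\lambda=0$ being immediate (then every strictly positive coordinate has drift $\leq-p$). By Theorem~\ref{thm:fluidunique} and Proposition~\ref{prop:smooth}, $\blv(\cdot)$ coincides with a fluid limit, so each $\blv_i(\cdot)$ is $L$-Lipschitz (hence absolutely continuous) and satisfies \eqref{eq:dft} for a.e.\ $s$.

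First I would record two elementary bounds. Since the last two terms in \eqref{eq:dft} are nonpositive and $\blv_0-\blv_1\equiv1$, we have $\dot{\blv}_1\leq\lambda$ a.e., hence $\blv_0(s)=1+\blv_1(s)\leq B$ on $[0,t]$, where $B\bydef1+\blv_1^0+\lambda t$. Dropping the same nonpositive terms gives $\dot{\blv}_i\leq\lambda(\blv_{i-1}-\blv_i)$ a.e., for every $i\geq1$. I would then compare $\blv$ with the solution $\blu$ of the linear cascade $\dot{\blu}_i=\lambda(\blu_{i-1}-\blu_i)$, $\blu_0(s)\equiv B$, $\blu_i(0)=\blv_i^0$: since this system is cooperative (off-diagonal coefficients nonnegative), an induction on $i$ --- at each step applying Gronwall to $e^{\lambda s}(\blv_i-\blu_i)$, which vanishes at $s=0$ and has a.e.\ derivative $\leq0$ --- yields $\blv_i(s)\leq\blu_i(s)$ for all $i\geq0$ and all $s\in[0,t]$. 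The cascade is solved explicitly by a Poisson kernel,
\[
\blu_i(s)=B\,\pb\!\lt(\mathrm{Pois}(\lambda s)\geq i\rt)\;+\;e^{-\lambda s}\sum_{k=0}^{i-1}\blv_{i-k}^{0}\,\frac{(\lambda s)^k}{k!},
\]
and since $\blv_i^0=\sum_{j\geq i}\bls_j^0\to0$ as $i\to\infty$, standard Poisson tail estimates give $\sup_{s\in[0,t]}\blu_i(s)\to0$. Hence there is a finite $I=I(t)$ with $\sup_{s\in[0,t]}\blv_I(s)\leq\tfrac{p}{2\lambda}$; by monotonicity in the index (condition~(2) of Definition~\ref{def:fl}), in fact $\blv_j(s)\leq\tfrac{p}{2\lambda}$ for all $j\geq I$ and all $s\in[0,t]$.

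The next step drains the tail. Fix $j\geq I+1$. Whenever $\blv_j(s)>0$ we have $g_j(\blv(s))=p$, so by \eqref{eq:dft} and $\blv_{j-1}(s)\leq\tfrac{p}{2\lambda}$,
\[
\dot{\blv}_j(s)\;\leq\;\lambda\,\blv_{j-1}(s)-p\;\leq\;\tfrac{p}{2}-p\;=\;-\tfrac{p}{2}.
\]
Choose $J=J(t)\geq I+1$ with $\blv_j^0<\tfrac{p}{2}t$ for all $j\geq J$ (possible since $\blv_j^0\to0$). For such $j$: if $\blv_j$ were strictly positive throughout $[0,t]$, integrating the displayed inequality would give $\blv_j(t)\leq\blv_j^0-\tfrac{p}{2}t<0$, contradicting $\blv_j\geq0$; hence $\blv_j$ reaches $0$ at some $\tau_j\in[0,t]$. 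Moreover $\blv_j$ cannot return to a positive value after $\tau_j$, since on any subinterval of $[\tau_j,t]$ on which $\blv_j>0$ its derivative is $\leq-\tfrac{p}{2}<0$ a.e.\ (the bound $\blv_{j-1}\leq\tfrac{p}{2\lambda}$ holds on all of $[0,t]$), so any excursion above $0$ would again force a negative value. Therefore $\blv_j(t)=0$ for every $j\geq J$, which gives $\sup\{i:\blv_i(\blv^0,t)>0\}\leq J-1<\infty$, as claimed.

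The step I expect to be the main obstacle is the second paragraph: controlling how much mass can reach high coordinates within the fixed horizon. This is precisely where the discontinuous, state-dependent drift could in principle misbehave, and it is tamed by passing to the cooperative linear upper-comparison cascade and using the super-exponential decay of the Poisson kernel, together with the integrability of $\blv^0$ (so that $\blv_i^0\to0$). Some care is also needed to justify the coordinate-wise comparison for an infinite family of differential inequalities, which is handled by the continuity-in-$s$ induction indicated above. Once a single high coordinate is uniformly below $p/(2\lambda)$ on $[0,t]$, the centralized drift term $-g$ does the rest, and the remaining steps are routine.
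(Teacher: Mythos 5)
Your proof is correct, but it takes a genuinely different route from the paper's. The paper argues only on an arbitrarily short initial interval $[0,s']$: because $\blv^0_i\to0$ and every coordinate has $|\dot{\blv}_i|\leq\lambda+1$, one can make the arrival drift $\lambda(\blv_{i-1}-\blv_i)$ smaller than any $\epsilon<p$ for all sufficiently large $i$ and all $t\in[0,s']$, so that those coordinates are driven to zero within $[0,s']$; the extension to all $t>0$ is then delegated to an (asserted but not proved) propagation-of-finite-support observation for initial conditions of finite support. You instead work directly on any fixed horizon $[0,t]$, which requires a sharper bound than the crude $(\lambda+1)t$ growth estimate: you drop the nonpositive service terms to get $\dot{\blv}_i\leq\lambda(\blv_{i-1}-\blv_i)$, compare coordinatewise (via a Gronwall induction) with the explicit linear cascade driven by the boundary $\blu_0\equiv B$, and read off from the Poisson kernel that $\sup_{s\in[0,t]}\blv_i(s)\to0$ as $i\to\infty$; once some index $I$ is uniformly below $p/(2\lambda)$ over $[0,t]$, the centralized drift term gives $\dot{\blv}_j\leq-p/2$ for all $j>I$ with $\blv_j>0$, and the tail of $\blv^0$ finishes the argument. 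The trade-off is clear: the paper's proof is shorter but leaves a small gap at the propagation step, while yours is more self-contained and does not rely on that lemma, at the cost of the Gronwall/Poisson-tail machinery. A minor remark: in the ``cannot return to a positive value'' step, it is cleaner to observe directly that for $j>I$, once $\blv_j=0$ the drift in case~(2) of $g_j$ equals $\max\{\lambda\blv_{j-1}-p,\,0\}=0$ because $\lambda\blv_{j-1}\leq p/2<p$ on $[0,t]$, so $\blv_j$ is absorbed at zero; your excursion argument gives the same conclusion but this is more immediate from the definition of the fluid model.
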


Before presenting the proof, we observe that the finite-support property stated in Proposition \ref{prop:fsptofv} is \emph{independent} of the size of the support of the initial condition $\blv^0$; even if all coordinates of $\blv(t)$ are strictly positive at $t=0$, the support of $\blv(t)$ immediately ``collapses'' to a finite number for any $t>0$. 

Note that a critical assumption in Proposition \ref{prop:fsptofv} is that $p>0$, i.e., the system has a non-trivial central server. In some sense, the ``collapse'' of $\blv(\cdot)$ into a finite support is essentially due to the fact that the central server always allocates its service capacity to the longest queues in the system. Proposition \ref{prop:fsptofv} illustrates that the worst-case queue-length in the system is well under control \emph{at all times}, thanks to the power of the central server. 

Proposition \ref{prop:fsptofv} also sheds light on the structure of the invariant state of the fluid model, $\blv^I$. Recall from Theorem \ref{thm:ssprop} that $\blv^I$ has a finite support \emph{whenever} $p>0$. Since by the global stability of fluid solutions (Theorem \ref{thm:fluidunique}), we have that
\beq
\lim_{t \rar \infty} \lt\|\blv\lt(t\rt) - \blv^I \rt\|_w=0,
\eeq
the fact that $\blv(t)$ admits a finite support for any $t>0$ whenever $p>0$ provides strong intuition for and partially explains the finite-support property of $\blv^I$. 

We now prove Proposition \ref{prop:fsptofv}.

\begin{proof} {\bf (Proposition \ref{prop:fsptofv})} 
We fix some $\blv^0 \in \overline{\spv}^\ity$, and for the rest of the proof we will write $\blv(\cdot)$ in place of $\blv(\blv^0,\cdot)$. It is not difficult to show, by directly inspecting the drift of the fluid model in Eq. (4), that if we start with an initial condition $\blv^0$ with a finite support, then the support remains finite at all times. Hence, we now assume $\blv^0_i>0$ for all $i$. First, the fact that $\blv^0 \in \vinf$ (i.e., $\blv^0_1 <\ity$) implies \beq
\lim_{i\rar \ity}\blv^0_i=0.
\eeq
This is because all coordinates of the corresponding vector $\bls^0$ are non-negative, and
\beq
\blv^0_i = \blv^0_1-\sum_{j=1}^{i-1}\bls^0_{j},
\eeq
where the second term on the right-hand side converges to $\blv^0_1$.

Assume that $\blv_i(t)>0$ for all $i$, over some small time interval $t\in[0,s]$. Since the magnitude of the drift on any coordinate $\blv_i$ is uniformly bounded from above by $\lambda+1$, and $\lim_{i\rar\ity}\blv^0_i =0$, for any $\epsilon>0$ we can find $s',N>0$ such that for all $i \geq N$ and $t \in [0,s']$,
\beq
\dot{\blv}_i(t)= \lambda(\blv_{i-1}-\blv_i)-(1-p)(\blv_{i}-\blv_{i+1})-g_i(\blv) \leq \epsilon -g_i(\blv) = -p+\epsilon.
\lbl{eq:constdrift}
\eeq

Since $\lim_{i\rar\ity}\blv_i^0=0$, Eq.~\eqref{eq:constdrift} shows that it is impossible to find any strictly positive time interval $[0,s]$ during which the fluid trajectory $\blv(t)$ maintains an infinite support. This proves the claim.
\end{proof}

\chapter{Convergence of Steady-State Distributions}

\lbl{sec:conss}
We will prove Theorem \ref{thm:convss} in this chapter, which states that, for all $N$, the Markov process $\buv(t)$ converges to a unique steady-state distribution, $\pi^N$, as $t \rar \ity $, and that the sequence $\{\pi^N\}_{N \geq 1}$ concentrates on the unique invariant state of the fluid model, $\blv^I$, as $N \rar \ity$. This result is of practical importance, as it guarantees that key quantities, such as the average queue length, derived from the expressions of $\blv^I$ also serve as accurate approximations for that of an actual finite stochastic system in the long run.

Note that by the end of this chapter, we will have established our \emph{steady-state approximation} results, i.e.,
\beq
\buv(t) \stackrel{t \rar \ity}{\longrightarrow} \pi^N \stackrel{N \rar \ity}{\longrightarrow} \blv^I,
\eeq
as was illustrated in Figure \ref{fig:techres} of Chapter \ref{sec:sumres}. Together with the transient approximation results established in the previous chapters, these conclude the proofs of all approximation theorems in this thesis.

\vspace{15pt}

Before proving Theorem \ref{thm:convss}, we first give an important proposition which strengthens the finite-horizon convergence result stated in Theorem \ref{thm:transconv}, by showing a uniform speed of convergence over any compact set of initial conditions. This proposition will be critical to the proof of Theorem \ref{thm:convss} which will appear later in the chapter.

\section{Uniform Rate of Convergence to the Fluid Limit}
{\color{black}
Let the probability space $(\Omega_1,\mcal{F}_1,\pb_1)$ be the product space of $(\Omega_W,\mcal{F}_W,\pb_W)$ and $(\Omega_U,\mcal{F}_U,\pb_U)$. Intuitively, $(\Omega_1,\mcal{F}_1,\pb_1)$ captures all exogenous arrival and service information. Fixing $\omega_1 \in \Omega_1$ and $\blv^0\in \overline{\spv}^M \cap \spq^N$, denote by $\buv(\blv^0, \omega_1, t)$ the resulting sample path of $\buv$ given the initial condition $\buv(0)=\blv^0$. Also, denote by $\blv\lt(\blv^0,t\rt)$ the solution to the fluid model for a given initial condition $\blv^0$. We have the following proposition.

\begin{prop}{\bf (Uniform Rate of Convergence to the Fluid Limit)}
\lbl{prop:unicon}
Fix $T>0$ and $M \in \N$. Let $K^N \bydef \overline{\spv}^M \cap \spq^N$. We have 
\beq
\lbl{eq:unirate}
\lim_{N \rar \infty} \sup_{\blv^{0} \in K^N} d^{\zp}\lt(\buv(\blv^0,\omega_1,\cdot),\blv(\blv^0,\cdot)\rt)=0, \quad \pb_1\mbox{-almost surely},
\eeq
where the metric $d^\zp(\cdot,\cdot)$ was defined in Eq.~\eqref{eq:dmetric}.
\end{prop}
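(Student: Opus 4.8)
The plan is to promote the initial-condition-by-initial-condition convergence established in Chapter~\ref{sec:flim} and Theorem~\ref{thm:transconv} to a statement that is uniform over the family $K^N$, by combining three ingredients: the sample-path tightness/characterization of Propositions~\ref{prop:smooth}--\ref{prop:drifts} together with the uniqueness of fluid solutions (Theorem~\ref{thm:fluidunique}); the Lipschitz dependence of a fluid solution on its initial condition furnished by the Gronwall estimate behind Corollary~\ref{cor:contdep}; and a compactness argument over the set of admissible initial conditions. As a first reduction I would remove the randomness: set $\spc_1\bydef\mcal{C}_W\times\mcal{C}_U\subset\Omega_1$, with $\mcal{C}_W$ and $\mcal{C}_U$ the probability-one sets produced in the proof of Lemma~\ref{lm:nice1}, so that $\pb_1(\spc_1)=1$ and, for every $\omega_1\in\spc_1$, the normalized event process $W^N$ and the selection variables obey the niceness estimates \eqref{eq:nice1}--\eqref{eq:nice2}. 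Since $K^N$ is countable, the supremum in \eqref{eq:unirate} is measurable in $\omega_1$, and it suffices to prove that for \emph{every} fixed $\omega_1\in\spc_1$,
\beq
\lbl{eq:uniconfixed}
\lim_{N\rar\ity}\,\sup_{\blv^0\in K^N} d^{\zp}\lt(\buv(\blv^0,\omega_1,\cdot),\blv(\blv^0,\cdot)\rt)=0.
\eeq

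Fix such an $\omega_1$ and suppose \eqref{eq:uniconfixed} fails, so that along some subsequence $N_k\rar\ity$ there exist initial conditions $\blv^{0,k}\in K^{N_k}$ with $d^{\zp}(\buv(\blv^{0,k},\omega_1,\cdot),\blv(\blv^{0,k},\cdot))>2\epsilon$ for all $k$, where $\epsilon>0$ is fixed. The coordinates of every element of $\overline{\spv}^M$ are bounded (by $M+1$), so a diagonal extraction gives a further subsequence along which $\blv^{0,k}$ converges coordinate-wise, hence in $\lt\|\cdot\rt\|_w$ by dominated convergence (the weights $2^{-i}$ are summable), to some $\blv^0$; passing to this subsequence I may assume $\lt\|\blv^{0,k}-\blv^0\rt\|_w\rar0$, and I proceed as if $\blv^0\in\vinf$ (the obstacle paragraph below addresses why this step requires the family of initial conditions to be genuinely tight).

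Next I would feed the sequence of \emph{deterministic} initial conditions $\blv^{0,k}\to\blv^0$ into the machinery of Chapter~\ref{sec:flim}. Once $\omega_1$ is fixed, the coupled construction of Section~\ref{sec:spathcons} is a deterministic function of the initial datum that depends on it only through its value; inspecting the proofs of Propositions~\ref{prop:smooth} and~\ref{prop:drifts}, they use only the niceness of $W^N$ and $U$ (i.e.\ $\omega_1\in\spc_1$) together with convergence of the initial conditions, with a Lipschitz constant $L$ independent of all of these. Hence any subsequence of $\lt\{\buv(\blv^{0,k},\omega_1,\cdot)\rt\}_k$ has a further subsequence converging in $d^{\zp}$ to a coordinate-wise $L$-Lipschitz trajectory which, by Proposition~\ref{prop:drifts}, is a solution to the fluid model with initial condition $\blv^0$; by Theorem~\ref{thm:fluidunique} that solution is precisely $\blv(\blv^0,\cdot)$. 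Since every subsequence has a further subsequence converging to the same limit, $d^{\zp}(\buv(\blv^{0,k},\omega_1,\cdot),\blv(\blv^0,\cdot))\rar0$. On the other hand, the Gronwall estimate in the proof of Corollary~\ref{cor:contdep} gives $d^{\zp}(\blv(\blv^{0,k},\cdot),\blv(\blv^0,\cdot))\le\lt\|\blv^{0,k}-\blv^0\rt\|_w\,e^{CT/2}\rar0$, with $C$ the constant from \eqref{eq:gron1}. The triangle inequality then contradicts $d^{\zp}(\buv(\blv^{0,k},\omega_1,\cdot),\blv(\blv^{0,k},\cdot))>2\epsilon$, so \eqref{eq:uniconfixed} holds; as this is true for every $\omega_1\in\spc_1$ and $\pb_1(\spc_1)=1$, the proposition follows.

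The step I expect to be the crux is the passage, for the fixed nice $\omega_1$, from the converging sequence of deterministic initial conditions to the fluid model. One must verify that Propositions~\ref{prop:smooth} and~\ref{prop:drifts}, phrased in Chapter~\ref{sec:flim} in terms of the coupled random initial conditions $\mbf{V}^{(0,N)}(\omega)$, remain valid verbatim along an arbitrary index sequence $N_k\rar\ity$ with arbitrary deterministic initial conditions $\blv^{0,k}\to\blv^0$, with constants uniform in the initial condition --- and, more delicately, that the limit $\blv^0$ is a legitimate fluid initial condition. This last point is genuinely subtle, because $\overline{\spv}^M$ is \emph{not} $\lt\|\cdot\rt\|_w$-compact: mass can escape to infinity (a single queue of depth $MN$ among $N$ stations gives $\blv^{0,k}\to(1+M,M,M,\ldots)\notin\vinf$). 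In the place where Proposition~\ref{prop:unicon} is actually invoked, namely the proof of Theorem~\ref{thm:convss}, this is accommodated because the relevant family of initial conditions is the precompact one supporting the tightness of $\{\pi^N\}$, for which mass does not escape; alternatively, one can replace the infinite-dimensional compactness step by a finite-dimensional one, controlling only the first $K=K(\epsilon,T)$ coordinates (whose contribution dominates $d^{\zp}$) over the genuinely compact set of their possible values and absorbing the uniformly bounded tail into $\epsilon$. Everything else is a routine compactness-plus-triangle-inequality argument.
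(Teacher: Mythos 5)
Your proposal follows the same contradiction scheme as the paper's own proof: pass to the almost-sure set $\mcal{C}_1=\mcal{C}_W\times\mcal{C}_U$, assume failure along a subsequence with initial conditions $\blv^{0,k}\in K^{N_k}$, extract a convergent subsequence of initial conditions, identify the limit of the corresponding fluid solutions with the limit of the corresponding stochastic sample paths via Propositions~\ref{prop:smooth}--\ref{prop:drifts} and the uniqueness / continuous-dependence machinery, and derive a contradiction. The paper organizes this slightly differently (it runs Arzela--Ascoli directly on the sequence of fluid solutions $\blv(\blv^{(0,N_i)},\cdot)$ in $D^{\zp}[0,T]$ and then invokes Corollary~\ref{cor:contdep}, rather than, as you do, extracting a convergent sequence of initial conditions and applying Gronwall to the pair of fluid solutions), but these are equivalent ways to use the same ingredients.

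The point you raise in your ``obstacle paragraph'' is a genuine observation and is in fact a flaw in the paper's argument as written, not a defect particular to your write-up. The paper asserts in its Observation~1 that ``the set $\overline{\spv}^M$ is closed and bounded'' and concludes $\tilde{\blv}^a(0)\in\overline{\spv}^M$. But $\overline{\spv}^M$ is not $\lt\|\cdot\rt\|_w$-closed: your example --- for the $N$-station system, a single queue of depth $MN$ and the rest empty --- gives $\blv^{(0,N)}_i = M-(i-1)/N$ for $1\le i\le MN$ (zero thereafter), which lies in $\overline{\spv}^M\cap\spq^N$ and converges in $\lt\|\cdot\rt\|_w$ to $(1+M,M,M,\ldots)$, a vector not in $\vinf$ at all (its $\bls$-coordinates vanish for $i\ge 1$ while $\blv_i\equiv M$, so the telescoping identity $\blv_i=\sum_{j\ge i}\bls_j$ fails). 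Consequently the limiting initial datum produced by Arzela--Ascoli need not be an admissible initial condition for the fluid model, and both the appeal to Corollary~\ref{cor:contdep} in the paper's Observation~1 and the appeal to Propositions~\ref{prop:smooth}--\ref{prop:drifts} in its Observation~2 require a hypothesis ($\blv^0\in\vinf$) that has not been verified. Your proof sketch inherits the same gap at the same place.

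Where I would push back is on the resolution. Your first suggested fix --- that the application in Theorem~\ref{thm:convss} implicitly restricts to a precompact family --- does not hold as the paper is written: there the integration in the key claim is literally over $K=\overline{\spv}^M$, not over a $\lt\|\cdot\rt\|_w$-compact subset of it, so the escape-of-mass issue is live. Your second suggestion (truncate to the first $K=K(\epsilon,T)$ coordinates, whose tail contribution to $d^{\zp}$ is bounded by a constant times $2^{-K/2}$ uniformly over $\overline{\spv}^M$ because all coordinates of both $\buv(\blv^0,\cdot)$ and $\blv(\blv^0,\cdot)$ are bounded by $\blv^0_1+LT\le M+LT$) is the right idea, but it is not entirely routine to carry out: the fluid drift $\buf_i$ at coordinate $i$ involves $\blv_{i+1}$, so a finite truncation does not close the dynamics, and one has to argue that the coupling to the identical (unknown) tail of the two trajectories lets you compare the first $K$ coordinates anyway. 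An alternative, arguably cleaner, repair is to observe that the sample-path tightness and drift-characterization arguments of Chapter~\ref{sec:flim} do not in fact use $\blv^0_1<\ity$ beyond the bound $\sup_t|\blv_i(t)|\le\blv^0_1+LT$, and that the one-sided Lipschitz estimate \eqref{eq:gron1} (hence uniqueness and continuous dependence) only requires bounded non-increasing $\blv$ with $\blv_0-\blv_1=1$, not $\blv\in\vinf$; one can then extend the fluid-model definition and its uniqueness to the $\lt\|\cdot\rt\|_w$-closure of $\overline{\spv}^M$ and run the paper's argument there. Either way, the gap you flagged is real and needs to be closed; the surrounding structure of your argument is sound.
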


\begin{proof}
The proof highlights the convenience of the sample-path based approach. By the same argument as in Lemma~\ref{lm:nice1}, we can find sets $\mcal{C}_W\subset \Omega_W$ and $\mcal{C}_U \subset \Omega_U$ such that the convergence in Eqs.~\eqref{eq:nice1} and \eqref{eq:nice2} holds over $\mcal{C}_W$ and $\mcal{C}_U$, respectively, and that $\pb_W(\mcal{C}_W)=\pb_U(\mcal{C}_U)=1$. Let $\mcal{C}_1\bydef \mcal{C}_W\times\mcal{C}_U$. Note that $\pb_1(\mcal{C}_1)=1$.

To prove the claim, it suffices to show that 
\beq
\lim_{N\rar\ity}\sup_{\blv^{0} \in K^N} d^{\zp}\lt(\buv(\blv^0,\omega_1,\cdot),\blv(\blv^0,\cdot)\rt)=0, \quad \forall\omega_1 \in \mcal{C}_1.\eeq
We start by assuming that the above convergence fails for some $\tilde{\omega}_1\in\mcal{C}_1$, which amounts to having a sequence of ``bad'' sample paths of $\buv$ that are always a positive distance away from the corresponding fluid solution with the same initial condition, as $N \rar\ity$. We then find nested subsequences within this sequence of bad sample paths, and construct two solutions to the fluid model with the \emph{same} initial condition, contradicting the uniqueness of fluid model solutions. 

Assume that there exists $\tilde{\omega}_1\in \mcal{C}_1$ such that
\beq
\lbl{eq:vdiverg}
\limsup_{N\rar\ity}\sup_{\blv^{0} \in K^N} d^{\zp}\lt(\buv(\blv^0,\tilde{\omega}_1,\cdot),\blv(\blv^0,\cdot)\rt)>0.
\eeq

This implies that there exists $\epsilon>0$, $\{N_i\}_{i=1}^\ity\subset \N$, and $\lt\{\blv^{(0,N_i)}\rt\}_{i=1}^\ity$ with $\blv^{(0,N_i)}\in K^{N_i}$, such that
\beq
d^{\zp}\lt(\buv(\blv^{(0,N_i)},\tilde{\omega}_1,\cdot),\blv(\blv^{(0,N_i)},\cdot)\rt)>\epsilon,
\eeq
for all $i \in \N$. We make the following two observations:
\benum
\item The set $\overline{\spv}^M$ is closed and bounded, and the fluid solution $\blv(\blv^{(0,N_i)},\cdot)$ is $L$-Lipschitz-continuous for all $i$. Hence the sequence of functions $\{\blv(\blv^{(0,N_i)},\cdot)\}_{i=1}^\ity$ are equicontinuous and uniformly bounded on $[0,T]$. We have by the Arzela-Ascoli theorem that there exists a subsequence $\lt\{N^2_i\rt\}_{i=1}^\ity$ of $\lt\{N^1_i\rt\}_{i=1}^\ity$ such that 
\beq
d^{\zp}\lt(\blv\lt(\blv^{(0,N^2_i)},\cdot\rt),\tilde{\blv}^a(\cdot)\rt) \rar 0,\eeq 
as $i \rar \ity$, for some Lipschitz-continuous function $\tilde{\blv}^a(\cdot)$ with $\tilde{\blv}^a(0)\in \overline{\spv}^M$. By the \emph{continuous dependence of fluid solutions on initial conditions} (Corollary~\ref{cor:contdep}), $\tilde{\blv}^a(\cdot)$ must be the unique solution to the fluid model with initial condition $\tilde{\blv}^a(0)$, i.e.,
\beq
\lbl{eq:va}
\tilde{\blv}^a(t)= \blv\lt(\tilde{\blv}^a(0),t\rt), \quad \forall t \in [0,T].
\eeq
\item Since $\omega_1 \in \mcal{C}_1$, by Propositions~\ref{prop:smooth} and \ref{prop:drifts}, there exists a further subsequence $\lt\{N^3_i\rt\}_{i=1}^\ity$ of $\lt\{N^2_i\rt\}_{i=1}^\ity$ such that $\mbf{V}^{N^3_i}\lt(\blv^{(0,N^3_i)},\cdot\rt) \rar \tilde{\blv}^b(\cdot)$ uniformly over $[0,T]$ as $i \rar \ity$, where  $\tilde{\blv}^b(\cdot)$ is a solution to the fluid model. Note that since $\lt\{N^3_i\rt\}_{i=1}^\ity \subset \lt\{N^2_i\rt\}_{i=1}^\ity$, we have $\tilde{\blv}^b(0)=\tilde{\blv}^a(0)$. Hence,
\beq
\lbl{eq:vb}
\tilde{\blv}^b(t)= \blv\lt(\tilde{\blv}^a(0),t\rt),  \quad \forall t \in [0,T].
\eeq
\eenum
By the definition of $\tilde{\omega}_1$ (Eq.\ \eqref{eq:vdiverg}) and the fact that $\tilde{\omega}_1 \in \mcal{C}_1$, we must have\\ $\sup_{t\in[0,T]}\lt\|\tilde{\blv}^a(t)-\tilde{\blv}^b(t)\rt\|_w > \epsilon$, which, in light of Eqs.~\eqref{eq:va} and \eqref{eq:vb}, contradicts the uniqueness of the fluid limit (Theorem \ref{thm:fluidunique}). This completes the proof. $\ftomb$
\end{proof}

The following corollary, stated in terms of convergence in probability, follows directly from Proposition \ref{prop:unicon}. The proof is straightforward and is omitted.
\begin{cor}
\lbl{cor:unicon}
Fix $T>0$ and $M \in \N$. Let $K^N \bydef \overline{\spv}^M \cap \spq^N$. Then, for all $\delta>0$,
\beq
\lbl{eq:unirate2}
\lim_{N \rar \infty} \pb_1 \lt(\omega_1\in \Omega_1: \sup_{\blv^{0} \in K^N} d^{\zp}\lt(\buv\lt(\blv^0,\omega_1,\cdot\rt),\blv(\blv^0,\cdot)\rt) >\delta\rt)=0.
\eeq
\end{cor}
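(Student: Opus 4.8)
The plan is to treat this as the textbook implication that almost-sure convergence entails convergence in probability, the only wrinkle being a measurability check. Concretely, I would set
\[
Y_N(\omega_1) \bydef \sup_{\blv^{0} \in K^N} d^{\zp}\lt(\buv\lt(\blv^0,\omega_1,\cdot\rt),\blv(\blv^0,\cdot)\rt), \qquad \omega_1 \in \Omega_1,
\]
so that Proposition~\ref{prop:unicon} reads ``$Y_N \rar 0$ as $N\rar\ity$, $\pb_1$-almost surely'', while the desired conclusion Eq.~\eqref{eq:unirate2} reads ``$Y_N \rar 0$ in $\pb_1$-probability''.

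First I would verify that each $Y_N$ is $\mcal{F}_1$-measurable. The key observation is that $K^N = \overline{\spv}^M \cap \spq^N$ is in fact a \emph{finite} set: a vector $\blv^0 \in K^N$ is determined by its increments $\bls^0_i = \blv^0_i - \blv^0_{i+1}$, each of which is a nonnegative integer multiple of $1/N$, and these increments sum to $\blv^0_1 \leq M$, so only finitely many choices remain. Hence $Y_N$ is a finite maximum of the measurable maps $\omega_1 \mapsto d^{\zp}(\buv(\blv^0,\omega_1,\cdot),\blv(\blv^0,\cdot))$ and is itself measurable. Then, with $\delta>0$ fixed, I would apply the dominated convergence theorem to the bounded random variables $\mb{I}\{Y_N > \delta\}$: since $Y_N\rar0$ $\pb_1$-a.s.\ forces $\limsup_{N\rar\ity} \mb{I}\{Y_N>\delta\} = 0$ $\pb_1$-a.s., we obtain
\[
\limsup_{N\rar\ity}\pb_1\lt(Y_N>\delta\rt) = \limsup_{N\rar\ity}\mb{E}_1[\mb{I}\{Y_N>\delta\}] \leq \mb{E}_1[\limsup_{N\rar\ity}\mb{I}\{Y_N>\delta\}] = 0,
\]
and since $\delta>0$ was arbitrary this is exactly Eq.~\eqref{eq:unirate2}.

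I do not expect a genuine obstacle here: everything reduces to the standard ``almost sure implies in probability'' fact once Proposition~\ref{prop:unicon} is in hand, and the finiteness of $K^N$ removes the only measurability concern. If one preferred to avoid even that observation, an equally short alternative is to argue pathwise on the probability-one event $\{Y_N\rar0\}$ furnished by Proposition~\ref{prop:unicon}, using continuity of $\pb_1$ from above along the decreasing events $\bigcup_{M'\geq N}\{Y_{M'}>\delta\}$; this is presumably why the paper simply states the proof is omitted.
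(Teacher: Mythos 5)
Your proof is correct, and since the paper explicitly omits this proof as ``straightforward,'' your argument is essentially what the authors had in mind: almost-sure convergence (Proposition~\ref{prop:unicon}) implies convergence in probability, with the measurability of $Y_N$ handled by noting $K^N$ is finite. The finiteness observation is accurate — any $\blv^0 \in K^N$ corresponds to a non-increasing sequence of multiples of $1/N$ summing to at most $M$, of which there are only finitely many — and the dominated-convergence route you spell out is clean. No gaps.
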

}

\section{Proof of Theorem \ref{thm:convss}}
We first state a tightness result that will be needed in the proof of Theorem \ref{thm:convss}. 

\begin{prop}
\lbl{prop:vtinght}
For every $N<\ity$ and $p\in (0,1]$, $\buv(t)$ is positive-recurrent and $\buv(t)$ converges in distribution to a unique steady-state distribution $\pi^{N,p}$ as $t \rar \ity$. Furthermore, the sequence $\{\pi^{N,p}\}_{N=1}^\ity$ is tight, in the sense that for all $\epsilon>0$, there exists $M>0$ such that
\beq
\pi^{N,p} \lt(\overline{\spv}^M\rt) \bydef \pi^{N,p} \lt(\buv_1 \leq M\rt) \geq 1-\epsilon, \quad \forall N \geq 1.
\eeq
\end{prop}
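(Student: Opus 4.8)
The plan is to prove the three assertions --- positive recurrence of $\buv(t)$ for each finite $N$, convergence to a unique steady-state distribution $\pi^{N,p}$, and tightness of the family $\{\pi^{N,p}\}_{N\geq 1}$ --- in that order, with the tightness statement being the real work. For a fixed $N$, the process $\buv(t)$ is an irreducible continuous-time Markov chain on the countable state space $\vinf\cap\spq^N$ (states reachable from the given finite initial condition); positive recurrence should follow from a Foster--Lyapunov drift argument. The natural Lyapunov function is the total workload $\Phi(\buv) \bydef \buv_1 = \sum_{i\geq 1}\bus_i$, i.e.\ $N\Phi$ is the total number of tasks in the system. Under the coupled construction of Section~\ref{sec:spathcons}, tasks arrive at total rate $N\lambda$ and depart at rate at least $N(1-p)\,\pb(\text{a randomly probed station is nonempty}) + Np\cdot\mb{I}\{\text{system nonempty}\}$; when the system holds at least, say, $N$ tasks, a constant fraction of stations is nonempty, so the expected drift of $N\Phi$ is bounded above by $-\epsilon N$ for some $\epsilon>0$ depending on $\lambda,p$. (Even crudely: whenever the system is nonempty the central server drains at rate $Np>0$ while arrivals come at rate $N\lambda$; once the queue is large enough that local servers are also rarely idle, the negative drift is uniform.) This gives positive recurrence and hence a unique stationary distribution $\pi^{N,p}$, and convergence in distribution to it, by the standard ergodic theorem for irreducible positive-recurrent Markov chains.

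For tightness, the key point is to get a drift bound on $N\buv_1$ that is \emph{uniform in $N$} and then apply it in stationarity. Using the decomposition $\buv_1(t) = \buv_1(0) + \bua_1(t) - \bul_1(t) - \buc_1(t)$ from Eq.~\eqref{eq:decomp}, note that $\bua_1$ grows at rate $\lambda$ (arrivals to any nonempty-or-empty station all increment $\buv_1$), $\bul_1$ decreases $\buv_1$ at rate $(1-p)\bus_1 = (1-p)(1-(1-\bus_1))$, and $\buc_1$ decreases $\buv_1$ at rate $p$ whenever the system is nonempty. The generator applied to $\buv_1$ therefore satisfies, at any state with $\buv_1>0$,
\beq
\mcal{G}\buv_1 \;=\; \lambda - (1-p)\bus_1 - p \;\le\; \lambda - 1 + (1-p)(1-\bus_1) \;\le\; -(1-\lambda) + (1-p)(1-\bus_1).
\eeq
To exploit this I need to control $1-\bus_1$, the fraction of empty stations; this is where a second Lyapunov function helps. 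I would instead work directly with $\mcal{G}\buv_1 = \lambda(1-\bus_1\cdot 0)\cdots$ — more cleanly: apply the generator to $\buv_1$ and observe $\mcal{G}\buv_1 \le \lambda - p$ whenever the system is nonempty, which is negative only when $p>\lambda$. For general $p\in(0,1]$ the uniform negative drift must come from $\mcal{G}\buv_2$ or a weighted sum: when $\buv_1$ is large, $\bus_1$ is close to $1$ so $(1-p)\bus_1$ is close to $1-p$, giving $\mcal{G}\buv_1 \le \lambda - (1-p)\bus_1 - p$, and since a large $\buv_1 = \sum_i \bus_i$ with each $\bus_i\le 1$ forces $\bus_1$ near $1$ only if the mass is concentrated at small $i$ — which is exactly the finite-support phenomenon. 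The robust route is: combine the two drift inequalities $\mcal{G}\buv_1 \le \lambda - (1-p)\bus_1 - p$ (valid when $\buv_1>0$) with the elementary bound $\bus_1 \ge \buv_1/(\buv_1+1) \cdot(\text{something})$ — this fails without more structure, so instead I would use the Lyapunov function $\Psi(\buv) = \sum_{i\ge 1} \beta_i \bus_i$ with weights $\beta_i$ chosen (mimicking the proof of Theorem~\ref{thm:ssprop}, e.g.\ $\beta_i \sim i$) so that $\mcal{G}\Psi \le -c$ outside a finite set, \emph{with $c$ and the set independent of $N$}. Concretely, since the transition rates per unit $\frac1N$ increment are $N$ times the fluid drift $\buf_i(\buv)$ up to $O(1/N)$ corrections that vanish, $\mcal{G}\Psi \to \langle \nabla\Psi, \buf(\buv)\rangle$, and the invariant-state analysis already shows $\buf$ points "inward"; choosing $\Psi$ a Lyapunov function for the fluid ODE (available from the global-stability proof, Theorem~\ref{thm:fluidconv}) transfers this to the chain with an $N$-independent bound.

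Granting the uniform drift bound $\mcal{G}\Psi(\buv) \le -c$ for $\buv$ outside $\{\Psi \le m_0\}$ with $c, m_0$ independent of $N$, the standard stationarity argument (integrate $\mcal{G}\Psi$ against $\pi^{N,p}$, which is zero, and split the integral over the sublevel set and its complement) yields $\pi^{N,p}(\Psi > m) \le C/m$ for an $N$-independent constant $C$; since $\Psi \ge \buv_1$ when the weights satisfy $\beta_i \ge 1$, this gives $\pi^{N,p}(\buv_1 > M) \le C/M$, i.e.\ choosing $M$ large makes this $\le \epsilon$ uniformly in $N$, which is precisely the tightness claim $\pi^{N,p}(\overline{\spv}^M) \ge 1-\epsilon$. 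The main obstacle, and the step deserving the most care, is establishing the drift inequality with constants genuinely independent of $N$: one must control the $O(1/N)$ discrepancies between the discrete generator and the fluid drift uniformly over the (unbounded) state space, and handle the boundary states where some $\bus_i$ or $\buv_i$ vanish and $g_i$ switches form. I expect this to go through because the potential problematic terms are all bounded by $\lambda+1$ and the negative drift from the central server ($-p$ in the relevant coordinates) dominates once $\Psi$ is large, but writing it carefully --- in particular choosing the weights $\beta_i$ explicitly and checking the boundary cases of $g_i$ --- is the crux of the proof.
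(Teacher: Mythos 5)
Your approach (Foster--Lyapunov drift on the embedded chain) is genuinely different from the paper's. The paper proves Proposition \ref{prop:vtinght} by a \emph{stochastic dominance} argument: it shows, via an interpretation of the central server as an LQF scheduler over a time-varying connectivity pattern and an appeal to a classical result of Tassiulas and Ephremides, that $\{\mbf{V}_1^{N,p}[n]\}_{n\ge0} \preceq_{st} \{\mbf{V}_1^{N,0}[n]\}_{n\ge0}$. Since the $p=0$ system is exactly $N$ independent $M/M/1$ queues, both positive recurrence and the uniform-in-$N$ tail bound follow cheaply from the explicit product-geometric stationary law at $p=0$ together with Markov's inequality ($\pi^{N,p}(\buv_1>M)\le\pi^{N,0}(\buv_1>M)\le\frac{\lambda}{(1-\lambda)M}$). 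What the dominance route buys is that one never has to produce a Lyapunov function or control boundary/discontinuity cases at all; the paper explicitly remarks that a Foster--Lyapunov proof should also work but is not pursued for exactly the reasons you ran into.

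The difficulty is that your proposal does not actually close the gap you correctly identify. You observe that $\mcal{G}\buv_1 \le \lambda-(1-p)\bus_1-p$ is not uniformly negative when $p<\lambda$ and many stations are empty, and you then propose a weighted functional $\Psi=\sum_i\beta_i\bus_i$ (or transporting a fluid Lyapunov function to the chain), but you never choose the $\beta_i$, never verify the drift, never handle the $g_i$ boundary switches, and you flag all of this yourself as ``the crux'' left undone. That is the entire content of the proposition in this route, so as written the proof is a plan, not a proof. Note also that the paper's global-stability argument (Theorem \ref{thm:fluidconv}) is a coordinate-wise monotonicity argument, not a Lyapunov-function argument, so there is no ready-made fluid Lyapunov function to transfer; you would have to build one.

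There is also a smaller but real logical slip in the tightness step: from a drift condition of the form ``$\mcal{G}\Psi\le -c$ outside a fixed sublevel set, $\mcal{G}\Psi\le b$ inside,'' stationarity only yields $\pi^{N,p}(\Psi\le m_0)\ge c/(c+b)$ at that one threshold $m_0$, not the decaying bound $\pi^{N,p}(\Psi>m)\le C/m$ for all $m$. To get a Markov-inequality-ready moment bound $\mb{E}_{\pi^{N,p}}[\Psi]\le C$ uniformly in $N$ you need a condition like $\mcal{G}\Phi\le -c\,\Psi+b$ for a second (e.g.\ quadratic) function $\Phi$ with $\Phi\ge0$, and then check that the stationarity identity $\mb{E}_{\pi}[\mcal{G}\Phi]=0$ is legitimate (integrability of $\Phi$). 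So even granting the uniform drift you posit, the tightness conclusion does not follow in the way you sketch.
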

\emph{Proof Sketch}. The proposition is proved using a stochastic dominance argument, by coupling with the case $p=0$. While the notation may seem heavy, the intuition is simple: when $p=0$, the system degenerates into a collection of $M/M/1$ queues with independent arrivals and departures (but possibly correlated initial queue lengths), and it is easy to show that the system is positive recurrent and the resulting sequence steady-state distributions is tight as $N\rar\ity$. The bulk of the proof is to formally argue that when $p>0$, the system behaves ``no worse'' than when $p=0$ in terms of positive recurrence and tightness of steady-state distributions. See Appendix \ref{app:domproof} for a complete proof using this stochastic dominance approach.$\qed$

{\bf Remark.} It is worth mentioning that the tightness of $\pi^{N,p}$ could alternatively be established by defining a Lyapunov function on $\overline{\mcal{V}}^N$ and checking its drift with respect to the underlying embedded-discrete-time Markov chain. By applying the Foster-Lyapunov stability criterion, one should be able to prove positive recurrence of $\buv$ and give an explicit upper-bound on the expected value of $\buv_1$ in steady state.\footnote{For an overview of the use of the Foster-Lyapunov criterion in proving stability in queueing networks, see, e.g., \cite{FK04}.} If this expectation is bounded as $N \rar \ity$, we will have obtained the desirable result by the Markov inequality. We do not pursue this direction in this thesis, because we believe that the stochastic dominance approach adopted here provides more insight by exploiting the monotonicity in $p$ in the steady-state queue length distribution. 

\begin{proof} {\bf(Theorem \ref{thm:convss})}
For the rest of the proof, since $p$ is fixed, we will drop $p$ in the super-script of $\pi^{N,p}$. By Proposition \ref{prop:vtinght}, the sequence of distributions $\pi^N$ is tight, in the sense that for any $\ep>0$, there exists $M(\ep) \in \N$ such that for all $M \geq M(\ep)$, $\pi^N\lt(\overline{\spv}^{M}\cap\mcal{Q}^N\rt) \geq 1-\ep, \mbox{ for all } N.$ 

The rest of the proof is based on a classical technique using continuous test functions (see Chapter 4 of \cite{KZ05}). The continuous dependence on initial conditions and the uniform rate of convergence established previously will be used here. Let $\overline{C}$ be the space of bounded continuous functions from $\overline{\spv}^\ity$ to $\mb{R}$. Define the mappings $T^N(t)$ and $T(t)$ on $\overline{C}$ by:
\beqn
\lt(T^N(t)f\rt)(\blv^0) &\bydef& \mb{E}\lt[f\lt(\buv(t)\rt)\mid\buv(0)=\blv^0\rt], \nnb \\
\mbox{and } \lt(T(t)f\rt)(\blv^0) &\bydef& \mb{E}\lt[f\lt(\blv(t)\rt)\mid\blv(0)=\blv^{0}\rt]=f(\blv(\blv^0,t)), \mbox{ for } f \in \overline{C}. \nnb
\eeqn 
With this notation, $\pi^N$ being a steady-state distribution for the Markov process $\buv(t)$ is equivalent to having for all $t \geq 0, \, f \in \overline{C}$,
\beq
\int_{\blv^0\in \overline{\spv}^\ity \cap \mcal{Q}^N} T^N(t)f(\blv^0) d \pi^N = \int_{\blv^0\in \overline{\spv}^\ity \cap \mcal{Q}^N} f(\blv^0) d \pi^N.
\eeq
Since $\{\pi^N\}$ is tight, it is sequentially compact under the topology of weak convergence, by Prokhorov's theorem. Let $\pi$ be the weak limit of some subsequence of $\lt\{\pi^N\rt\}$. We will show that for all $t \geq 0$, $f \in \overline{C}$,
\beq
\lbl{eq:pi1}
\lt|\int_{\blv^0\in \overline{\spv}^\ity} T(t)f(\blv^0) d \pi(\blv^0) - \int_{\blv^0\in \overline{\spv}^\ity} f(\blv^0) d \pi(\blv^0)\rt| = 0.
\eeq
In other words, $\pi$ is also a steady-state distribution for the deterministic fluid limit. Since by Theorem \ref{thm:ssprop}, the invariant state of the fluid limit is unique, Eq.\ \eqref{eq:pi1} will imply that $\pi\lt(\blv^I\rt)=1$, and this proves the theorem. 

To show Eq.\ \eqref{eq:pi1}, we write
\beqn
\lbl{eq:pi2}
& & \lt| \int{T(t)f d \pi} - \int{f d \pi}\rt| \leq \limsup_{N \rar \ity} \lt|\int{T(t)f d\pi} - \int{T(t)f d\pi^N}\rt|  \nnb \\
& & \quad\quad\quad\quad\quad\quad\quad  + \limsup_{N \rar \ity} \lt|\int{T(t)f d\pi^N} - \int{T^N(t)f d\pi^N}\rt|  \nonumber \\
& & \quad\quad\quad\quad\quad\quad\quad + \limsup_{N \rar \ity} \lt|\int{T^N(t) f d\pi^N} - \int{f d\pi}\rt|
\eeqn

We will show that all three terms on the right-hand side of Eq.\ \eqref{eq:pi2} are zero. Since $\blv(\blv^0,t)$ depends continuously on the initial condition $\blv^0$ (Corollary \ref{cor:contdep}), we have $T(t)f \in \overline{C}, \forall t \geq 0$, which along with $\pi^N \Rightarrow \pi$ implies that the first term is zero. For the third term, since $\pi^N$ is the steady-state distribution of $\buv$, we have that $\int{T^N(t) f d \pi^N} = \int{f d \pi^N}$, $\forall t \geq 0$, $f \in \overline{C}$. Since $\pi^N \Rightarrow \pi$, this implies that the last term is zero. 

To bound the second term, fix some $M\in\N$ and let $K = \overline{\spv}^M$. We have
\beqn
\lbl{eq:intm1}
& &\limsup_{N \rar \ity} \left|\int{T(t)f d\pi^N} - \int{T^N(t)f d\pi^N}\right|  \nnb \\
&\leq& \limsup_{N \rar \ity} \left|\int_{K}{T(t)f d\pi^N} - \int_{K}{T^N(t)f d\pi^N}\right| \nnb \\
& &+\limsup_{N \rar \ity} \left|\int_{K^c}{T(t)f d\pi^N} - \int_{K^c}{T^N(t)f d\pi^N}\right| \nnb \\
&\stackrel{(a)}{\leq}& \limsup_{N \rar \ity} \int_{K}{\lt|T^N(t)f - T(t)f\rt| d\pi^N} +\limsup_{N \rar \ity} 2\left\| f\right\| \pi^N\hspace{-3pt}\lt(K^c\rt)  \nnb \\ 
&\stackrel{(b)}{=}& \limsup_{N \rar \ity} 2\left\| f\right\| \pi^N\hspace{-3pt}\lt(K^c\rt), 
\eeqn
where $K^c \bydef \overline{\spv}^\ity - K$ and $\|f\|\bydef\sup_{\blv\in\overline{V}^\ity}|f(\blv)|$. The inequality  ($a$) holds because $T(t)$ and $T^N(t)$ are both conditional expectations and are hence contraction mappings with respect to the $\sup$-norm $\|f\|$. Equality $(b)$ ($\limsup_{N \rar \ity}$ $\int_{K}{\lt|T^N(t)f - T(t)f\rt| d\pi^N}=0$) can be shown using an argument involving interchanges of the order of integration, which essentially follows from the uniform rate of convergence to the fluid limit over the compact set $K$ of initial conditions (Corollary \ref{cor:unicon}). We isolate equality $(b)$ in the following claim:
\begin{clm} Let $K$ be a compact subset of $\overline{\mcal{V}}^\ity$, we have
\beq
\limsup_{N \rar \ity} \int_{K}{\lt|T^N(t)f - T(t)f\rt| d\pi^N}=0
\eeq
\end{clm}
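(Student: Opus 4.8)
The plan is to derive the claim directly from Corollary~\ref{cor:unicon} (the uniform rate of convergence of the stochastic sample paths to the fluid solution over a compact set of initial conditions); the only genuine subtlety is passing from that metric estimate to an estimate on $f$, which is merely continuous — not uniformly continuous — on the whole state space $\overline{\spv}^\ity$. As a preliminary reduction, I would use that $\blv \mapsto \blv_1$ is $\|\cdot\|_w$-continuous (indeed $\|\blv-\blw\|_w^2 \geq \tfrac12|\blv_1-\blw_1|^2$), hence bounded on the compact set $K$; so $K \subseteq \overline{\spv}^M$ for some integer $M$, and since the integrand $|T^N(t)f - T(t)f|$ is non-negative it suffices to prove the statement with $K$ replaced by $\overline{\spv}^M$. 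Because $\pi^N$ is supported on $\spq^N$, the region of integration is then exactly the set $K^N \bydef \overline{\spv}^M \cap \spq^N$ appearing in Corollary~\ref{cor:unicon}.

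The starting point is the identity, valid for every \emph{deterministic} $\blv^0 \in K^N$,
\beq
T^N(t)f(\blv^0) - T(t)f(\blv^0) = \int_{\Omega_1}\lt[f\lt(\buv(\blv^0,\omega_1,t)\rt) - f\lt(\blv(\blv^0,t)\rt)\rt]\, d\pb_1(\omega_1),
\eeq
which holds because all the randomness in the evolution of $\buv$ from a fixed initial state is carried by $\omega_1 \in \Omega_1$, and $T(t)f(\blv^0) = f(\blv(\blv^0,t))$. Next I would confine the relevant states to a single compact set: by Eq.~\eqref{eq:bdv} (a consequence of the Lipschitz bound of Proposition~\ref{prop:smooth}), $\blv(\blv^0,t) \in \overline{\spv}^{M'}$ for all $\blv^0 \in \overline{\spv}^M$, where $M'$ depends only on $M$, $L$ and $t$; and whenever $\|\buv(\blv^0,\omega_1,t) - \blv(\blv^0,t)\|_w \leq 1$ one also has $\buv(\blv^0,\omega_1,t) \in \overline{\spv}^{M'+2}$ (using $|\buv_1-\blv_1| \leq \sqrt2\,\|\buv-\blv\|_w$ and the fact that $\buv(\blv^0,\omega_1,t)$ is always a valid state). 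Since $\overline{\spv}^{M'+2}$ is $\|\cdot\|_w$-compact (all its coordinates lie in a bounded interval, so it is a $\|\cdot\|_w$-closed subset of a compact product space, exactly as used in the proof of Proposition~\ref{prop:unicon}), $f$ is \emph{uniformly} continuous on it: for the given $\ep>0$ there is $\delta \in (0,1)$ such that $\blx,\bly \in \overline{\spv}^{M'+2}$ and $\|\blx-\bly\|_w < \delta$ imply $|f(\blx)-f(\bly)| < \ep$.

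With this $\delta$ fixed, set $B_N \bydef \lt\{\omega_1 \in \Omega_1 : \sup_{\blv^0 \in K^N} d^\zp(\buv(\blv^0,\omega_1,\cdot),\blv(\blv^0,\cdot)) > \delta\rt\}$, so that $\pb_1(B_N) \rar 0$ as $N \rar \ity$ by Corollary~\ref{cor:unicon}. On $B_N^c$ the bound $\|\buv(\blv^0,\omega_1,t) - \blv(\blv^0,t)\|_w \leq \delta$ holds \emph{simultaneously for every} $\blv^0 \in K^N$ — this uniformity, coming from the fact that the supremum over initial conditions sits inside the probability in Corollary~\ref{cor:unicon}, is the crux of the argument — so the integrand of the displayed identity is $<\ep$ there; on $B_N$ it is at most $2\|f\|$. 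Splitting the $\omega_1$-integral accordingly gives $|T^N(t)f(\blv^0) - T(t)f(\blv^0)| \leq \ep + 2\|f\|\,\pb_1(B_N)$ uniformly in $\blv^0 \in K^N$. Integrating against the probability measure $\pi^N$ and taking $N \rar \ity$ yields $\limsup_{N \rar \ity}\int_K |T^N(t)f - T(t)f|\, d\pi^N \leq \ep$, and since $\ep>0$ is arbitrary the limsup is $0$.

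The main obstacle is precisely the localization step of the second paragraph: $f$ is only assumed continuous on the unbounded space $\overline{\spv}^\ity$, so pointwise closeness of $\buv(\blv^0,\omega_1,t)$ to $\blv(\blv^0,t)$ does not by itself control $f$. Resolving it requires combining the a priori Lipschitz bound on fluid solutions (to trap the fluid limit, and hence the nearby stochastic state, in a fixed compact set) with the \emph{uniform} closeness supplied by Corollary~\ref{cor:unicon}; once this is in place, everything else — the triangle inequality for the $\omega_1$-integral, the split over $B_N$ and $B_N^c$, and the passage to the $\limsup$ — is routine bookkeeping.
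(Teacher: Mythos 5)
Your proof is correct and takes essentially the same route as the paper's: both hinge on Corollary~\ref{cor:unicon} (uniform rate of convergence over the compact set $K^N = \overline{\spv}^M \cap \spq^N$ of initial conditions), combined with the uniform continuity of $f$ on a suitable compact subset of $\overline{\spv}^\ity$ and a good-event/bad-event split. If anything you are a bit more explicit than the paper at two points: you use the a priori Lipschitz bound on fluid trajectories to pin the relevant states into a single compact set $\overline{\spv}^{M'+2}$ (where the paper works with the $\delta$-extension $K^\delta$ and its modulus of continuity), and you keep the $2\|f\|\,\pb_1(B_N)$ error from the bad event written out, whereas the paper's inequality $(a)$ leaves that term implicit in the appeal to Eq.~\eqref{eq:unirate2} and boundedness of $f$.
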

\begin{proof}
Fix any $\delta > 0$, there exists $N(\delta)>0$ such that for all $N \geq N(\delta)$, we have
\beqn
& & \lt|\int_{K}{T(t)f d\pi^N} - \int_{K}{T^N(t)f d\pi^N}\rt| \leq
\int_{K} \lt|T(t)f - T^N(t)f\rt| d\pi^N \nnb \\
&=& \int_{\blv^0 \in K} \Big|f(\blv(\blv^0,t)) - \mb{E}\lt[ f\lt(\buv(t)\rt)\big| \buv(0)=\blv^0\rt]\Big| d\pi^N(\blv^0)\nnb \\
&\leq& \int_{\blv^0 \in K} \lt( \int_{\blv^t\in\overline{V}^\ity\cap\mcal{Q}^N} \lt|f\lt(\blv(\blv^0,t)\rt) - f\lt(\blv^t\rt) \rt| d \pb_{\lt.\buv(t)\rt|\buv(0)}\lt(\lt.\blv^t\rt|\blv^0\rt) \rt) d\pi^N(\blv^0) \nnb \\
&\stackrel{(a)}{\leq}& \int_{\blv^0 \in K} \sup_{\blv^t \in \overline{\mcal{V}}^\ity, \lt\|\blv^t-\blv(\blv^0,t)\rt\|_w\leq\delta}\lt|f(\blv(\blv^0,t))-f(\blv^t)\rt| d\pi^N(\blv^0) \nnb \\ 
&\leq& \omega_{f}(K^\delta,\delta), \nnb
\eeqn
where $K^\delta$ is the $\delta$-extension of $K$,
\beq
K^\delta \bydef \lt\{\blx \in \overline{\mcal{V}}^\ity: \|\blx-\bly\|_w \leq \delta \mbox{ for some } y\in K \rt\},
\eeq
and $\omega_{f}(X,\delta)$ is defined to be the modulus of continuity of $f$ restricted to set $X$:
\beq
\omega_{f}(K,\delta) \bydef \sup_{\blx, \bly \in X,\lt\|\blx-\bly\rt\|_w \leq \delta}\lt|f(\blx)-f(\bly)\rt|.
\eeq

\vspace{-10pt}

To see why inequality $(a)$ holds, recall that by Corollary \ref{cor:unicon}, starting from a compact set of initial conditions, the sample paths of a finite system stay uniformly close to that of the fluid limit on a compact time interval with high probability. Inequality $(a)$ then follows from Eq.~\eqref{eq:unirate2} and the fact that $f$ is bounded. Because $K$ is a compact set, it is not difficult show that $K^{\delta^0}$ is also compact for some fixed $\delta^0>0$. Hence $f$ is uniformly continuous on $K^{\delta^0}$, and we have
\beq
\limsup_{N \rar \ity} \left|\int_{K}{T(t)f d\pi^N} - \int_{K}{T^N(t)f d\pi^N}\right| \leq \limsup_{\delta \rar 0} \omega_{f}(K^{\delta^0},\delta) =0,
\eeq
which establishes the claim.
\end{proof}
\vspace{-5pt}
Going back, since Eq.~\eqref{eq:intm1} holds for \emph{any} $K = \overline{\spv}^M, M \in \N$, we have, by the tightness of $\pi^N$, that the middle term in Eq.~\eqref{eq:pi2} is also zero. This shows that any limit point $\pi$ of $\lt\{\pi^N\rt\}$ is indeed the unique invariant state of the fluid model ($\blv^I$). This completes the proof of Theorem \ref{thm:convss}.$\ftomb$
\end{proof}

\chapter{Conclusions and Future Work}

The overall objective of this thesis is to study how the degree of centralization in allocating computing or processing resources impacts performance. This investigation was motivated by applications in server farms, cloud centers, as well as more general scheduling problems with communication constraints. Using a fluid model and associated convergence theorems, we showed that any small degree of centralization induces an exponential performance improvement in the steady-state scaling of system delay, for sufficiently large systems. Simulations show good accuracy of the model even for moderately-sized finite systems ($N=100$).

There are several interesting and important questions which we did not address in this thesis. We have left out the question of what happens when the central server adopts a scheduling policy different from the Longest-Queue-First (LQF) policy considered in this thesis. Since scheduling a task from a longest queue may require significant global communication overhead, other scheduling policies that require less global information may be of great practical interest. Some alternatives include
\benum
\item (\emph{Random $k$-Longest-Queues}) The central server always serves a task from a queue chosen uniformly at random among the $k$ most loaded queues, where $k\geq 2$ is a fixed integer. Note that the LQF policy is a sub-case, corresponding to $k=1$.
\item (\emph{Random Work-Conserving}) The central server always serves a task from a queue chosen uniformly at random among all non-empty queues.
\eenum
It will be interesting to see whether a similar exponential improvement in the delay scaling is still present under these other policies. Based on the analysis done in this thesis, as well as some heuristic calculations using the fluid model, we conjecture that in order for the phase transition phenomenon to occur, a \emph{strictly positive} fraction of the central service tokens must be used to serve a longest queue. Hence, between the two policies listed above, the former is more likely to exhibit a similar delay scaling improvement than the latter. 

Assuming the LQF policy is used, another interesting question is whether a non-trivial delay scaling can be observed if $p$, instead of being fixed, is a function of $N$ and decreases to zero as $N \rar \ity$. This is again of practical relevance, because having a central server whose processing speed scales linearly with $N$ may be expensive or infeasible for certain applications. To this end, we conjecture that the answer is negative, in that as long as $\limsup_{N\rar\ity}p(N)=0$, the limiting delay scaling will be the same as if $p(N)$ is fixed at $p=0$, in which case $\blv_1 \sim \frac{1}{1-\lambda}$ as $\lambda \rar 1$.

On the modeling end, some of our current assumptions could be restrictive for practical applications. For example, the transmission delays between the local and central stations are assumed to be negligible compared to processing times; this may not be true for data centers that are separated by significant geographic distances. Also, the arrival and processing times are assumed to be Poisson, while in reality more general traffic distributions (e.g., heavy-tailed traffic) are observed. Finally, the speed of the central server may not be able to scale linearly in $N$ for large $N$. Further work to extend the current model by incorporating these realistic constraints could be of great interest, although obtaining theoretical characterizations seems quite challenging. 

Lastly, the surprisingly simple expressions in our results make it tempting to ask whether similar performance characterizations can be obtained for other stochastic systems with partially centralized control laws; insights obtained here may find applications beyond the realm of queueing theory.

\appendix
\chapter{Appendix: Additional Proofs}
\lbl{app:techproof}

\section{Complete Proof of Proposition~\ref{prop:smooth}}
\lbl{app:smoothproof}
Here we will follow a line of argument in \cite{BRS98} to establish the existence of a set of fluid limits. We begin with some definitions. Recall the uniform metric, $d(\cdot, \cdot)$, defined on $D[0,T]$:
\beq
d(x,y) \bydef \sup_{t \in [0,T]}\lt|x(t)-y(t)\rt|, \quad x, y \in D[0,T].
\eeq

\begin{defn}
Let $E_c$ be a non-empty compact subset of $D[0,T]$. A sequence of subsets of $D[0,T]$, $\mcal{E}=\lt\{E_N\rt\}_{N\geq1}$, is said to be \emph{asymptotically close} to the set $E_c$ if the distance to $E_c$ of any element in $E_N$ decreases to zero uniformly, i.e.:
\beq
\lim_{N \rar\infty}\sup_{x \in E_N} d\lt(x,E_c\rt) =0,
\eeq
where the distance from a point to a set is defined as
\beq
d\lt(x,E_c\rt) \bydef \inf_{y\in E_c}d\lt(x,y\rt).
\eeq
\end{defn}

\begin{defn}
A point $y \in D[0,T]$ is said to be a cluster point of a sequence $\lt\{x_N\rt\}_{N\geq1}$,  if its $\gm-$neighborhood is visited by $\lt\{x_N\rt\}_{N\geq1}$ infinitely often for all $\gm > 0$, i.e., 
\beq
\liminf_{n\rar\infty}d\lt(x_N,y\rt) = 0.
\eeq
A point $y \in D[0,T]$ is a cluster point of a sequence of subsets $\mathcal{E}=\lt\{E_N\rt\}_{N\geq 1}$, if it is a cluster point of some $\lt\{x_N\rt\}_{N\geq1}$ such that:
\beq
x_N \in E_N, \quad \forall N \geq 1.
\eeq
\end{defn}

\begin{lemma}
\lbl{lm:msp}
Let $C\lt(\mathcal{E}\rt)$ be the set of cluster points of $\mathcal{E}=\lt\{E_N\rt\}_{N\geq 1}$. If $\mathcal{E}$ is asymptotically close to a compact and closed set $E_c$ then,
\benum
\item $\mcal{E}$ is asymptotically close to $C\lt(\mathcal{E}\rt)$.
\item $C\lt(\mathcal{E}\rt) \subset E_c$.
\eenum
\end{lemma}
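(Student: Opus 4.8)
\textbf{Proof plan for Lemma \ref{lm:msp}.}

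The plan is to prove the two parts in the stated order, since part (1) (asymptotic closeness of $\mcal{E}$ to $C(\mcal{E})$) is the substantive claim, while part (2) ($C(\mcal{E}) \subset E_c$) will follow easily from compactness of $E_c$ together with its closedness. For part (2), I would take any cluster point $y \in C(\mcal{E})$; by definition there is a sequence $x_N \in E_N$ with $\liminf_{N\to\infty} d(x_N, y) = 0$, so along a subsequence $x_{N_k} \to y$. Since $\mcal{E}$ is asymptotically close to $E_c$, we have $d(x_{N_k}, E_c) \to 0$, hence there exist $z_k \in E_c$ with $d(x_{N_k}, z_k) \to 0$; by compactness of $E_c$ a further subsequence of $\{z_k\}$ converges to some $z \in E_c$ (using that $E_c$ is closed), and by the triangle inequality $y = z \in E_c$. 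This gives $C(\mcal{E}) \subset E_c$.

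For part (1), I would argue by contradiction. Suppose $\mcal{E}$ is \emph{not} asymptotically close to $C(\mcal{E})$; then there is some $\epsilon > 0$, a subsequence of indices, and points $x_N \in E_N$ (along that subsequence) with $d(x_N, C(\mcal{E})) > \epsilon$ for all $N$ in the subsequence. On the other hand, since $\mcal{E}$ is asymptotically close to the compact set $E_c$, we have $d(x_N, E_c) \to 0$, so we may pick $z_N \in E_c$ with $d(x_N, z_N) \to 0$. By compactness of $E_c$, the sequence $\{z_N\}$ has a convergent subsequence $z_{N_j} \to z \in E_c$; then $x_{N_j} \to z$ as well, so $z$ is a cluster point of the sequence $\{x_N\}_{N \ge 1}$ with $x_N \in E_N$, i.e.\ $z \in C(\mcal{E})$. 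But then $d(x_{N_j}, C(\mcal{E})) \le d(x_{N_j}, z) \to 0$, contradicting $d(x_N, C(\mcal{E})) > \epsilon$ along the subsequence. Hence $\mcal{E}$ is asymptotically close to $C(\mcal{E})$.

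The main obstacle — really the only delicate point — is the bookkeeping of nested subsequences: one must be careful that the $x_N$'s exhibiting the failure of asymptotic closeness, the approximating points $z_N \in E_c$, and the convergent sub-subsequence extracted by compactness all refer consistently to the same index set, so that the limit $z$ is genuinely realized as a cluster point of a sequence with one representative drawn from each $E_N$ (or at least from infinitely many of them, which is all the definition of cluster point requires). Once the index bookkeeping is set up carefully, both parts are short applications of the triangle inequality and sequential compactness. I would also note that $C(\mcal{E})$ is automatically a subset of $D[0,T]$ and, by part (2), of the compact set $E_c$, so all the distances $d(\cdot, C(\mcal{E}))$ and $d(\cdot, E_c)$ appearing above are well-defined and finite.
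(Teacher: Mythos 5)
Your proof is correct and uses essentially the same approach as the paper's: contradiction plus sequential compactness of $E_c$ for part (1), and the triangle inequality with closedness of $E_c$ for part (2). One small simplification available in your part (2): once you have $z_k \in E_c$ with $d(x_{N_k}, z_k) \to 0$ and $x_{N_k} \to y$, the triangle inequality already gives $z_k \to y$ directly, so closedness of $E_c$ alone yields $y \in E_c$ without needing to extract a further subsequence via compactness.
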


\begin{proof}
Suppose that the first claim is false. Then there exists a subsequence $\lt\{x_{N_i}\rt\}_{i\geq 1}$, where $x_{N_i} \in E_{N_i}, \forall i$, such that
\beq
\lbl{eq:posdis1}
d\lt(x_{N_i},C\lt(\mcal{E}\rt)\rt) = \gm > 0, \quad \forall i\geq1.
\eeq
However, since $\mcal{E}$ is asymptotically close to $E_c$ by assumption, there exists $\lt\{y_i\rt\} \subset E_c$ such that 
\beq
\lbl{eq:tight1}
d\lt(x_{N_i},y_i\rt) \rightarrow 0, \quad \mbox{as } i \rar \infty.
\eeq
Since $E_c$ is compact, $\lt\{y_i\rt\}$ has a convergent subsequence with limit $\tilde{y}$. By Eq.~\eqref{eq:tight1}, $\tilde{y}$ is a cluster point of $\lt\{x_{N_i}\rt\}$, and hence a cluster point of $\mcal{E}$, contradicting Eq.~\eqref{eq:posdis1}. This proves the first claim.

The second claim is an easy consequence of the closedness of $E_c$: Let $\tilde{x}$ be any point in $C\lt(\mcal{E}\rt)$. There exists a subsequence $\lt\{x_{N_i}\rt\}$, where $x_{N_i} \in E_{N_i}, \forall i$, such that \\$\lim_{i \rar \infty} d\lt(x_{N_i},\tilde{x}\rt)=0$, by the definition of a cluster point. By the same reasoning as the first part of the proof (Eq.~\eqref{eq:tight1}), there exists a sequence $\lt\{y_i\rt\} \subset E_c$ which also converges to $\tilde{x}$. Since $E_c$ is closed, $\tilde{x} \in E_c$.
\end{proof}

We now put the above definition into our context. Define $\mcal{E}=\lt\{E_N\rt\}_{N\geq1}$ to be a sequence of subsets of $D[0,T]$ such that
\beqn
E_N &=& \lt\{x \in D[0,T]: |x\lt(0\rt)-x^0|\leq M_N, \mbox{ and } \rt. \nonumber \\ 
 & & \lt. |x\lt(a\rt)-x\lt(b\rt)| \leq L|a-b| +\gm_N, \quad \forall a,b \in [0,T]\rt\},
\lbl{eq:EN}
\eeqn
where $x^0$ is a constant, $M_N \downarrow 0$ and $\gm_N \downarrow 0$ are two sequences of diminishing non-negative numbers. We first characterize the set of cluster points of the sequence $\mcal{E}$. Loosely speaking, $\mcal{E}$ represents a sequence of sample paths that tend increasingly ``close'' to the set of $L$-Lipschitz continuous functions, and that all elements of $\mcal{E}$ are ``$\gm_N$-approximate'' Lipschitz-continuous. The definition below and the lemma that follows will formalize this notion.

Define $E_c$ as the set of \emph{Lipschitz-continuous} functions on $[0,T]$ with Lipschitz constant $L$ and initial values bounded by a positive constant $M$, defined by:
\beq
E_c \bydef \lt\{x \in D[0,T]: |x\lt(0\rt)| \leq M, \mbox{ and } |x\lt(a\rt)-x\lt(b\rt)| \leq L|a-b|, \forall a,b \in [0,T] \rt\}.
\eeq

We have the following characterization of $E_c$.

\begin{lemma}
\lbl{lm:Eccom}
$E_c$ is compact.
\end{lemma}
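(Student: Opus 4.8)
The plan is to recognize $E_c$ as precisely the set to which the Arzel\`a--Ascoli theorem applies, and then check closedness directly.

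First I would observe that every $x \in E_c$ is continuous on $[0,T]$, since $L$-Lipschitz continuity implies continuity; hence $E_c \subset C[0,T] \subset D[0,T]$, and on $C[0,T]$ the uniform metric $d(\cdot,\cdot)$ coincides with the usual supremum metric. Moreover a uniform limit of continuous functions is continuous, so $C[0,T]$ is a closed subset of $(D[0,T],d)$. Consequently it suffices to prove that $E_c$ is a compact subset of $C[0,T]$.

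Next I would verify the two hypotheses of Arzel\`a--Ascoli. For uniform boundedness: given any $x \in E_c$ and $t \in [0,T]$, $|x(t)| \leq |x(0)| + |x(t)-x(0)| \leq M + Lt \leq M + LT$, so $E_c$ is uniformly bounded by $M+LT$. For equicontinuity: given $\epsilon > 0$, take $\delta = \epsilon/L$; then for every $x \in E_c$ and all $a,b \in [0,T]$ with $|a-b| < \delta$ we have $|x(a)-x(b)| \leq L|a-b| < \epsilon$, which is in fact uniform equicontinuity. By Arzel\`a--Ascoli, $E_c$ is relatively compact in $C[0,T]$: every sequence in $E_c$ admits a uniformly convergent subsequence. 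Finally, I would check that $E_c$ is closed: if $\{x_n\} \subset E_c$ and $d(x_n,x) \to 0$, then $x \in C[0,T]$ as noted, and passing to the pointwise limit in the inequalities $|x_n(0)| \leq M$ and $|x_n(a)-x_n(b)| \leq L|a-b|$ (each valid for fixed $a,b$) yields $|x(0)| \leq M$ and $|x(a)-x(b)| \leq L|a-b|$ for all $a,b \in [0,T]$, so $x \in E_c$. A closed subset of a relatively compact set is compact, which gives the claim.

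I do not expect any real obstacle here: the statement is a direct application of Arzel\`a--Ascoli together with the stability of the two defining inequalities under uniform (hence pointwise) limits. The only point deserving a moment's care is the topological bookkeeping — identifying the uniform metric on the RCLL space $D[0,T]$ with the supremum metric on the continuous subspace $C[0,T]$, and recalling that $C[0,T]$ is closed in $(D[0,T],d)$, so that compactness established inside $C[0,T]$ transfers to $D[0,T]$.
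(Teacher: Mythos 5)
Your proof is correct and takes essentially the same route as the paper: apply Arzel\`a--Ascoli to obtain sequential relative compactness, then observe that the defining inequalities ($L$-Lipschitz continuity and the bound on the initial value) pass to the uniform limit, so $E_c$ is also closed. Your version simply spells out the uniform boundedness, equicontinuity, and the $C[0,T]$ versus $D[0,T]$ bookkeeping in more detail than the paper does.
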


\begin{proof}
$E_c$ is a set of $L$-Lipschitz continuous functions $x(\cdot)$ on $[0,T]$ with initial values contained in a closed and bounded interval. By the Arzela-Ascoli theorem, every sequence of points in $E_c$ contains a further subsequence which converges to some $x^*(\cdot)$ uniformly on $[0,T]$. Since all elements in $E_c$ are $L$-Lipschitz continuous, $x^*(\cdot)$ is also Lipschitz continuous on $[0,T]$. It is clear that $x^*(\cdot)$ also satisfies $x^*(0)\leq M$. Hence, $x^*(\cdot) \in E_c$.
\end{proof}

\begin{lemma}
\lbl{lm:EnEc}
$\mcal{E}$ is asymptotically close to $E_c$. 
\end{lemma}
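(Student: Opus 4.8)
The plan is to show that every element of $E_N$ lies within distance $\gm_N$ of a genuinely $L$-Lipschitz function having the correct initial bound, so that $\sup_{x\in E_N} d(x,E_c)\le \gm_N\to 0$. The naive idea — interpolate $x\in E_N$ linearly on a fine grid — does not quite work: on a subinterval of length $h$ the interpolant has slope at most $(Lh+\gm_N)/h = L+\gm_N/h$, which exceeds $L$, so the interpolant need not lie in $E_c$. Instead I would pass to the \emph{greatest $L$-Lipschitz minorant} of $x$, which regularizes $x$ without ever increasing local slopes.

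Concretely, for $x\in E_N$ define
\[
y_x(t)\ :=\ \inf_{s\in[0,T]}\bigl(x(s)+L|t-s|\bigr),\qquad t\in[0,T].
\]
The first step is to check that $y_x\in D[0,T]$ and is $L$-Lipschitz on $[0,T]$: for $t,t'\in[0,T]$ and any $s$, $x(s)+L|t-s|\le x(s)+L|t'-s|+L|t-t'|$, and taking the infimum over $s$ gives $y_x(t)\le y_x(t')+L|t-t'|$; by symmetry $|y_x(t)-y_x(t')|\le L|t-t'|$. (In particular $y_x$ is continuous, hence RCLL, and the infimum is finite because $x$, being RCLL on the compact interval $[0,T]$, is bounded there.) The second step is the key estimate $x(t)-\gm_N\le y_x(t)\le x(t)$ for all $t$: the upper bound follows by taking $s=t$; for the lower bound, the defining property of $E_N$ gives $x(s)\ge x(t)-L|s-t|-\gm_N$ for every $s\in[0,T]$, so $x(s)+L|t-s|\ge x(t)-\gm_N$, and taking the infimum over $s$ yields $y_x(t)\ge x(t)-\gm_N$. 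Hence $d(x,y_x)=\sup_{t\in[0,T]}|x(t)-y_x(t)|\le\gm_N$.

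It then remains to verify $y_x\in E_c$ for all large $N$. From the initial-value constraint in the definition of $E_N$ together with the previous step, $|y_x(0)|\le|x(0)|+\gm_N\le|x^0|+M_N+\gm_N$; since $M_N,\gm_N\downarrow0$ and we may assume (as holds in the application of this lemma within the proof of Proposition~\ref{prop:smooth}) that the constant $M$ defining $E_c$ satisfies $M>|x^0|$, there is $N_0$ with $|y_x(0)|\le M$ for all $N\ge N_0$ and all $x\in E_N$. Thus $y_x\in E_c$ and $d(x,E_c)\le d(x,y_x)\le\gm_N$ whenever $N\ge N_0$, so $\sup_{x\in E_N}d(x,E_c)\le\gm_N$ for $N\ge N_0$. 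Discarding the finitely many indices $N<N_0$ does not affect the limit, so $\lim_{N\to\infty}\sup_{x\in E_N}d(x,E_c)=0$, which is exactly the assertion that $\mcal{E}$ is asymptotically close to $E_c$.

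I expect the only genuine obstacle to be recognizing the right regularization (the greatest $L$-Lipschitz minorant, i.e.\ the infimal convolution of $x$ with the cone $L|\cdot|$): once this construction is in hand, every verification above is a one-line application of the triangle inequality and of the defining inequalities of $E_N$. The one bookkeeping point worth stating explicitly is the harmless hypothesis $M>|x^0|$ — alternatively one replaces $y_x$ by $y_x$ shifted by a constant of magnitude at most $M_N+\gm_N$ — which is needed only to absorb the vanishing slack $M_N$ in the initial condition.
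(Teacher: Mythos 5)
Your proof is correct, but it takes a genuinely different route from the paper's. The paper establishes the approximation by a dyadic bisection scheme: first (Claim~\ref{clm:xy1}) one can pick endpoint values $y_0,y_T$ within $\gm$ of $x(0),x(T)$ with $|y_T-y_0|\le LT$, then (Claim~\ref{clm:xy2}) recursively bisect the interval to produce a piecewise-linear $L$-Lipschitz function $y^N$ satisfying $\sup_t|y^N(t)-x(t)|\le 2\gm+LT/2^{N-1}$; one then lets the number of bisections grow and absorbs the two terms into a constant multiple of $\gm$. You instead use the Pasch--Hausdorff envelope (infimal convolution with the cone $L|\cdot|$), i.e.\ the greatest $L$-Lipschitz minorant $y_x(t)=\inf_s\bigl(x(s)+L|t-s|\bigr)$, and verify in two one-line triangle-inequality steps that $y_x$ is $L$-Lipschitz and that $x(t)-\gm_N\le y_x(t)\le x(t)$. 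This is a cleaner one-shot construction and yields the sharper bound $d(x,E_c)\le\gm_N$ rather than a constant multiple, at the cost of requiring the reader to recognize the envelope trick; the paper's bisection argument is more elementary and self-contained. One thing you do that the paper elides: you explicitly track the initial-value constraint $|y_x(0)|\le M$ of $E_c$, observing that it holds for large $N$ provided $M>|x^0|$ and offering a shift-by-a-small-constant fix otherwise. The paper's proof simply exhibits an $L$-Lipschitz $y$ close to $x$ without remarking on the $|y(0)|\le M$ requirement in the definition of $E_c$; this is a harmless omission (the constant $M$ is chosen with enough slack in the application), but your version is more careful on this point.
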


\begin{proof}
It suffices to show for all $x \in E_N$, there exists some $L$-Lipschitz-continuous function $y$ such that
\beq
d\lt(x,y\rt) \leq C\gm_N.
\eeq
where $C$ is a fixed constant, independent of $N$. Fixing $x \in D[0,T]$, such that 
\beq
\lbl{eq:x1d}
|x\lt(a\rt)-x\lt(b\rt)| \leq L|a-b|+\gm, \forall a,b \in [0,T],
\eeq 
we will use a truncation argument to construct an $L$-Lipschitz-continuous function $y\lt(t\rt)$ that uniformly approximates $x\lt(t\rt)$. For the rest of the proof, we use the short-hand $[a\pm\gm]$ to denote the closed interval $[a-\gm,a+\gm]$. The following two claims are useful:

\begin{clm}
\lbl{clm:xy1}
There exist $y_0 \in [x\lt(0\rt)\pm\gm]$ and $y_T \in [x\lt(T\rt)\pm\gm]$ such that  
\beq
|y_T-y_0|\leq TL.
\eeq
In particular, this implies that the linear interpolation between $\lt(0,y_0\rt)$ and $\lt(T,y_T\rt)$
\beq
y\lt(t\rt) \bydef y_0 + \frac{y_T-y_0}{T}t
\eeq
is $L$-Lipschitz-continuous.
\end{clm}

\begin{proof}
Substituting $a=0, b=T$ in Eq.~\eqref{eq:x1d}, we get
\beq
\lbl{eq:smcl1}
-LT-\gm \leq x\lt(0\rt)-x\lt(T\rt) \leq LT + \gm.
\eeq
Write 
\beq
\lbl{eq:smcl2}
y_0-y_T=x\lt(0\rt)-x\lt(T\rt)+ \lt(y_0-x\lt(0\rt)\rt)-\lt(y_T-x\lt(T\rt)\rt).
\eeq
The claim then follows from the above Eqs.~\eqref{eq:smcl1} and \eqref{eq:smcl2}, by noting that $\lt(y_0-x\lt(0\rt)\rt)- \lt(y_T-x\lt(T\rt)\rt)$ can take \emph{any} value between $-2\gm$ and $2\gm$.
\end{proof}

\begin{clm}
\lbl{clm:xy2}
Given any two points $y_0 \in [x\lt(0\rt)\pm\gm]$ and $y_T \in [x\lt(T\rt)\pm\gm]$ such that $\lt|y_0-y_T\rt|\leq TL$, there exists $y_{\frac{T}{2}} \in \lt[x\lt(\frac{T}{2}\rt)\pm\gm\rt]$ such that
\beq
\lbl{eq:ym}
\lt|y_0-y_{\frac{T}{2}}\rt|\leq \frac{TL}{2}, \mbox{ and } \lt|y_{\frac{T}{2}}-y_T\rt|\leq \frac{TL}{2}.
\eeq
\end{clm}

\begin{proof}
Without loss of generality, assume that $y_0 \leq y_T$. We have,
\beq
\lt|y_0-z\rt|\geq \lt|y_T-z\rt|, \forall z \geq \frac{y_0+y_T}{2}, \mbox{ and } |y_0-z| \leq \lt|y_T-z\rt|, \forall z \leq \frac{y_0+y_T}{2}.
\lbl{eq:ymz}
\eeq
By Claim \ref{clm:xy1}, we can find $y_{\frac{T}{2}}^l, y_{\frac{T}{2}}^r \in \lt[x\lt(\frac{T}{2}\rt)\pm\gm\rt]$ such that
\beq
\lt|y_0-y_{\frac{T}{2}}^l\rt| \leq \frac{TL}{2}, \mbox{ and } \lt|y_{\frac{T}{2}}^r-y_T \rt|\leq \frac{TL}{2}.
\eeq
By Eq.~\eqref{eq:ymz}, at least one of $y_{\frac{T}{2}}^l$ and $y_{\frac{T}{2}}^r$ can be used as $y_{\frac{T}{2}}$ to satisfy Eq.~\eqref{eq:ym}. An identical argument applies if $y_0 \geq y_T$.
\end{proof}

Using Claim \ref{clm:xy2}, we can repeat the same process to find $y_{\frac{T}{4}}$ given $y_0$ and $y_{\frac{T}{2}}$, and $y_{\frac{3T}{4}}$ given $y_{\frac{T}{2}}$ and $y_{T}$. Proceeding recursively as such, at the $N$th iteration we will have found a sequence $\lt\{y_{\frac{iT}{2^N}}\rt\}_{i=0}^{2^N}$, such that 
\beq
\lbl{eq:ychop}
y_{\frac{iT}{2^N}} \in \lt[x\lt(\frac{iT}{2^N}\rt)\pm\gm\rt] \mbox{ and} \lt|y_{\frac{iT}{2^N}}-y_{\frac{\lt(i+1\rt)T}{2^N}} \rt|\leq \frac{LT}{2^N}.
\eeq
Denote by $y^N\lt(t\rt)$ the linear interpolation of $\lt\{y_{\frac{iT}{2^N}}\rt\}_{i=0}^{2^N}$, we then have that for all $0 \leq t \leq \frac{T}{2^N}$
\beqn
\lt|y^N\lt(t\rt)-x\lt(t\rt)\rt| &=& \lt|y^N\lt(0\rt)-x\lt(0\rt)+\lt(y^N\lt(t\rt)-y^N\lt(0\rt)\rt)-\lt(x\lt(t\rt)-x\lt(0\rt)\rt) \rt| \nnb \\
&\leq& \lt|y^N\lt(0\rt)-x\lt(0\rt)\rt|+\lt|y^N\lt(\frac{T}{2^N}\rt)-y^N\lt(0\rt)\rt| + \lt|x\lt(\frac{T}{2^N}\rt)-x\lt(0\rt)\rt| \nnb \\
&\leq& \gm + \frac{LT}{2^N} + \lt(\frac{LT}{2^N} + \gm\rt),
\eeqn
where the first two terms in the last inequality follow from Eq.~\eqref{eq:ychop}, and the last term follows from Eq.~\eqref{eq:x1d}. An identical bound on $\lt|y^N\lt(t\rt)-x\lt(t\rt)\rt|$ holds over all other intervals $\lt[\frac{iT}{2^N},\frac{\lt(i+1\rt)T}{2^N}\rt], 1 \leq i \leq N-1$. Since $y^N\lt(t\rt)$ is a piece-wise linear with magnitudes of the slopes no greater than $L$, we have constructed a sequence of $L$-Lipschitz-continuous functions such that
\beq
\sup_{0 \leq t\leq T} \lt|y^N\lt(t\rt)-x\lt(t\rt)\rt| \leq 2\gm+\frac{LT}{2^{N-1}}.
\eeq
The proof for the lemma is completed by letting $C$ be any constant greater than $2$, and pick the approximating function $y$ to be any $y^N$ for sufficiently large $N$. \end{proof}

Finally, the following lemma states that all sample paths $\bux\lt(\omega,\cdot\rt)$ with $\omega \in \spc$ belong to $E_N$, with appropriately chosen $\{M_N\}_{N\geq1}$ and $\{\gamma_N\}_{N\geq1}$.

\begin{lemma}
\lbl{lm:tight1}
Suppose that there exists $\blv^0 \in \overline{\spv}^\ity$ such that for all $\omega \in \spc$
\beq
\lt\|\buv\lt(\omega,0\rt) - \blv^0\rt\|_w \leq \tilde{M}_N,
\eeq
for some $\tilde{M}_N \downarrow 0$. Then for all $\omega \in \spc$ and $i \in \zp$, there exist $L>0$ and sequences $M_N \downarrow 0$ and $\gm_N \downarrow 0$ such that 
\beq
\bux_i \lt(\omega,\cdot\rt) \in E_N,
\eeq
where $E_N$ is defined as in Eq.~\eqref{eq:EN}.
\end{lemma}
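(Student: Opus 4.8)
The plan is to establish two separate bounds: a uniform bound on the initial displacement $|\bux_i(\omega,0)-x_i^0|$, and an approximate-Lipschitz bound $|\bux_i(\omega,a)-\bux_i(\omega,b)| \leq L|a-b| + \gamma_N$ for all $a,b \in [0,T]$, with $L$ independent of $\omega$ and $i$, and $\gamma_N \downarrow 0$. The initial condition $x_i^0$ is taken to be $\blv^0_i$ for the $\buv$ component and $0$ for the $\bua$, $\bul$, $\buc$ components. For the $\buv$ component, the bound $|\buv_i(\omega,0)-\blv^0_i| \leq M_N$ follows because $\|\buv(\omega,0)-\blv^0\|_w \leq \tilde M_N$ implies $|\buv_i(\omega,0)-\blv^0_i|^2 \leq 2^i \tilde M_N^2$; this is not uniform in $i$, so one instead observes that, since $\bus_0 = 1$ and the queue lengths are nonnegative and ordered, $|\buv_i(\omega,0)| \leq \buv_1(\omega,0) \leq \blv^0_1 + \tilde M_N \sqrt 2$, giving a uniform-in-$i$ bound of the desired form after absorbing constants; for $\bua,\bul,\buc$ the initial value is exactly $0$ by construction, so $M_N$ can be taken as $0$ for those coordinates.

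The heart of the proof is the approximate-Lipschitz property. The key structural fact, already flagged in the proof outline of Proposition \ref{prop:smooth}, is that each of $\bua$, $\bul$, $\buc$ is nondecreasing and the total number of jumps in any of them over $[0,T]$ is at most the number of jumps of $W^N$ over $[0,T]$, namely $N W^N(\omega,T)$, with each jump of size $\frac{1}{N}$. Hence for any $a < b$ in $[0,T]$,
\beq
0 \leq \bua_i(\omega,b)-\bua_i(\omega,a) \leq W^N(\omega,b)-W^N(\omega,a),
\eeq
and the same holds for $\bul_i$ and $\buc_i$. Since $\omega \in \spc$, Lemma \ref{lm:nice1} gives $\sup_{t \in [0,T]}|W^N(\omega,t)-(1+\lambda)t| \to 0$; writing $\gamma_N' \bydef \sup_{t\in[0,T]}|W^N(\omega,t)-(1+\lambda)t|$ (which $\downarrow 0$ along the relevant set), we get $\bua_i(\omega,b)-\bua_i(\omega,a) \leq (1+\lambda)|b-a| + 2\gamma_N'$, and likewise for $\bul_i,\buc_i$. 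Then from the decomposition $\buv_i = \buv_i(0) + \bua_i - \bul_i - \buc_i$ (Eq.\ \eqref{eq:decomp}) and the triangle inequality, $|\buv_i(\omega,b)-\buv_i(\omega,a)| \leq 3(1+\lambda)|b-a| + 6\gamma_N'$. So one may take $L = 3(1+\lambda)$ (a universal constant, independent of $\omega$, $i$, $T$) and $\gamma_N = 6\gamma_N'$; note $\gamma_N'$ depends on $\omega$ only through the tail of the sequence $W^N(\omega,\cdot)$, but the convergence $\gamma_N' \downarrow 0$ holds for every $\omega \in \spc$, which is all that is required since the statement is quantified pointwise over $\omega \in \spc$. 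Finally one sets $M_N = \max\{\tilde M_N\sqrt 2, 0\}$ (or any common bound dominating the initial displacements across all four families of coordinates) so that every coordinate of $\bux$ lies in the corresponding $E_N$ of Eq.\ \eqref{eq:EN}.

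The one point requiring a little care — and the only place I expect friction — is the treatment of the $\buv$ coordinate's initial value: the weighted norm $\|\cdot\|_w$ does not control individual coordinates uniformly in $i$, so one cannot directly read off a uniform $M_N$ from $\tilde M_N$ coordinate by coordinate. The fix, as above, is to bound $|\buv_i|$ by $\buv_1$ using monotonicity of the underlying queue-length tail counts, and to bound $|\blv^0_i|$ similarly, so that the displacement $|\buv_i(\omega,0)-\blv^0_i|$ is controlled by a quantity involving $\buv_1(\omega,0)$ and $\blv^0_1$, hence by $\blv^0_1 + \tilde M_N\sqrt 2$; but note $E_N$ only requires $|x(0)-x^0|\leq M_N$ with $M_N \downarrow 0$, so one actually wants the sharper coordinate-wise statement $|\buv_i(\omega,0)-\blv^0_i|\leq 2^{i/2}\tilde M_N$, which does go to $0$ for each fixed $i$ — and indeed Lemma \ref{lm:tight1} is applied coordinate-by-coordinate in the proof of Proposition \ref{prop:smooth}, so a per-$i$ sequence $M_N(i) = 2^{i/2}\tilde M_N \downarrow 0$ suffices. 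I would state it that way: for each fixed $i$, $M_N \bydef 2^{i/2}\tilde M_N \downarrow 0$ works for the $\buv_i$ coordinate and $M_N \bydef 0$ for the others, while the approximate-Lipschitz constant $L$ and sequence $\gamma_N$ are genuinely uniform in $i$ as shown above. This completes the verification that $\bux_i(\omega,\cdot) \in E_N$.
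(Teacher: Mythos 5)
Your proof is correct and takes the same route as the paper: bound the increments of each coordinate of $\bua$, $\bul$, $\buc$ by those of $W^N$, pass to the approximate-Lipschitz bound with $L = 3(1+\lambda)$ via Lemma~\ref{lm:nice1}, and handle the initial condition coordinate-by-coordinate using $|\buv_i(\omega,0)-\blv^0_i|\leq 2^{i/2}\tilde{M}_N$. The paper sets $M_N = 2^i\tilde{M}_N$, a looser but still vanishing constant, and your mid-paragraph detour through a uniform-in-$i$ bound on $|\buv_i(\omega,0)|$ was unnecessary, as you correctly recognize by the end of the argument.
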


\begin{proof}
Intuitively, the lemma follows from the uniform convergence of scaled sample paths of the event process $W^N\lt(\omega,t\rt)$ to $\lt(1+\lambda\rt)t$ (Lemma \ref{lm:nice1}), that jumps along any coordinate of the sample path as a magnitude of $\frac{1}{N}$, and that all coordinates of $\bux$ are dominated by $W$ in terms of the total number of jumps. 

Based on the previous coupling construction, each coordinate of $\bua, \bul$ and $\buc$ are monotonically non-decreasing, with a positive jump at time $t$ of magnitude $\frac{1}{N}$ only if there is a jump of same size at time $t$ in $W\lt(\omega,\cdot\rt)$. Hence for all $i \geq 1$,
\beq
\lt|\bua_i\lt(\omega, a\rt)-\bua_i\lt(\omega,b\rt)\rt| \leq \lt|W^N\lt(\omega,a\rt) - W^N\lt(\omega,b\rt)\rt|, \quad \forall a,b \in [0,T].
\eeq
The same inequalities hold for $\bul$ and $\buc$. Since by construction,
\beq
\lbl{eq:vreq}
\buv_i\lt(\omega,t\rt)=\buv_i\lt(\omega,0\rt)+\bua_i\lt(\omega,t\rt)-\bul_i\lt(\omega,t\rt)-\buc_i\lt(\omega,t\rt), \quad \forall i \geq 1,
\eeq 
we have that for all $i \geq 1$,
\beq
\lbl{lb:xdom}
\lt|\bux_i\lt(\omega, a\rt)-\bux_i\lt(\omega,b\rt)\rt| \leq 3\lt|W^N\lt(\omega,a\rt) - W^N\lt(\omega,b\rt)\rt|, \quad \forall a,b \in [0,T].
\eeq
Since $\omega \in \spc$, $\buw\lt(\omega,\cdot\rt)$ converges uniformly to $\lt(\lambda+1\rt)t$ on [0,T] by Lemma \ref{lm:nice1}. This implies that there exists a sequence $\tilde{\gm}_N \downarrow 0$ such that for all $N \geq 1$,
\beq
\lt|W^N\lt(\omega,a\rt)-W^N\lt(\omega,b\rt)\rt| \leq \lt(\lambda+1\rt)|a-b| +\tilde{\gm}_N, \quad \forall a,b \in [0,T],
\eeq
which, in light of Eq.~\eqref{lb:xdom}, implies
\beq
\lt|\bux_i\lt(\omega,a\rt)-\bux_i\lt(\omega,b\rt)\rt| \leq 3\lt(\lambda+1\rt)|a-b| +3\tilde{\gm}_N, \quad \forall a,b \in [0,T], i \geq 1.
\eeq
Finally, note that all coordinates of $\bux\lt(\omega,0\rt)$ except for $\buv\lt(\omega,0\rt)$ are equal to $0$ by definition. Proof is completed by setting $M_N=2^i\tilde{M}_N$, $\gamma_N=3\tilde{\gamma}_N$, and $L=3\lt(\lambda+1\rt)$.
\end{proof}

We are now ready to prove Proposition \ref{prop:smooth}. 

\begin{proof} {\bf (Proposition \ref{prop:smooth})}
Let us first summarize the key results we have so far:
\benum
\item (Lemma \ref{lm:Eccom}) $E_c$ is a set of $L$-Lipschitz continuous functions with bounded values at $0$, and it is compact and closed. 
\item (Lemma \ref{lm:EnEc}) $\mcal{E}=\lt\{E_N\rt\}_{N\geq1}$, a sequence of sets of $\gm_N$-approximate $L$-Lipschitz-continuous functions with convergent initial values, is asymptotically close $E_c$.
\item (Lemma \ref{lm:tight1}) For all $\omega \in \spc$, $\bux\lt(\omega,\cdot\rt)$ is in $\mcal{E}$.
\eenum
The rest is straightforward: Pick any $\omega \in \spc$. By the above statements, for any $i \in \zp$ one can find a subsequence $\lt\{\mbf{X}^{N_j}\lt(\omega,\cdot\rt)\rt\}_{j=1}^\infty$ and a sequence $\lt\{y_j\rt\}_{j=1}^\infty \subset E_c$ such that
\beq
d\lt(\mbf{X}_i^{N_j}\lt(\omega,\cdot\rt), y_j\rt) \longrightarrow 0, \mbox{ as } j \rar \infty.
\eeq
Since by Lemma \ref{lm:Eccom} (statement $1$ above), $E_c$ is compact and closed, $\lt\{y_j\rt\}_{j=1}^\infty$ has a limit point $y^*$ in $E_c$, which implies that a further subsequence of $\lt\{\mbf{X}_i^{N_j}\lt(\omega,\cdot\rt)\rt\}_{i=1}^\infty$ converges to $y^*$. Moreover, since $\buv(\omega,0) \rar \blv^0$ and $A^N(\omega,0)=L^N(\omega,0)=C^N(\omega,0)=0$, $y^*(0)$ is unique. This proves the existence of a $L$-Lipschitz-continuous limit point $y^*(\cdot)$ at any single coordinate $i$ of $\bux(\cdot)$. 

With the coordinate-wise limit points, we then use a diagonal argument to construct the limit points of $\bux$ in the $D^\zp[0,T]$ space. Now let $v_1(t)$ be any $L$-Lipschitz-continuous limit point of $\buv_1$, so that a subsequence $\mbf{V}^{N^1_j}_1(\omega,\cdot) \rar v_1$ as $j \rar \ity$ in $d(\cdot,\cdot)$. Then proceed recursively by letting $v_{i+1}(t)$ be a limit point of a subsequence of $\lt\{\mbf{V}^{N^{i}_j}_{i+1}(\omega,\cdot)\rt\}_{j=1}^\infty$, where $\{N^i_j\}_{j=1}^\infty$ are the indices for the $i$th subsequence. Finally, define
\beq
\blv_i = v_i, \quad \forall i \in \zp.
\eeq
we claim that $\blv$ is indeed a limit point of $\buv$ in the $d^\zp(\cdot,\cdot)$ norm. To see this, first note that for all $N$,
\beq
\buv_1(\omega,t)\geq\buv_i(\omega,t)\geq0, \quad \forall i\geq1, t\in [0,T].
\eeq
Since we constructed the limit point $\blv$ by repeatedly selecting nested subsequences, this property extends to $\blv$, i.e.,
\beq
\blv_1(t)\geq\blv_i(t)\geq0, \quad \forall i\geq1, t\in [0,T].
\eeq
Since $\blv_1(0) = \blv_1^0$ and $\blv_1(t)$ is $L$-Lipschitz-continuous, we have that
\beq
\lbl{eq:bdv2}
\sup_{t \in [0,T]} \lt|\blv_i(t)\rt| \leq \sup_{t \in [0,T]} \lt|\blv_1(t)\rt| \leq \lt|\blv^0_1\rt| + LT, \quad \forall i\in \zp.
\eeq
Set $N_1=1$, and let
\beq
\lbl{eq:defNk2}
N_k = \min\lt\{N \geq N_{k-1}: \sup_{1\leq i \leq k} d(\mbf{V}^{N}_i(\omega,\cdot),\blv_i) \leq \frac{1}{k}\rt\}, \quad \forall k \geq 2.
\eeq
Note that the construction of $\blv$ implies $N_k$ is well defined and finite for all $k$. From Eq.~\eqref{eq:bdv2} and Eq.~\eqref{eq:defNk2}, we have for all $k \geq 2$
\beqn
d^\zp\lt(\mbf{V}^{N_k}(\omega,\cdot), \blv\rt) &=& \sup_{t\in[0,T]}\sqrt{\sum_{i=0}^\infty\frac{\lt|\mbf{V}^{N_k}_i(\omega,t)-\blv_i(t)\rt|}{2^i}} \nnb \\
&\leq& \frac{1}{k}+\sqrt{\lt(\lt|\blv^0_1\rt| + LT\rt)\sum_{i=k+1}^\infty\frac{1}{2^i}} \nnb\\
&=& \frac{1}{k}+\frac{1}{2^{k/2}} \lt(\lt|\blv^0_1\rt| + LT\rt)
\eeqn
Hence $d^\zp\lt(\mbf{V}^{N_k}(\omega,\cdot), \blv\rt)\rar 0$ as $k\rar \infty$. The existence of the limit points $\bla(t),\bll(t)$ and $\blc(t)$ can be established by an identical argument. This completes the proof.
\end{proof}

\section{Proof of Proposition \ref{prop:vtinght}}
\lbl{app:domproof}


\begin{proof} {\bf (Proposition \ref{prop:vtinght})}
Fix $N>0$ and $0<p\leq1$. For the rest of the proof, denote by $\mbf{V}^{N,p_0}(t)$ the sample path of $\buv(t)$ when $p=p_0$. Let $\{\mbf{V}^{N,p}[n]\}_{n \geq 0}$ be the discrete-time embedded Markov chain for $\mbf{V}^{N,p}(t)$, defined as 
\beq
\mbf{V}^{N,p}[n]\bydef\mbf{V}^{N,p}(t_n), \quad n \geq 0
\eeq
where $t_n, n \geq 1$ is defined previously as the time for the $n$th event taking place in the system (i.e., the $n$th jump in $W^N(\cdot)$), with the convention that $t_0=0$. 

\begin{defn} {\bf (Stochastic Dominance)}
Let $\{X[n]\}_{n\geq0}$ and $\{Y[n]\}_{n\geq0}$ be two discrete-time stochastic processes taking values in $\R^{\zp}$. We say that $\{X[n]\}_{n\geq0}$ is stochastically dominated by $\{Y[n]\}_{n\geq0}$, denoted by $\{X[n]\}_{n\geq0} \preceq_{st} \{Y[n]\}_{n\geq0}$, if there exist random processes $\{X'[n]\}_{n\geq0}$ and $\{Y'[n]\}_{n\geq0}$ defined on a common probability space $(\Omega,\mcal{F},\pb)$, such that
\benum
\item $X'$ and $Y'$ have the same distributions as $X$ and $Y$, respectively.
\item $X'[n]\leq Y'[n], \, \forall n \geq 0$, $\pb-$almost surely.
\eenum
\end{defn}

We have the following lemma:

\begin{lemma}
\lbl{lm:stdom}
Fix any $p\in(0,1]$. If $\mbf{V}^{N,p}[0]=\mbf{V}^{N,0}[0]$, then $\{\mbf{V}_1^{N,p}[n]\}_{n\geq0} \preceq_{st} \{\mbf{V}_1^{N,0}[n]\}_{n\geq0}$.
\end{lemma}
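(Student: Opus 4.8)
The plan is to prove the lemma by constructing, on a single probability space, a coupling of the two embedded chains for which $\mathbf{V}_1^{N,p}[n] \le \mathbf{V}_1^{N,0}[n]$ holds pathwise for all $n$; this immediately yields the asserted $\preceq_{st}$. Both chains start from the common initial configuration and are driven by one stream of event slots, where slot $k$ is an arrival with probability $\lambda/(1+\lambda)$, a \emph{distributed} service token with probability $(1-p)/(1+\lambda)$, and a \emph{central} service token with probability $p/(1+\lambda)$. I would couple the within-slot randomness as follows. On an arrival slot, draw a rank $R_k$ uniform on $\{1,\dots,N\}$ and, in \emph{each} system, add one task to its $R_k$-th longest queue; on a distributed-token slot, draw $R_k'$ uniform and, in each system, remove one task from its $R_k'$-th longest queue if that queue is nonempty (otherwise do nothing); on a central-token slot, the $p$-system removes one task from a longest queue if it is nonempty, while the $0$-system independently draws a uniform rank $R_k''$ and removes one task from its $R_k''$-th longest queue if nonempty. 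Under this coupling the $0$-system is a faithful realization of $N$ independent $M/M/1$ queues — for it, the splitting of the total service rate into ``distributed'' and ``central'' slots is immaterial, since in either case a uniformly chosen station is served when nonempty — so its marginal law is the one required by Proposition \ref{prop:vtinght}.

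The key device is an \textbf{inductive invariant}: for every $n$ and every $i \ge 1$, $\mathbf{V}_i^{N,p}[n] \le \mathbf{V}_i^{N,0}[n]$. Equivalently, writing $g_n^{x}(t) := \sum_{m=1}^N (Q_m^{x}[n]-t)^+$ for $x \in \{p,0\}$, the invariant says $g_n^{p}(t) \le g_n^{0}(t)$ for all $t \ge 0$; this is exactly weak sub-majorization of the $p$-configuration by the $0$-configuration (Hardy--Littlewood--P\'olya), and it holds with equality at $n=0$. Since $\mathbf{V}_1^{N,x}[n]$ is the coordinate with $i=1$, once the invariant is carried through all $n$ we obtain $\mathbf{V}_1^{N,p}[n] \le \mathbf{V}_1^{N,0}[n]$ pathwise on the coupled space, and the lemma follows.

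The substance of the proof is verifying that each of the three event types preserves the invariant, and the arrival and distributed-token steps are where I expect the \textbf{main obstacle}. On such a step the update of $g$ is $g \mapsto g + \mathbf{1}[\,t \le Q^{(R)}\,]$ (arrival) or $g \mapsto g - \mathbf{1}[\,t < Q^{(R')}\,]$ (distributed token), where $Q^{(R)}$ is the length of the acted-upon queue; the inequality $g^{p} \le g^{0}$ can be endangered only at levels $t$ where the $p$-system's rank-$R$ queue is strictly deeper (arrivals) — respectively strictly shallower (removals) — than the $0$-system's. I would show that on precisely those levels there is already a strict gap $g_n^{p}(t) \le g_n^{0}(t) - 1$: this follows from integer-valuedness of the $g$'s (so $g_n^{p}(t) < g_n^{0}(t)$ forces a gap of at least $1$) together with the discrete identity $g_n(t) - g_n(t+1) = \#\{m : Q_m[n] \ge t+1\}$, which lets an equality $g_n^{p}(t) = g_n^{0}(t)$ be propagated into a violation of the invariant at an adjacent level — a contradiction. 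In short, the real content is the elementary fact that same-rank perturbations preserve weak sub-majorization, re-derived in a form adapted to integer configurations. The central-token step is comparatively easy: the only worrisome case is when the $0$-system serves a queue deeper than anything present in the $p$-system, but then $g_n^{p}$ already vanishes at those levels while $g_n^{0}$ is positive there, so the monotone decrease of $g_n^{0}$ cannot violate $g^{p} \le g^{0}$; at lower levels either both functions decrease by one or only $g^{p}$ does. An induction on $n$ then finishes the lemma, and this dominance is what subsequently lets one transfer positive recurrence and the uniform tightness of $\{\pi^{N,p}\}$ from the explicitly solvable $p=0$ system.
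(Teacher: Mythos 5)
Your proof is correct, but it takes a genuinely different route from the paper's. Where the paper reinterprets the $p$-system as longest-connected-queue-first scheduling over a randomly varying channel (fully connected with probability $p$, a single random station otherwise) and then invokes Theorem~3 of Tassiulas and Ephremides (1993) as a black box, you build a rank-based coupling and prove the weak sub-majorization invariant $\mathbf{V}_i^{N,p}[n] \le \mathbf{V}_i^{N,0}[n]$ from scratch by induction on $n$. The substance of your argument --- that for an arrival at common rank $R$, any level $t$ in the dangerous window $(Q^{0,(R)}, Q^{p,(R)}]$ already has $g_n^p(t)\le g_n^0(t)-1$, via $g_n^p(t-1)-g_n^p(t)=\#\{m:Q_m^p\ge t\}\ge R$ while $\#\{m:Q_m^0\ge t\}\le R-1$ (and the analogous computation at $t+1$ for removals) --- is sound, and the central-token case is handled correctly by noting $g_n^p$ vanishes above the top of the $p$-configuration. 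Two small remarks: the $p=0$ system is, strictly speaking, the service-token model rather than $N$ literal $M/M/1$ queues, though the queue-length marginals coincide so nothing is affected; and your coupling by \emph{rank} rather than by station identity is actually what makes the inductive step clean, since the paper's same-station coupling would not directly admit the majorization argument (which is presumably why the paper reaches for the Tassiulas--Ephremides machinery instead). Your version is self-contained and arguably more transparent; the paper's is shorter but leans on an external theorem and a channel reinterpretation.
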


\begin{proof}
We will first interpret the system with $p>0$ as that of an optimal scheduling policy with a time-varying channel. The result will then follow from the classical result in Theorem 3 in \cite{TASEPH93}, with a slightly modified arrival assumption, but almost identical proof steps. Recall the Secondary Motivation described in Section \ref{sec:mot2}. Here we will use a similar but modified interpretation: instead of thinking of the central server as deciding between serving a most-loaded station versus servicing a random station, imagine that the central server always serves a most-loaded station among the ones that are \emph{connected} to it. The channel between the central server and local stations, represented by a set of connected stations, evolves according to the following dynamics and is independent across different time slots:
\benum
\item With probability $p$, all $N$ stations are connected to the central server.
\item Otherwise, only one station, chosen uniformly at random from the $N$ stations, is connected to the central server.
\eenum
It is easy to see that, under the above channel dynamics, a system in which a central server always serves a most-loaded stations among \emph{connected} stations will produce the same distribution for $\mbf{V}^{N,p}[n]$ as our original system. For the case $p=0$, it is equivalent to scheduling tasks under the same channel condition just described, but with a server that servers a station chosen \emph{uniformly at random} among all connected stations. The advantage of the above interpretation is that it allows us to treat $\mbf{V}^{N,p}[n]$ and $\mbf{V}^{N,0}[n]$ as the resulting aggregate queue length processes by applying two \emph{different} scheduling policies to the \emph{same} arrival, token generation, and channel processes. In particular, $\mbf{V}^{N,p}_1[n]$ corresponds to the resulted normalized total queue length process ($\mbf{V}^{N,p}\bydef\frac{1}{N}\sum_{i=1}^N Q_i(t_n)$), when a longest-queue-first policy is applied, and $\mbf{V}^{N,0}_1[n]$ corresponds to the normalized total queue length process, when a fully random scheduling policy is applied. Theorem 3 of \cite{TASEPH93} states that when the arrival and channel processes are symmetric with respect to the identities of stations, the total queue length process under a longest-queue-first policy is stochastically dominated by all other causal policies (i.e., policies that use only information from the past). Since the arrival and channel processes are symmetric in our case, and a random scheduling policy falls under the category of causal policies, the statement of Theorem 3 of \cite{TASEPH93} implies the validity of our claim. 

There is, however, a minor difference in the assumptions of Theorem 3 of \cite{TASEPH93} and our setup that we note here. In \cite{TASEPH93}, it is possible that both arrivals and service occur during the same slot, while in our case, each event corresponds either to the an arrival to a queue or the generation of a service token, but not both. This technical difference can be overcome by discussing separately, whether the current slot corresponds to an arrival or a service. The structure of the proof for Theorem 3 in \cite{TASEPH93} remains unchanged after this modification, and is hence not repeated here.
\end{proof}

Using the discrete-time stochastic dominance result in Lemma \ref{lm:stdom}, we can now establish a similar dominance for the continuous-time processes $\mbf{V}_1^{N,p}(t)$ and $\mbf{V}_1^{N,0}(t)$. Since $\{\mbf{V}_1^{N,p}[n]\}_{n\geq0} \preceq_{st} \{\mbf{V}_1^{N,0}[n]\}_{n\geq0}$, by the definition of stochastic dominance, we can construct $\{\mbf{V}_1^{N,p}[n]\}_{n\geq0}$ and $\{\mbf{V}_1^{N,0}[n]\}_{n\geq0}$ on a common probability space $(\Omega_d, \mcal{F}_d, \pb_d)$, such that $\mbf{V}_1^{N,p}[n] \leq \mbf{V}_1^{N,0}[n]$ for all $n\geq0$, $\pb_d$-almost surely. Recall from previous sections that $W^N(t)$, the $N$th event process, is a Poisson jump process with rate $N(1+\lambda)$, defined on the probability space $(\Omega_W,\mcal{F}_W,\pb_W)$. Let $(\Omega_c,\mcal{F}_c,\pb_c)$ be the product space of $(\Omega_d, \mcal{F}_d, \pb_d)$ and $(\Omega_W,\mcal{F}_W,\pb_W)$. Define two continuous-time random processes on $(\Omega_c,\mcal{F}_c,\pb_c)$ by
\beqn
\lbl{eq:asdom}
& \hat{\mbf{V}}^{N,p}(t) \bydef \mbf{V}^{N,p}\lt[ N W^N(t) \rt], \\
& \mbox{and }\hat{\mbf{V}}^{N,0}(t) \bydef \mbf{V}^{N,0}\lt[ NW^N(t) \rt].
\eeqn
Note that $NW^N(t)$ is a Poisson jump process, and hence $NW^N(t) \in \zp$ for all $0 \leq t < \ity$ $\pb_c$-almost surely. Since $\mbf{V}_1^{N,p}[n] \leq \mbf{V}_1^{N,0}[n]$ for all $n$ almost surely by construction, this implies
\beq
\hat{\mbf{V}}^{N,p}_1(t) \leq \hat{\mbf{V}}^{N,0}_1(t), \, \forall t \geq 0, \quad \mbox{almost surely.}
\eeq
Recall the processes $\{\mbf{V}^{N,p}[n]\}_{n\geq0}$ and $\{\mbf{V}^{N,0}[n]\}_{n\geq0}$ were defined to be the embedded discrete-time processes for $\mbf{V}^{N,p}(t)$ and $\mbf{V}^{N,0}(t)$. It is also easy to check that the continuous-time Markov process $\mbf{V}^{N,p}(t)$ is \emph{uniform} for all $N$ and $p$ (i.e., the rate until next event is uniform at all states). Hence, the processes $\hat{\mbf{V}}^{N,p}(t)$ and $\hat{\mbf{V}}^{N,0}(t)$ constructed above have the \emph{same distributions} as the original processes ${\mbf{V}}^{N,p}(t)$ and ${\mbf{V}}^{N,0}(t)$, respectively. Therefore, as we work with the processes $\hat{\mbf{V}}^{N,p}(t)$ and $\hat{\mbf{V}}^{N,0}(t)$ in the rest of the proof, it is understood that any statement regarding the distributions of $\hat{\mbf{V}}^{N,p}(t)$ and $\hat{\mbf{V}}^{N,0}(t)$ automatically holds for ${\mbf{V}}^{N,p}(t)$ and ${\mbf{V}}^{N,0}(t)$, and vice versa.

We first look at the behavior of $\hat{\mbf{V}}^{N,0}(t)$. When $p=0$, only local service tokens are generated. Hence, it is easy to see that the system degenerates into $N$ individual $M/M/1$ queues with independent and identical statistics for arrivals and service token generation. In particular, for any station $i$, the arrival follows an Poisson process of rate $\lambda$ and the generation of service tokens follows a Poisson process of rate $1$. Since $\lambda<1$, it is not difficult to verify that the process $\hat{\mbf{V}}^{N,0}(t)$ is positive recurrent, and it admits a unique steady-state distribution, denoted by $\pi^{N,0}$, which satisfies:
\beq
\lbl{eq:p0dis}
\pi^{N,0}\lt(\mbf{V}_1 \leq x\rt)  = \pb\lt(\frac{1}{N}\sum_{i=1}^N E_i \leq x\rt), \quad \forall x \in \R,
\eeq
where $\{E_i\}_{i=1}^N$ is a set of i.i.d. geometrically distributed random variables, with
\beq
\pb(E_i=k) = \lambda^k(1-\lambda), \quad \forall k \in \zp,
\eeq

We now argue that $\hat{\mbf{V}}^{N,p}(t)$ is also positive-recurrent for all $p\in (0,1]$. Let 
\beq
\tau^p \bydef \inf\lt\{t > 0:\hat{\mbf{V}}_1^{N,p}(t)=0, \mbox{ and } \hat{\mbf{V}}_1^{N,p}(s)\neq0 \mbox{ for some } 0 < s < t\rt\}
\eeq
In other words, if the system starts empty (i.e., $\hat{\mbf{V}}_1^{N,p}(0)=0$), $\tau^p$ is the first time that the system becomes empty again after having visited some non-empty state. Since the process $\hat{\mbf{V}}_1^{N,p}(t)$ can be easily verified to be irreducible (i.e., all states communicate) for all $p\in (0,1]$, $\hat{\mbf{V}}_1^{N,p}(t)$ is positive-recurrent if and only if
\beq
\mathbb{E}\lt[\tau^p \middle| \mbf{V}_1^{N,p}(0)=0 \rt] < \ity.
\eeq

Since $\hat{\mbf{V}}^{N,p}(t) \leq \hat{\mbf{V}}^{N,0}(t), \, \forall t \geq 0 \,$ almost surely, it implies that $\tau^p \leq \tau^0$ almost surely. From the positive recurrence of $\hat{\mbf{V}}^{N,0}(t)$, we have
\beq
\mathbb{E}\lt[\tau^p \middle|\hat{\mbf{V}}_1^{N,p}(0)=0 \rt] \leq \mathbb{E}\lt[\tau^0 \middle|\hat{\mbf{V}}_1^{N,0}(0)=0 \rt] < \ity.
\eeq
This establishes that $\hat{\mbf{V}}^{N,p}(t)$ is positive-recurrent for all $p \in (0,1]$.

To complete the proof, we need the following standard result from the theory of Markov processes (see, e.g., \cite{JRN97}).
\begin{lemma}
\lbl{lm:clasmarkov}
If $X(t)$ is an irreducible and positive recurrent Markov process taking values in a countable set $\mcal{I}$, then there exists a unique steady-state distribution $\pi$ such that for any initial distribution of $X(0)$,
\beq
\lim_{t\rar\ity}\pb\lt(X(t)=i\rt) = \pi(i), \quad \forall i \in \mcal{I}.
\eeq
\end{lemma}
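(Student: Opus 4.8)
\emph{Proof proposal.} Lemma~\ref{lm:clasmarkov} is a classical ergodic theorem for continuous-time Markov chains, and the plan is to deduce it from the well-known discrete-time version. This is especially clean in our setting because every process $X(t)$ to which we apply it (namely $\mathbf{V}^{N,p}(t)$) is \emph{uniform}: all holding rates equal a common finite constant $\Lambda$. The first step is therefore uniformization --- write $X(t)=Z_{N(t)}$, where $N(\cdot)$ is a rate-$\Lambda$ Poisson process independent of the embedded discrete-time chain $Z$ with transition matrix $P=I+Q/\Lambda$ ($Q$ the generator). Here $Z$ inherits irreducibility from $X(t)$ and is aperiodic since $P_{ii}=1-q_i/\Lambda>0$; moreover positive recurrence of $X(t)$ transfers to $Z$ (the return times to a fixed state differ by a factor of finite mean), and in particular $X(t)$ is non-explosive.

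Next I would invoke the discrete-time ergodic theorem for $Z$: there is a unique stationary distribution $\pi$ (for instance $\pi(i)\propto 1/\mathbb{E}_i[\text{return time to }i]$), and $\mathbb{P}(Z_n=i)\to\pi(i)$ as $n\to\infty$ for every initial law of $Z_0$. Transferring back,
\[
\mathbb{P}(X(t)=i)=\sum_{n\ge 0}e^{-\Lambda t}\frac{(\Lambda t)^n}{n!}\,\mathbb{P}(Z_n=i)\;\xrightarrow[t\to\infty]{}\;\pi(i),
\]
by bounded convergence, since each summand is at most $1$ and $N(t)\to\infty$ almost surely. The same $\pi$ is then stationary for $X(t)$ ($\pi P_t=\pi$ for all $t$), and uniqueness for $X(t)$ follows because any stationary $\pi'$ obeys $\pi'(i)=\sum_j\pi'(j)P_t(j,i)\to\pi(i)$, again by dominated convergence. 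Finally, the statement for an arbitrary initial distribution of $X(0)$ follows by conditioning on $X(0)$ and applying dominated convergence to the bounded quantities $\mathbb{P}(X(t)=i\mid X(0)=j)$.

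I do not expect a serious obstacle here; the only points needing care are the discrete-to-continuous passage --- non-explosion and the interchange of limit and sum displayed above --- and these are routine once uniformization is in place. (For a general positive-recurrent chain, which we do not need, uniformization would be replaced by a coupling argument on $\mathcal{I}\times\mathcal{I}$, which in continuous time carries no periodicity caveat.)
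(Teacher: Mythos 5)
The paper does not prove Lemma~\ref{lm:clasmarkov} at all; it simply cites it as a standard fact from the theory of Markov chains, pointing to Norris's textbook \cite{JRN97}. So there is no paper proof for your argument to diverge from; what you have supplied is a self-contained derivation, which is welcome. Your uniformization route is correct in substance and is the natural one here, since the chain $\mathbf{V}^{N,p}(t)$ indeed has a constant total event rate $N(1+\lambda)$ and so a direct uniformization is available. The passage to the limit $\mathbb{P}(X(t)=i)=\mathbb{E}\bigl[\mathbb{P}(Z_{N(t)}=i)\bigr]\to\pi(i)$ is also handled correctly: $N(t)\to\infty$ almost surely, $\mathbb{P}(Z_n=i)\to\pi(i)$, and the bounded-convergence step (together with the tail of the Poisson distribution concentrating on large $n$) closes it.

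One small point to tighten. You assert aperiodicity of $Z$ from $P_{ii}=1-q_i/\Lambda>0$, but this is not automatic if one takes $\Lambda=\sup_i q_i$ and the supremum is attained at some states; at those states $P_{ii}=0$. Two standard fixes: either take $\Lambda$ strictly larger than $\sup_i q_i$, so that $P_{ii}>0$ everywhere, or observe that aperiodicity is a communicating-class property, so it suffices that one state has a self-loop --- and in this model the empty state always does, since a central or local service token arriving to an empty system is wasted. Either remark makes the step airtight. The transfer of positive recurrence from $X$ to $Z$ (return times in discrete steps versus continuous time under i.i.d.\ $\mathrm{Exp}(\Lambda)$ holding times) is standard, as you say, and your parenthetical about replacing uniformization by a coupling argument for general non-uniform chains is the correct remark about how Norris's treatment covers the general case without an aperiodicity caveat.
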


By the positive recurrence of $\hat{\mbf{V}}^{N,p}(t)$ and Lemma \ref{lm:clasmarkov}, we have that $\hat{\mbf{V}}^{N,p}(t)$ converges in distribution to a unique steady-state distribution $\pi^{N,p}$ as $t\rar\ity$. Combining this with the dominance relation in Eq.~\eqref{eq:asdom}, we have that for any initial distribution of $\hat{\mbf{V}}^{N,p}(0)$,
\beqn
\pi^{N,p}(\overline{\spv}^M) &\bydef& \pi^{N,p}(\buv_1 \leq M)  \nnb \\
&=& \lim_{t \rar \ity} \pb\lt(\hat{\mbf{V}}^{N,p}(t) \leq M\rt) \quad \quad \mbox{(by Lemma \ref{lm:clasmarkov})} \nnb\\
&\geq& \lim_{t \rar \ity} \pb\lt(\hat{\mbf{V}}^{N,0}(t) \leq M\rt)  \quad \quad \mbox{(by Eq.\ \eqref{eq:asdom})}\nnb\\
&=& \pi^{N,0}(\buv_1 \leq M)  \quad \quad \quad \quad \; \; \mbox{(by Lemma \ref{lm:clasmarkov})} \nnb \\
&=& \pb\lt(\frac{1}{N}\sum_{i=1}^N E_i \leq M\rt)  \quad \; \; \; \;  \; \quad \mbox{(by Eq.\ \eqref{eq:p0dis})}
\eeqn
Since the $E_i$s are i.i.d. geometric random variables, by Markov's inequality,
\beqn
&&\pi^{N,p}(\overline{\spv}^M)\geq 1-\pb\lt(\frac{1}{N}\sum_{i=1}^N E_i \geq M\rt) \geq 1- \frac{\mathbb{E}(E_1)}{M} = 1- \frac{\lambda}{(1-\lambda)M},
\eeqn
for all $M > \mathbb{E}(E_1)=\frac{\lambda}{1-\lambda}$, which establishes the tightness of $\{\pi^{N,p}\}_{N=1}^\ity$. This completes the proof of Proposition \ref{prop:vtinght}.
\end{proof}

\chapter{Appendix: Simulation Setup}
\lbl{app:sim}
The simulation results shown in Figure~\ref{fig:heavytraffic} for a finite system with $100$ stations were obtained by simulating the embedded discrete-time Markov chain, $\{Q[n]\}_{n \in \N}$, where the vector $Q[n]\in \zp^{100}$ records the queue lengths of all 100 queues at time step $n$. Specifically, we start with $Q[1]=0$, and,  during each time step, one of the following takes place:
\benum
\item With probability $\frac{\lambda}{1+\lambda}$, a queue is chosen uniformly at random from all queues, and one new task is added to this queue. This corresponds to an arrival to the system.
\item With probability $\frac{1-p}{1+\lambda}$, a queue is chosen uniformly at random from all queues, and one task is removed from the queue if the queue is non-empty. If the chosen queue is empty, no change is made to the queue length vector. This corresponds to the generation of a local service token.
\item  With probability $\frac{p}{1+\lambda}$, a queue is chosen uniformly at random from the \emph{longest queues}, and one task is removed from the chosen queue if the queue is non-empty. If all queues are empty, no change is made to the queue length vector. This corresponds to the generation of a central service token.
\eenum
To make the connection between the above discrete-time Markov chain $Q[n]$ and the  continuous-time Markov process $Q(t)$ considered in this thesis, one can show that $Q(t)$ is uniformized and hence the steady-state distribution of $Q(t)$ coincides with that of the embedded discrete-time chain $Q[n]$. 

To measure the steady-state queue length distribution seen by a typical task, we sampled from the chain $Q[n]$ in the following fashion: $Q[n]$ was first run for a burn-in period of $1,000,000$ time steps, after which $500,000$ samples were collected with $20$ time steps between adjacent samples, where each sample recorded the current length of a queue chosen uniformly at random from all queues. Denote by $\mathbf{S}$ the set of all samples. The average queue length, as marked by the symbol ``$\times$'' in Figure~\ref{fig:heavytraffic}, was computed by taking the average over $\mathbf{S}$. The upper (UE) and lower (LE) ends of the $95\%$ confidence intervals were computed by:
\beqn
UE &\bydef& \min\{x\in \mathbf{S}: \mbox{ there are no more than } 2.5\% \nnb \\
& & \hspace{-10pt} \mbox{of the elements of } \mathbf{S} \mbox{ that are strictly greater than } x \}, \nnb \\
LE &\bydef& \max\{x\in \mathbf{S}: \mbox{ there are no more than } 2.5\% \nnb \\
& & \mbox{of the elements of } \mathbf{S} \mbox{ that are strictly less than } x \}. \nnb
\eeqn
Note that this notion of confidence interval is meant to capture the concentration of $\mathbf{S}$ around the mean, and is somewhat different from that used in the statistics literature for parameter estimation.

A separate version of the above experiment was run for each value of $\lambda$ marked in Figure~\ref{fig:heavytraffic}, while the the level of centralization $p$ was fixed at $0.05$ across all experiments.
\clearpage
\newpage

\begin{singlespace}

\end{singlespace}

\end{document}